\newcommand{\Real}{\mbox{${\mathbb R}$}}
\newcommand{\xx}{\mbox{\boldmath $x$}}
\newcommand{\yy}{\mbox{\boldmath $y$}}
\newcommand{\vv}{\mbox{\boldmath $v$}}
\newcommand{\rr}{\textbf{r}}
\newcommand{\qq}{\mbox{\boldmath $q$}}
\newcommand{\zz}{\mbox{\boldmath $z$}}
\newcommand{\pp}{\mbox{\boldmath $p$}}
\newcommand{\ww}{\mbox{\boldmath $w$}}
\newcommand{\D}{\mbox{$\Delta$}}
\newcommand{\ones}{\mbox{${\bf 1}$}}
\newcommand{\zeros}{\mbox{${\bf 0}$}}
\newcommand{\defeq}{\stackrel{\textup{def}}{=}}
\newcommand{\CL}{{\mathcal{L}}}
\newcommand{\ssp}{SSP}
\newtheorem{theorem}{Theorem}[section]
\newaliascnt{lemma}{theorem}
\newtheorem{lemma}[lemma]{Lemma}
\newtheorem{definition}[theorem]{Definition}
\newtheorem{claim}[theorem]{Claim}
\newtheorem{fact}[theorem]{Fact}
\newtheorem{remark}[theorem]{Remark}
\newcommand{\etal}{{\em et al.~}}
\def\Pr{\ensuremath{\mathrm{Pr}}}
\begin{document}

\title{The Complexity of Genetic Diversity
}  

\author{
Ruta Mehta
\\Georgia Institute of Technology\\ rmehta@cc.gatech.edu
\and Ioannis Panageas\\Georgia Institute of Technology\\ioannis@gatech.edu
\and Georgios Piliouras\\Singapore University of Technology and Design\\georgios@sutd.edu.sg
\and Sadra Yazdanbod\\Georgia Institute of Technology\\yazdanbod@gatech.edu }

\date{}

\maketitle

\thispagestyle{empty}

\begin{abstract}
A key question in biological systems is whether genetic diversity persists in the long run under evolutionary competition or whether a single  dominant genotype emerges. Classic work by Kalmus in 1945 \cite{Kalmus45} has established that even in simple diploid species (species with two chromosomes) diversity can be guaranteed as long as the heterozygote individuals enjoy a selective advantage. 
Despite the classic nature of the problem, as we move towards increasingly polymorphic traits (\textit{e.g.}, human blood types) predicting diversity and understanding its implications is still not fully understood. Our key contribution is to establish complexity theoretic hardness results implying that even in the textbook case of single locus diploid models predicting whether diversity survives or not given its fitness landscape is algorithmically intractable. 

We complement our results by establishing that under randomly chosen fitness landscapes diversity survives with significant probability. Our results are structurally robust along several dimensions (e.g., choice of parameter distribution, different definitions of stability/persistence, restriction to typical subclasses of fitness landscapes). Technically, our results exploit connections between game theory, nonlinear dynamical systems, complexity theory and biology and establish hardness results for predicting the evolution of a deterministic variant of the well known multiplicative weights update algorithm in symmetric coordination games which could be of independent interest.

\end{abstract}

\newpage
\clearpage
\setcounter{page}{1}

\section{Introduction}

The beauty and complexity of natural ecosystems have always been a source of fascination and inspiration for the human mind.
The exquisite biodiversity of Galapagos' ecosystem, in fact, inspired Darwin to propose his theory of natural selection as an explanatory 
mechanism for the origin and evolution of species. This revolutionary idea can be encapsulated in the catch phrase "survival of the fittest"\footnote{The phrase "survival of the fittest" was coined by Herbert Spencer.}.
 Natural selection promotes the survival of those genetic traits that provide to their carriers an evolutionary advantage.
 
A marker of the true genius behind any paradigm is its longevity and in this respect the success of natural selection is without precedent.
The formalized study of evolution, dating back to the work of Fisher, Haldane, and Wright in the beginning of the twentieth century, still and to 
a large extent focuses on simple, almost toy-like, alas concrete, models of this famous aphorism and the experimental and theoretical analysis of them. 
Despite this unassuming facade, if one digs a little deeper, and delves even into the most basic of questions, how/under what circumstances does natural selection  
 preserve genetic diversity, a puzzling web 
  of connections is revealed that couples biology, game theory, nonlinear dynamical systems, topology, and computationally complexity 
 into a tightly interwoven bundle. Our understanding of this web of relationships is still rather fragmented with the classic results in the area focusing 
 on isolated facets of this space.
 
Arguably, the most influential result in the area of mathematical biology is Fisher's fundamental theorem of natural selection (1930)~\cite{Fisher30}.
It states that the rate of increase in fitness of any organism at any time is equal to its genetic variance in fitness at that time.
In the classical model of population genetics (Fisher-Wright-Haldane, discrete or continuous version) of single loci multi-allele diploid models it implies that
the average fitness of the species populations is always strictly increasing unless we are at an equilibrium. In fact, convergence to equilibrium is pointwise
even if there exist continuum of equilibria (See \cite{akin} and references therein). From a dynamical systems perspective, this a rather strong characterization, since it establishes that the average fitness acts as
a Lyapunov function for the system and that every trajectory converges to an equilibrium. 

Besides the purely dynamical systems interpretation,  an alternative, more palpable, game theoretic interpretation of 
these genetic systems is possible. Specifically, these systems can be interpreted as symmetric coordination/partnership two-agent games\footnote{A coordination/partnership game is a game where at each outcome all agents receive the same utility.} where each agent applies a (discrete-time version of) replicator dynamics\footnote{Replicator dynamics (as well as their discrete variants) are close analogues to the well known multiplicative updates (MWUA) family of dynamics \cite{Kleinberg09multiplicativeupdates}.} and where both agents start from the same mixed initial strategy.  
The analogies are as follows: The proportions of different alleles in the initial population encode the starting common mixed strategy. The random mating in the next time epoch is captured as two symmetric agents applying the same mixed strategy (allele mixture) against each other. Each pairing of two alleles/strategies defines an individual with a specific fitness. This is the common utility of each of the two agents at the specific outcome. Therefore, the game between the two agents is a coordination/partnership game and furthermore since both Aa and aA allele combinations encode the same individual the payoff/fitness matrix is also symmetric. Finally, the updates of the mixed strategies/allele frequencies promote the survival of the fittest strategies as captured exactly by replicator/MWUA dynamics.

 In game theoretic language, the fundamental theorem of natural selection implies that the social welfare of the game acts as potential for the game dynamics. The dynamics converge to mixed strategies such that the expected utility of each strategy played with positive probability are equal to each other, a superset of the symmetric Nash equilibria.

These topological, game theoretic results do not provide any insight on the survival of genetic diversity. This depends on the polymorphism as well as the stability 
of the equilibria. One way to formalize this question is to ask whether (given a specific fitness landscape, symmetric coordination game) there exists an equilibrium with support size of at least two which has a region of attraction of positive measure. The answer to this question for the minimal case of two alleles (alleles $a/A$, individuals $aa/aA/AA$) is textbook knowledge and can be traced back to the classic work of Kalmus (1945)~\cite{Kalmus45}. The intuitive answer here is that diversity can survive when the heterozygote individuals, $aA$, have a fitness advantage. Intuitively, this can be explained by the fact that even if evolution tries to dominate the genetic landscape by $aA$ individuals, the random genetic mixing during reproduction will always produce some $aa,AA$ individuals, so the equilibrium that this process is bound to reach will be mixed. On the other hand, it is trivial to create instances where homozygote individuals are the dominant species regardless of the initial condition. 

As we increase the size/complexity of the fitness landscape, not only is not clear that a tight characterization of the diversity-inducing fitness landscape exists (a question about global stability of nonlinear dynamical systems), but moreover, it is even less clear whether one can decide efficiently whether such conditions are satisfied by a given fitness landscape (a computational complexity consideration).  How can one address this combined challenge and furthermore, how can one account for the apparent genetic diversity of the ecosystems around us? 

In a nutshell, we establish that the decision version of the problem is computationally hard. This core result is shown to be robust across a number of directions. Deciding the existence of stable polymorphic equilibria remains hard under a host of different definitions of stability examined in the dynamical systems literature. The hardness results persist even if we restrict the set of allowable landscape instances to reflect typical instance characteristics. Despite the hardness of the decision problems, randomly chosen fitness landscapes are shown to support polymorphism with significant probability (at least $1/3$). The game theoretic interpretation of our results allow for proving hardness results for understanding standard game theoretic dynamics  
in symmetric coordination games. We believe that this is an important result of independent interest as it points out at a different source of complexity in understanding social dynamics. Even in economic settings, where the computational complexity of identifying a sample (Nash) equilibrium is trivial (as in symmetric two-agent coordination games) the intricacy of social dynamics may add a layer of complexity that obfuscates even the most basic characteristics of future system states.

\section{Technical Overview}

To study survival of diversity in diploidy, we need to understand characteristics of the limiting population under evolutionary pressure. 
We focus ob the simplest case of single locus (gene) species. For this case, evolution under natural selection has been shown to follow replicator dynamics in symmetric two-player coordination games~\cite{akin}, where
the genes on two chromosomes are players and alleles are their strategies. Losert and Akin established pointwise convergence  through a potential/Lyapunov function argument \cite{akin}; here average fitness is the potential.
The limiting population corresponds (almost surely) to stable fixed points (FP) of this dynamic. 
Thus to make predictions about
diversity we need to characterize and compute these limiting fixed-points. 

Let $\CL$ denote the set of fixed points (FP) with region of attraction of positive measure. 
 Given a random starting point our dynamics converge to such a FP with positive probability.
As it turns out an exact characterization of $\CL$ may be
unlikely.  Instead we try to capture it as closely as possible through different stability notions. First we consider two standard
notions defined by Lyapunov, the stable and asymptotically stable FP. Stable fixed-point are fixed points such that if we start close enough to them we
stay close for all time, 
while in case of asymptotically stable FP, if we start close enough the trajectory furthermore converges to it.
(Definition  \ref{def.afp}). Thus asymptotically stable $\subseteq \CL$ follows.


Characterization of these stability notions using the absolute values of eigenvalues (EVal) of the Jacobian is well known: if the Jacobian at a fixed-point has an Eval $>1$ then the FP is {\em un-}stable (complement of stable), and if all Eval $<1$  then it is asymptotically stable. The case when all Eval $\le 1$ with equality holding for some, is the ambiguous one. In that case we can say nothing about the stability because the Jacobian does not suffice, \textit{i.e.}, we need to compute the Hessian etc. We will call these fixed points {\em linearly-stable}.

Under some assumptions on the update rule of the dynamics (the function has to be a diffeomorphism) it is true that at a fixed-point, say $\xx$, if some EVal $>1$ then the direction of corresponding eigenvector is repelling, and therefore any starting vector with a component of this vector can never converge to $\xx$. Thus points converging to $\xx$ can not have positive measure. Using this as an intuition we show that $\CL \subseteq$ {\em linearly-stable} FPs. In other words points converging to linearly-{\em un}-stable fixed-points have zero measure
(Theorem \ref{generic_zero}). This theorem is heavily utilized to understand (non-)existence of diversity.

Efficient verification requires efficient computation. However, note that whether a fixed-point is (asymptotically) stable or not does not seem easy to verify. To achieve this we define two more notions: {\em Nash stable} and {\em strict Nash stable}. 

It is easy to see that Nash equilibria of the corresponding coordination game are fixed-points of the replicator dynamics (but not vice-versa).
Keeping this in mind we define {\em Nash stable} FP, which is a Nash equilibrium and the sub-matrix corresponding to its support
satisfies certain $-ve$ semi-definiteness. The latter condition is derived from the fact that stability relates to local optima and also from Sylvester's law of inertia \cite{wikiintertia}. 
For {\em strict Nash stable} both conditions are strict, namely strict Nash equilibrium and $-ve$ definite. Combining all of these
notions we show the following:

\begin{theorem}[Informal]
$\ $

\vspace{-0.7cm}
\[
\begin{array}{lcl}
\mbox{Strict Nash stable} & \subseteq & \mbox{Asymptotically stable}\  \subseteq \ \CL \ \subseteq \ \mbox{linearly-stable} \ = \
\mbox{Nash stable}\\
& &\ \ \ \ \ \ \  \rotatebox{90}{$\supseteq$}  \\
& &\ \ \  \ \ \mbox{stable} \subseteq \ \mbox{linearly-stable} \ = \ \mbox{Nash stable}\\
\end{array}
\]
\end{theorem}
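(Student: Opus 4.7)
The proof decomposes into verifying each of the five set inclusions separately; the common workhorse is the Jacobian $J$ of the discrete replicator update $g_i(\xx) = x_i (A\xx)_i / (\xx^T A \xx)$ at a fixed point $\xx^*$, with support $S = \{i : x_i^* > 0\}$ and normalisation $w := \xx^{*T} A \xx^*$. Direct differentiation shows that $J$ has block-triangular structure
\[
J \;=\; \begin{pmatrix} J_{SS} & J_{SS^c} \\ 0 & D \end{pmatrix},
\]
where $D$ is diagonal with entries $(A\xx^*)_i/w$ for $i \notin S$ and $J_{SS} = I + w^{-1}\,\mathrm{diag}(\xx^*_S)\bigl(A_{SS} - 2w\,\ones\ones^T\bigr)$. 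Hence the spectrum of $J$ is the union of $\sigma(J_{SS})$ and the diagonal entries of $D$, and in-support and out-of-support directions may be treated separately.

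Three of the five links are short. The inclusion $\mathcal{L} \subseteq \text{linearly-stable}$ is exactly Theorem~\ref{generic_zero}. The inclusion $\text{asymptotically stable} \subseteq \mathcal{L}$ is immediate because an asymptotically stable fixed point has an open basin of attraction, which has positive Lebesgue measure on the simplex. The inclusion $\text{stable} \subseteq \text{linearly-stable}$ follows from the standard unstable-manifold construction: if some eigenvalue of $J$ has modulus $>1$, one produces arbitrarily small perturbations whose orbits escape a fixed neighbourhood of $\xx^*$, contradicting Lyapunov stability.

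The substantive step is the equality $\text{linearly-stable} = \text{Nash stable}$. On the simplex tangent $\{\vv : \ones^T \vv = 0\}$ the term $\ones\ones^T \vv$ vanishes, so the eigenvalues of $J_{SS}$ are precisely $1 + \nu/w$ as $\nu$ ranges over the spectrum of $\mathrm{diag}(\xx^*_S)\, A_{SS}$ restricted to the tangent. Since $\mathrm{diag}(\xx^*_S)\, A_{SS}$ is similar (via $\mathrm{diag}(\xx^*_S)^{1/2}$) to the symmetric matrix $\mathrm{diag}(\xx^*_S)^{1/2} A_{SS} \mathrm{diag}(\xx^*_S)^{1/2}$, Sylvester's law of inertia ensures that the signs of its eigenvalues on the tangent agree with those of the quadratic form induced by $A_{SS}$. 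Hence $|1 + \nu/w| \leq 1$ for every such $\nu$ translates into negative semi-definiteness of $A_{SS}$ on the simplex tangent, which is precisely the in-support half of the Nash-stable definition. Simultaneously, the $D$-entries satisfy $|(A\xx^*)_i/w| \leq 1$ iff $(A\xx^*)_i \leq w$ for $i \notin S$, i.e.\ the symmetric Nash inequality for out-of-support pure strategies. Combining both conditions gives the equality. Running the same computation with strict inequalities yields $|\lambda| < 1$ for every eigenvalue of $J$ under the strict Nash stable hypothesis, whereupon the standard linearisation theorem for discrete diffeomorphisms delivers the remaining inclusion $\text{strict Nash stable} \subseteq \text{asymptotically stable}$.

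The main obstacle, and the reason the spectral step is not a one-liner, is the simplex constraint. The unrestricted Jacobian always carries an extraneous eigendirection associated with $\ones^T \xx = 1$; this must be excised before any spectral conclusion can be drawn, which is why every estimate above lives on the tangent $\ones^T \vv = 0$. Once that is done, a secondary delicacy appears: $|1 + \nu/w| \leq 1$ is equivalent to $-2w \leq \nu \leq 0$ rather than just $\nu \leq 0$, so one must argue that the lower bound $\nu \geq -2w$ is automatic at a candidate Nash stable fixed point (a consequence of $w>0$ together with the scale of $A_{SS}$ on the support, possibly after a harmless additive shift of $A$). This is where I would expect to invest the most care in turning the sketch into a formal proof; the other inclusions follow cleanly from the block decomposition once these spectral calculations are in place.
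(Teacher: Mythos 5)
Your decomposition and all five links match the paper's own route: asymptotically stable $\subseteq \CL$ by definition of basin of attraction, $\CL \subseteq$ linearly stable via the center--stable manifold argument (Theorem~\ref{generic_zero}), stable $\subseteq$ linearly stable via the eigenvalue criterion (Theorem~\ref{thm.sfpJ}), and the substantive equality linearly stable $=$ Nash stable via exactly the symmetrization-plus-Sylvester computation you describe --- the paper's Lemma~\ref{lem.nseqls} excises the simplex-normal direction by the determinant identity $\det(L-\lambda I_k)=(1-\lambda)\det(\mathbb{J}'^{\rr}-\lambda I_{k-1})$ rather than by restricting to the tangent hyperplane, but that is cosmetic. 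The one step you explicitly defer --- why $|1+\nu/w|\le 1$ collapses to $\nu\le 0$, i.e.\ why the lower bound $\nu\ge -2w$ is automatic --- is closed in the paper by observing that $L=\mathrm{diag}(\xx^*_S)\,A_{SS}/w$ is a nonnegative (column-)stochastic matrix at a fixed point (its column sums are $(A\xx^*)_j/w=1$ for $j$ in the support), so together with the symmetrization its spectrum is real and lies in $(-1,1]$; your suggestion of a harmless additive shift (Lemma~\ref{lem.inv}) is precisely what restores positivity of $A$ when needed. The only other mild divergence is that the paper obtains stable $\subseteq$ Nash stable and strict Nash stable $\subseteq$ asymptotically stable from the potential-function characterization of (strict) local maxima of $\pi(\xx)=\xx^TA\xx$ (Theorem~\ref{maximum}, Lemmas~\ref{lem.sfp2ns} and~\ref{lem.sns2afp}) rather than purely spectrally; your spectral route to the latter is equally valid given Theorem~\ref{thm.sfpJ}.
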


We give an example where these notions don't coincide. Let $x_{t+1}$ be the next step for the following update rules: $f(x_t)=\frac{1}{2}x_t, g(x_t) = x_t - \frac{1}{2}x_t^2, h(x_t) = x_t + \frac{x_t^3}{2}, d(x_t)=x_t$. Then $0$ is asymptotically stable, stable and linearly stable and $0 \in \CL$ for $f$, it is linearly stable and $\in \CL$ but is not stable/as. stable for $g$, it is linearly stable but not stable/as. stable and $\notin \CL$ for $h$ and finally it linearly stable and stable but not as.stable and $\notin \CL$ for $d$.
   
We must stress the fact that generically, these sets coincide. It can be shown \cite{lyubich} for example that given your fitness matrix, if you give a random perturbation, then all fixed-points will have Jacobian with eigenvalues that don't lie on the unit circle. These fixed-points are called hyperbolic. Formally, if you consider the dynamics as an operator (called \textit{Fisher} operator) then the set of hyperbolic operators is dense in the space of Fisher operators. 
 
Our primary goal was to see if diversity is going to survive. We formalize this by checking 
 whether set $\CL$ contains a mixed point, i.e., where
more than one strategies have non-zero probability, implying that 
{\em diversity survives with some positive probability}.  In Section \ref{sec.hard} we show that for all five notions of stability, checking
existence of mixed FP is NP-hard. This in turn gives NP-hardness for checking survival of diversity as well. 

\begin{theorem}[Informal]
Given a symmetric matrix $A$, it is NP-hard to check if it has mixed (asymptotically) stable, linear-stable, or (strict) Nash
stable fixed-point.
This implies that it is NP-hard to check whether diversity survives for a given fitness matrix.
\end{theorem}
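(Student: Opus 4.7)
The plan is to reduce from the NP-hard CLIQUE problem. Given $G = (V,E)$ and threshold $k$, we construct in polynomial time a symmetric matrix $A$ on the index set $V \cup \{\ast\}$ so that the coordination game $(A,A)$ has a mixed stable fixed point iff $G$ contains a clique of size at least $k$. Exploiting the containment chain
\[
\text{strict Nash stable} \ \subseteq\ \text{as. stable} \ \subseteq\ \CL \ \subseteq\ \text{linearly stable} \ =\ \text{Nash stable},
\]
it suffices to design the reduction so that the yes-instance produces a \emph{strict Nash stable} mixed FP (which lies in all five sets) while the no-instance admits no \emph{Nash stable} mixed FP (which excludes all five). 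This single reduction then yields NP-hardness under all five stability notions simultaneously, as well as hardness of survival of diversity, since $\CL$ is sandwiched in the middle.

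\textbf{Construction.} Fix a small $\alpha \in (0,1)$ and set $c = 1 - (1-\alpha)/k$. Define $A$ by $A_{ii} = \alpha$ for $i \in V$, $A_{ij}=1$ if $\{i,j\}\in E$, $A_{ij}=0$ if $i \ne j$ in $V$ with $\{i,j\}\notin E$, and $A_{i\ast}=A_{\ast i}=A_{\ast\ast}=c$. The $V\times V$ block is a Motzkin--Straus-style encoding with a small diagonal penalty; the penalty ensures that the restriction of $A$ to the tangent space of the simplex at a uniform-on-clique distribution is strictly negative definite, and that any non-edge inside a support creates a positive direction in this restriction. The auxiliary ``benchmark'' strategy $\ast$ with constant value $c$ is calibrated so that uniform distributions on cliques of size $\ge k$ meet the Nash inequality against $\ast$, while cliques of size $< k$ have a strictly beneficial deviation to $\ast$.

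\textbf{Completeness and soundness.} For completeness, if $C \subseteq V$ is a maximal clique with $|C|\ge k$ and $x$ is uniform on $C$, a direct computation gives $(Ax)_i = 1-(1-\alpha)/|C|$ for every $i\in C$, $(Ax)_\ast = c \le 1-(1-\alpha)/|C|$, and $(Ax)_j \le (|C|-1)/|C| < (Ax)_i$ for $j \in V \setminus C$ (using maximality). Thus $x$ is a strict Nash equilibrium, and $A$ restricted to the tangent $\{v : v_{\ast}=0,\ v_{V\setminus C}=0,\ \sum v_i=0\}$ acts as $(\alpha-1)I$, which is negative definite. Hence $x$ is strict Nash stable. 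For soundness, let $x$ be any Nash stable mixed FP with support $S$. If $\{i,j\}\subseteq S\cap V$ is a non-edge, the tangent vector $v=e_i-e_j$ gives $v^{\T} A v = 2\alpha > 0$, contradicting negative semi-definiteness on the tangent; hence $S\cap V$ is a clique. Equal-payoff on $S\cap V$ forces $x$ to be uniform on $S\cap V$. If $\ast\notin S$, the Nash inequality $(Ax)_\ast \le (Ax)_i$ simplifies to $|S|\ge k$; if $\ast\in S$, the same equality plus the neutral direction $v=e_\ast - \tfrac{1}{|S\cap V|}\sum_{i\in S\cap V}e_i$ forces $|S\cap V|=k$. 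Either way $G$ contains a $k$-clique.

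\textbf{Main obstacle and corollary.} The delicate step is the soundness direction: ruling out mixed Nash stable supports that are neither cliques nor large enough. The twin levers are the diagonal penalty $\alpha > 0$ (which, via negative semi-definiteness of the tangent quadratic form, evicts non-edges from any stable support) and the benchmark value $c$ (tuned so that only cliques of size at least $k$ survive the Nash test against $\ast$). Once hardness of deciding a mixed stable FP is in hand under all five notions, hardness of survival of diversity follows from the sandwich: in the yes-case, the strict Nash stable mixed point lies in $\CL$, so a positive-measure set of initial conditions converges to a polymorphic limit; in the no-case, $\CL$ contains no mixed point (since $\CL \subseteq \text{Nash stable}$), so almost every trajectory converges to a monomorphic fixed point and diversity dies out.
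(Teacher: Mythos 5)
Your overall architecture is exactly the paper's: a single reduction from $k$-clique, calibrated so that a yes-instance yields a mixed \emph{strict} Nash stable point (hence lands in all five stability classes) while a no-instance excludes even a mixed \emph{Nash} stable point, with the sandwich $\textrm{strict Nash stable}\subseteq\textrm{as. stable}\subseteq\CL\subseteq\textrm{linearly stable}=\textrm{Nash stable}$ doing the rest, including the diversity corollary. Your gadget differs (a Motzkin--Straus block with small positive diagonal $\alpha$ plus one benchmark strategy $\ast$, versus the paper's $2n\times 2n$ matrix with $-h$ on non-edges, $n$ shadow strategies with cross-payoff $k-1$ and diagonal $h$, and $-\epsilon$ elsewhere), and your soundness argument for forcing support size $\ge k$ via the benchmark is cleaner than the paper's thresholding of ``large'' probabilities. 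Your use of $v=e_i-e_j$ to evict non-edges is the same mechanism as the paper's Lemma \ref{lem.dom}.

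There is, however, a genuine gap in completeness at the boundary. With $c=1-(1-\alpha)/k$ and $x$ uniform on a maximal clique $C$ of size $m=|C|$, you get $(Ax)_\ast=c$ and $(Ax)_i=1-(1-\alpha)/m$; when the maximal clique containing the given $k$-clique has size exactly $m=k$ (which can certainly happen), these are \emph{equal}, so $\ast$ is an unplayed best response and $x$ is not a strict Nash equilibrium --- your own displayed inequality is ``$\le$'' yet you conclude strictness. Worse, in that case the entire segment from $x$ to $e_\ast$ consists of fixed points with constant potential $\pi$, so $x$ is not a strict local maximum of $\pi$ and hence not asymptotically stable either. Thus the reduction as written does not establish hardness for strict Nash stability or asymptotic stability, and the diversity corollary (which relies on the yes-instance producing a point of $\CL$ via strict Nash stability) also loses its footing. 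The repair is small and mirrors the paper's $-\epsilon$ perturbation: take $c=1-(1-\alpha)/k-\delta$ with $0<\delta<\tfrac{1-\alpha}{k(k-1)}$, so that $(Ax)_\ast<(Ax)_i$ strictly for every clique of size $\ge k$, while in the soundness direction the inequality $1-(1-\alpha)/|S|\ge c$ still forces $|S|>k-1$, hence $|S|\ge k$ by integrality. With that calibration your proof goes through.
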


\begin{figure}[hbtp]
\vspace{-0.3cm}
\begin{minipage}{0.695\textwidth}
All our reductions are from {\bf $k$-clique} - given an undirected graph check if it has a clique of size $k$; a well known NP-hard
problem.  We design one main reduction and show that it works for all the notions of stability using Theorem \ref{thm.char}. 
Here, as shown in Figure \ref{fig1}, the main idea is to use a modified version of adjacency matrix $E$ of the graph as one of the
blocks in the payoff matrix, such that, $(i)$ clique of size $k$ or more implies a stable Nash equilibrium in that block,
and $(ii)$ all stable mixed equilibria are only in that block. 
Here $E'$ is modification of $E$ where off-diagonal zeros are replaced with $-h$ where $h$ is a large (polynomial-size number). 

\end{minipage}
\hspace{0.025\textwidth}
\begin{minipage}[c]{0.28\textwidth}
   \centering
\includegraphics[width=\linewidth,height=0.8\linewidth]{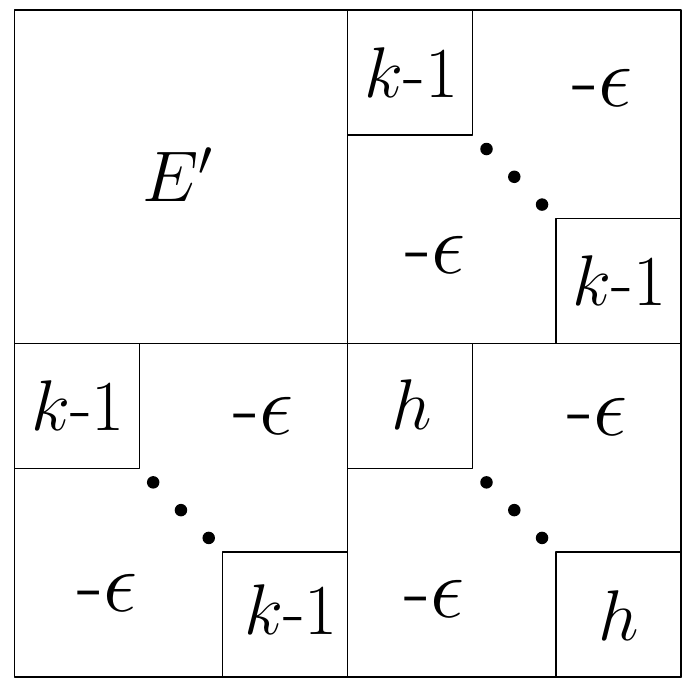}
\caption{Matrix of equations \ref{eq.red1}, section 6.1} 
\label{fig1}
\end{minipage}
\end{figure}

Let $A$ be $2n \times 2n$ symmetric fitness matrix of Figure \ref{fig1}, where first $n$ strategies correspond to $n$ node of the
graph. For $(i)$ given a $k$-clique we construct maximal clique containing it, say of size $m$, and then show that probability
$\frac{1}{m}$ for strategies corresponding to nodes in the clique, and zero otherwise, gives (strict Nash) stable FP.  To show $(ii)$
we use $-ve$ semi-definiteness property of Nash stable (which contains all other notions), which implies $A_{ii}\le 2A_{ij}$ for all $i, j$
in the support a Nash stable FP. Therefore if $h>2(k-1)$, then none of $\{n+1,\dots,2n\}$ can be in the support. Next we concentrate on
only strategies that have big enough probability, call it set $S$. Nash equilibrium property of Nash stable, $-h$s and $k-1$s entries in $A$ force each non-zero probability to be $\le \frac{1}{k}$. Using these facts we show $|S|\geq k$ and the corresponding vertices in $G$ form a clique. So We prove existence of $k$-clique.

The fitness matrices created in above hardness results are very specific, while one could argue that real-life fitness matrices may be
more random. So is it easy to check survival of diversity in a typical matrix? Or is it easy to check if a given allele survives? We
answer negatively for both of these.  There has been a lot of work on understanding decision versions of Nash equilibrium in general
games \cite{GZ,CS,SS,GMVY}, whereas to the best of our knowledge these are the first NP-hardness results for coordination games, and
therefore may be of independent interest.

Finally we show that even though checking is hard, on an average things are not that bad. In Section \ref{asec.survive} we prove that,

\begin{theorem}[Informal]
If the fitness matrix is picked at uniformly at random then with significantly high
probability, at least $\frac{1}{3}$, diversity will surely survive. 
\end{theorem}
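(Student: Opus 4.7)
The plan is to reduce the theorem to the combinatorial event ``no pure strategy is a strict Nash equilibrium of the coordination game $(A,A)$'', and then bound the probability of that event by a short three-term Bonferroni calculation that happens to come out exactly to $1/3$, independently of $n$.

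\textbf{Step 1: Reduction.} Let $X_i$ be the event that pure strategy $e_i$ is a strict Nash equilibrium of $(A,A)$; equivalently, $A_{ii}>A_{ji}$ for every $j\neq i$, i.e.\ the diagonal entry $A_{ii}$ is the maximum of column $i$. With probability one over the i.i.d.\ continuous entries of $A$, every replicator fixed point is hyperbolic and every Nash equilibrium of $(A,A)$ is strict, so the chain ``Strict Nash stable $\subseteq$ Asymptotically stable $\subseteq\CL\subseteq$ linearly-stable $=$ Nash stable'' from the excerpt collapses to a single set. That set is non-empty because the Fisher potential (average fitness) attains its maximum on the compact simplex. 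Consequently, whenever $\bigcap_i X_i^c$ holds, every fixed point in $\CL$ has support of size $\geq 2$, and Theorem~\ref{generic_zero} routes almost every initial condition into $\CL$. Thus it suffices to prove $P\bigl(\bigcap_{i=1}^n X_i^c\bigr)\ge 1/3$.

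\textbf{Step 2: Joint probabilities.} For $S\subseteq[n]$ with $|S|=k$, condition on the diagonal entries $\{A_{ii}\}_{i\in S}$. For each pair $\{i,j\}\subseteq S$ the single off-diagonal $A_{ij}=A_{ji}$ must lie below $\min(A_{ii},A_{jj})$, and for each $i\in S,\,j\notin S$ the off-diagonal $A_{ij}$ must lie below $A_{ii}$; these constraints involve distinct independent uniform variables, so
\begin{equation*}
 P\Bigl(\bigcap_{i\in S}X_i\,\Big|\,\{A_{ii}\}_{i\in S}\Bigr)=\prod_{\{i,j\}\subseteq S}\min(A_{ii},A_{jj})\cdot\prod_{i\in S}A_{ii}^{\,n-k}.
\end{equation*}
Taking expectations by passing to the order statistics of $(A_{ii})_{i\in S}$ and iterating a single-variable power integral yields the clean closed form $p_{k,n}\defeq P\bigl(\bigcap_{i\in S}X_i\bigr)=2^{k}(2n-k)!/(2n)!$; in particular $p_{1,n}=1/n$, $p_{2,n}=4/[(2n)(2n-1)]$, and $p_{3,n}=8/[(2n)(2n-1)(2n-2)]$.

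\textbf{Step 3: Bonferroni.} The third-order Bonferroni upper bound $P\bigl(\bigcup_i X_i\bigr)\le S_1-S_2+S_3$ with $S_\ell=\binom{n}{\ell}p_{\ell,n}$, together with $S_1=1$, gives
\begin{equation*}
 P\Bigl(\bigcap_i X_i^c\Bigr)\;\ge\;\binom{n}{2}p_{2,n}-\binom{n}{3}p_{3,n}\;=\;\frac{n-1}{2n-1}-\frac{n-2}{3(2n-1)}\;=\;\frac{1}{3},
\end{equation*}
uniformly in $n\ge 2$, which combined with Step~1 finishes the proof.

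\textbf{Main obstacle.} The formula for $p_{k,n}$ and the Bonferroni computation are routine bookkeeping; the delicate point is Step~1. The implication ``no pure strict Nash $\Longrightarrow$ $\CL$ is mixed'' is not automatic, since $\CL$ a priori only lies inside the (possibly larger) set of linearly-stable Nash points, which can harbour non-strict pure equilibria. Bridging this gap requires two ingredients: non-emptiness of $\CL$ via the Losert--Akin pointwise convergence theorem applied to the Fisher potential, and the generic fact (holding almost surely under any continuous product distribution on the entries) that hyperbolicity of all fixed points and strictness of all Nash equilibria are co-dimension-one phenomena and therefore occur with probability one. With these in hand, the $1/3$ bound from Bonferroni yields the theorem.
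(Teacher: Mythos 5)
Your overall strategy matches the paper's: reduce survival of diversity to the combinatorial event that no diagonal entry of $A$ is dominating, then control that event by a third-order Bonferroni bound on the events $X_i$. Where you genuinely diverge is in the probability estimates, and your version is actually sharper. The paper lower-bounds $\Pr[X_i\cap X_j]$ by a permutation-counting argument that only gives $S_2\ge \tfrac12-o(1)$, and upper-bounds the triple probability crudely by $1/(n-3)^3$, so it only concludes $\tfrac13-o(1)$. Your conditioning on the diagonal entries gives the exact closed form $p_{k,n}=2^k(2n-k)!/(2n)!$ (I checked it for $k=1,2,3$ and the general order-statistics integral), and the three Bonferroni terms then telescope to exactly $\tfrac13$ for every $n$, a clean uniform bound rather than an asymptotic one. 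This is a genuine improvement on the paper's Lemma on dominated diagonals.

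The one place you should tighten is Step 1. You justify ``no pure strict Nash $\Rightarrow$ diversity survives a.s.'' by asserting that, with probability one, all fixed points are hyperbolic and all Nash equilibria are strict, so the whole stability chain collapses. That claim is not substantiated (the literature the paper cites only gives \emph{density} of hyperbolic Fisher operators, not full measure), and the phrase ``Theorem \ref{generic_zero} routes almost every initial condition into $\CL$'' is not what that theorem says: it routes almost every initial condition to a \emph{linearly stable} fixed point, which may strictly contain $\CL$. Fortunately you do not need any of this. The paper's own bridge is the right one and is available to you: if $A_{tt}$ is dominated then the pure fixed point $e_t$ is linearly unstable (the off-support eigenvalue $A_{t't}/A_{tt}>1$), the only non-mixed fixed points are the vertices, and Theorem \ref{generic_zero} plus Losert--Akin pointwise convergence then force almost every trajectory to a mixed limit. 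Substituting that argument for the hyperbolicity claim makes Step 1 airtight and leaves your exact $\tfrac13$ computation as the main contribution.
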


Sure survival happens if every fixed-point in $\CL$ is mixed. We show that this fact is ensured if every diagonal entry $(i,i)$ of the fitness matrix is dominated by some entry in it's row or column. Essentially, this condition makes the corresponding two player coordination game to have no symmetric pure Nash equilibrium. By considering matrices with entries i.i.d from a continuous distribution (so that the probability to get the same number twice is zero), we show that with probability bounded away from zero for all size of matrices this condition is satisfied (theorem \ref{lem.pp}). To do this, we use some inclusion-exclusion argument to avoid the correlations (the matrix is symmetric so there are dependencies).

\section{Related Work}\label{sec.rel}



In the last  few years we have witnessed a rapid cascade of theoretical results on the intersection of computer science
and evolution. 
Livnat \etal \cite{PNAS1:Livnat16122008} introduced the notion of mixability, the ability of
an allele to combine itself successfully with other alleles within a specific population.
In
\cite{ITCS:DBLP:dblp_conf/innovations/ChastainLPV13,PNAS2:Chastain16062014}  connections where established between haploid evolution and
game theoretic dynamics in coordination games.
Even more recently Meir and Parkes~\cite{Meir15} provided a more detailed examination of these connections.
These dynamics are close variants of the standard (discrete) replicator dynamics~\cite{Hofbauer98}. 
Replicator dynamics is closely connected to the multiplicative weights update algorithm~\cite{Kleinberg09multiplicativeupdates}.
Analogous game theoretic interpretations are
known for diploids~\cite{akin}.
It is also possible to introduce connections between satisfiability and evolution~\cite{evolfocs14}.  

The error threshold is the rate of errors in genetic mixing above which genetic information disappears~\cite{Eig93}.
Vishnoi~\cite{Vishnoi:2013:MER:2422436.2422445} showed existence of such sharp thresholds. Moreover, in \cite{VishnoiSpeed}
he sheds light on the speed of evolution. 
Finally, in~\cite{DSV} Dixit, Srivastava and Vishnoi present finite
population models for asexual haploid evolution that closely track the standard infinite population model of 
Eigen~\cite{Eigen71}. Due to space limitations we present information related to biology in appendix, see~\ref{termsinbio}.

The complexity of computing whether a game has an evolutionarily stable
strategy has been studied first by Nissan and then by Etessami and Lochbihler~\cite{Nisan06,etessami2008computational}
and has recently been nailed down to be $\Sigma_2^P$-complete by Conitzer~\cite{Conitzer13}. 
These results are not directly comparable to our own since they apply for general
symmetric bimatrix games. More critically, these decision problems are completely orthogonal 
to understanding the persistence of genetic diversity. 

In~\cite{ITCS15MPP} Mehta, Panageas and Piliouras
examine the question of diversity for haploid species. 
Despite the systems' superficial similarities the two analyses come to starkly different conclusions.
In haploids systems all mixed (polymorphic) equilibria are unstable and
evolution converges to monomorphic states. In the case of diploid systems the answer to whether diversity survives or not depends crucially on the geometry of the fitness landscape. This is not merely a question about stability of fixed points but requires careful combinations of tools from topology of dynamical systems, stability analysis of fixed points as well as complexity theoretic techniques. 

\section{Preliminaries}\label{sec.prel}

In this section we briefly describe the evolutionary dynamics on diploids with single locus \cite{Nagylaki1,Nagylaki2,akin,lyubich},
and relate it to a $2$-player symmetric coordination game.\\ \noindent{\bf Notations:} We use boldface letters, like $\xx$, to denote
column vectors, and $x_i$ is its $i^{th}$ coordinate. Vectors $\zeros_n$ and $\ones_n$ are $n$-D vectors with all zeros and ones.  For
matrix $A$, $A_{ij}$ is the entry in $i^{th}$ row and $j^{th}$ column. By norm $||.||$ we mean $||.||_\infty$. $[n]$ denotes set
$\{1,\dots,n\}$, and $\Delta_n$ denotes $n$-D simplex, and let $SP(\xx)$ be $\{i\ |\ x_i >0\}$.

\subsection{Evolutionary dynamics}\label{sec.prelevol}
Consider a diploid single locus species, with $n$ alleles numbered $1,\dots,n$. The fitness of such a species can be represented by an
$n\times n$ matrix $A$, where $A_{ij}$ denotes fitness of an individual with chromosome pair $(i,j)$. Clearly, $A$ is a symmetric
matrix. Let $\xx\in \Delta_n$ represent the proportion of alleles in the current population, then under the evolutionary process of
natural-selection this proportion changes as per the following multi-variate function $f:\Delta_n \rightarrow \Delta_n$; in the
literature this is often called Discrete Replicator Dynamics.

\begin{equation}\label{eq.f}
\mbox{$\ \ \xx'=f(\xx)\ \ $ where }\ \ x'_i = f_i(\xx) = x_i \frac{(A\xx)_i}{\xx^TA\xx},\ \ \ \forall i \in [n]
\end{equation}

$f$ is a continuous function with convex, compact domain (= range), and therefore always has a fixed-point
\cite{brower}. Further, limit points of $f$ have to be fixed-points, i.e., $\xx$ such that $f(\xx)=\xx$. 

\begin{fact}\label{lem.fpchar}
Profile $\xx$ is a fixed-point of $f$ iff $\forall i\in [n],\ x_i> 0 \Rightarrow (A\xx)_i = \xx^TA\xx$. 
\end{fact}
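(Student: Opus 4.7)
The plan is to prove both directions by a direct algebraic manipulation of the defining identity $f_i(\xx) = x_i (A\xx)_i / (\xx^T A \xx)$, treating the indices $i$ with $x_i=0$ and with $x_i>0$ separately. A preliminary remark is needed: for $f$ to be defined on $\Delta_n$ we implicitly require $\xx^T A \xx \neq 0$, so I would first note that we work on the set where the denominator is nonzero (which holds automatically if $A$ has strictly positive entries, the standard assumption in this model).

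For the forward direction, I would assume $\xx$ is a fixed point, i.e.\ $f_i(\xx) = x_i$ for every $i \in [n]$. Substituting the definition of $f_i$ gives
\[
x_i \,=\, x_i \cdot \frac{(A\xx)_i}{\xx^T A \xx}, \qquad i\in [n].
\]
For any $i$ with $x_i > 0$, I divide both sides by $x_i$ and then multiply through by $\xx^T A \xx$ to obtain $(A\xx)_i = \xx^T A \xx$, which is the claimed implication.

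For the reverse direction, I would split on the sign of $x_i$. If $x_i = 0$, then $f_i(\xx) = 0 \cdot (A\xx)_i / (\xx^T A\xx) = 0 = x_i$. If $x_i > 0$, then by hypothesis $(A\xx)_i = \xx^T A\xx$, so the ratio $(A\xx)_i/(\xx^T A\xx)$ equals $1$ and hence $f_i(\xx) = x_i$. Combining both cases gives $f(\xx) = \xx$.

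There is no real obstacle here: the only subtlety is handling the two cases $x_i = 0$ and $x_i > 0$ separately (so that cancellation of $x_i$ is legitimate) and flagging the nonvanishing of the denominator. Both are routine, which is consistent with the statement being labelled as a \emph{fact} rather than a theorem; the characterisation is essentially immediate from the product form of $f_i$.
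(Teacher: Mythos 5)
Your proof is correct and is the immediate verification the paper has in mind; the paper states this as a \texttt{Fact} with no written proof precisely because the equivalence falls out of the product form of $f_i$ exactly as you describe. Your case split on $x_i=0$ versus $x_i>0$ and your remark about the nonvanishing denominator are the only points of care, and you handle both.
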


A fundamental question is: starting from arbitrary population of alleles how does the population look like in limit under
evolutionary pressures? Does it even converge to a limit point? In other words is $f$ going to converge? If so, then how do these limit
points look like?
Fisher's Fundamental Theorem of Natural Selection \cite{akin,lyubich} says that mean fitness function 
$$\pi (\xx) =\xx^TA\xx$$ 
(potential function in game theory terms) satisfies the inequality $\pi(f(\xx)) \geq \pi(\xx)$ and the equality holds iff $\xx$ is a
fixed-point. Losert and Akin \cite{akin,lyubich} showed that the dynamics above converge point-wise to fixed-points and that the rule
of the dynamics $f$ is a diffeomorphism in $\Delta_{n}$.  



\subsection{Games, Nash equilibria and symmetries}\label{sec.prelgame}

In this paper we consider two-player (locus) games, where each player has finitely many pure strategies (alleles).  Let $S_i,\ i=1,2$
be the set of strategies for player $i$, and let $m\defeq|S_1|$ and $n\defeq|S_2|$, then such a game can be represented by two payoff
matrices $A$ and $B$ of dimension $m\times n$.  Players may randomize amongst their strategies.  The set of mixed strategies are
$\Delta_m$ and $\Delta_n$ respectively.

\begin{definition}{(Nash Equilibrium \cite{agt.bimatrix})}
A strategy profile is said to be a Nash equilibrium (NE) if no player can achieve a better payoff by a
unilateral deviation \cite{nash}. Formally, $(\xx,\yy) \in \D_m\times \D_n$ is a NE iff $\forall \xx' \in \D_m,\
\xx^TA\yy \geq \xx'^TA\yy$ and $\forall \yy' \in \D_n,\  \xx^TB\yy \geq \xx^TB\yy'$. 
\end{definition}

Game $(A,B)$ is said to be symmetric if $B=A^T$. In a symmetric game the strategy sets of
both the players are identical, i.e., $m=n$, and $S_1=S_2$. We will use $n$, $S$ and $\D_n$ to denote the strategy related quantities.
Strategy $\xx \in \D_n$ is a symmetric NE, i.e., both play $\xx$, if and only if

\begin{equation}\label{eq.sne}
\forall i \in S, x_i>0 \Rightarrow (A\xx)_i = \max_k (A\xx)_k
\end{equation}

\begin{definition}\label{def.ssne}
NE $\xx$ is {\bf strict} if $\forall k \notin SP(\xx),\ (A\xx)_k < (A\xx)_i$, where $i
\in SP(\xx)$.
\end{definition}

\noindent{\bf Symmetric Coordination Game.} In a coordination/partnership game $B=A$, i.e., both the players get the same payoff always. Thus if
$A$ is symmetric then $(A,A)$ is a symmetric coordination game.  The next lemma follows using (\ref{eq.sne}) and Fact \ref{lem.fpchar}

\begin{lemma}\label{lem.sne2fp}
If $\xx$ is a symmetric NE of game $(A,A)$, it is a fixed-point of dynamics (\ref{eq.f}).
\end{lemma}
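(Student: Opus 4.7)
The plan is to verify the fixed-point characterization of Fact \ref{lem.fpchar} directly from the symmetric NE condition (\ref{eq.sne}). I would start by letting $M = \max_k (A\xx)_k$ and recalling that, by (\ref{eq.sne}), every strategy $i$ in the support of $\xx$ satisfies $(A\xx)_i = M$.

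The key step is then to compute the mean fitness $\xx^T A \xx$ by splitting the sum over the support:
\[
\xx^T A \xx \;=\; \sum_{i=1}^n x_i (A\xx)_i \;=\; \sum_{i \in SP(\xx)} x_i (A\xx)_i \;=\; M \sum_{i \in SP(\xx)} x_i \;=\; M,
\]
where the last equality uses $\xx \in \Delta_n$. Hence $(A\xx)_i = M = \xx^T A \xx$ for every $i \in SP(\xx)$, which is exactly the condition in Fact \ref{lem.fpchar} characterizing fixed points of $f$. Therefore $\xx$ is a fixed point of the discrete replicator dynamics (\ref{eq.f}).

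There is essentially no obstacle here: the proof is a one-line unpacking of definitions once one observes that in a symmetric NE the mean fitness equals the common payoff $M$ of the support strategies. The only thing worth noting is why strategies outside the support cause no issue — this is immediate because $x_i = 0$ makes the $i$-th coordinate of $f(\xx)$ automatically zero, matching $x_i$, regardless of $(A\xx)_i$.
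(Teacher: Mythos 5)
Your proof is correct and is exactly the argument the paper intends: the lemma is stated there as following immediately from (\ref{eq.sne}) and Fact \ref{lem.fpchar}, and your computation $\xx^TA\xx=\sum_{i\in SP(\xx)}x_i(A\xx)_i=M$ is the one-line unpacking that makes this precise. Nothing is missing.
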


From now on, by {\em mixed} (fixed-point) strategy we mean strictly mixed (fixed-point) strategy, i.e., $\xx$ such that
$|SP(\xx)|>1$, and non-mixed are called {\em pure}.

\subsection{Basics in Dynamical Systems}

\begin{definition}\label{def.sfp}
A fixed-point $\rr$ of $f : \Delta \to \Delta$ is called {\bf stable} if, for every $\epsilon > 0$, there exists a $\delta =
\delta(\epsilon) > 0$ such that, for all $\pp \in \Delta$ with $||\pp-\rr\|| < \delta$ we have that $||f^n(\pp)-\rr|| <
\epsilon$ for every $n \geq 0$, otherwise it is called {\bf unstable}.  \end{definition}

\begin{definition}\label{def.afp} 
A fixed-point $\rr$ of $f : \Delta \to \Delta$ is called {\bf asymptotically stable} if it stable and there
exists a (neighborhood) $\delta > 0$ such that, for all $\pp \in \Delta$ with $||\pp-\rr|| < \delta$ we have that
$||f^n(\pp)-\rr|| \to 0$ as $n \to \infty$.  \end{definition}

By definition it follows that if $\xx$ is asymptotically stable w.r.t. dynamics $f$ (\ref{eq.f}), then the set of
initial conditions in $\Delta$ so that the dynamics converge to $\xx$ has positive measure and hence $\xx \in \CL$, therefore $\textrm{ asymptotically stable} \subseteq \CL$.
Using the fact that under $f$ the potential function $\pi(\xx)=\xx^TA\xx$ strictly decreases unless $\xx$
is a fixed-point, the next theorem was derived in \cite{lyubich}.

\begin{theorem}\label{maximum} (\cite{lyubich}, $\S$ 9.4.7)  A fixed-point $\rr$ of dynamics (\ref{eq.f}) is stable if and only
if it is a local maximum of $\pi$, and is asymptotically stable if and only if it is a strict
local maximum.  \end{theorem}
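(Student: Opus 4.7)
The plan is to use the two ingredients already in hand: Fisher's Fundamental Theorem ($\pi(f(\xx))\ge\pi(\xx)$, equality iff $\xx$ is a fixed-point) and the Losert--Akin result (every orbit converges to a single fixed-point and $f$ is a diffeomorphism of $\Delta_n$). Set $V(\xx)\defeq\pi(\rr)-\pi(\xx)$; along any orbit of $f$, $V$ is non-increasing and strictly decreasing unless the current iterate is already a fixed-point, so $V$ is a candidate Lyapunov function near $\rr$.

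For \emph{local max $\Rightarrow$ stable}, I would pick $\epsilon_0>0$ with $V\ge 0$ on $B(\rr,\epsilon_0)\cap\Delta$. Given $\epsilon\in(0,\epsilon_0)$, the sphere $S_\epsilon\defeq\{\xx\in\Delta:\|\xx-\rr\|=\epsilon\}$ is compact. In the \emph{strict} case $\inf_{S_\epsilon}V>0$, so continuity furnishes $\delta>0$ with $V(\pp)<\inf_{S_\epsilon}V$ whenever $\|\pp-\rr\|<\delta$; monotonicity of $V$ along the orbit of such a $\pp$ forbids it from ever reaching $S_\epsilon$, so stability follows, and asymptotic stability comes along for free because the Losert--Akin limit of the orbit is a fixed-point in $\overline{B(\rr,\epsilon_0)}$ whose $\pi$-value is $\ge\pi(\pp)$, and strictness forces it to be $\rr$. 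For \emph{stable $\Rightarrow$ local max}, I argue by contradiction: if $\rr$ is stable but $\pi(\pp_k)>\pi(\rr)$ for some $\pp_k\to\rr$, then by stability plus Losert--Akin the orbit of $\pp_k$ converges to a fixed-point $\rr^\star_k\in\overline{B(\rr,\epsilon)}$ with $\pi(\rr^\star_k)\ge\pi(\pp_k)>\pi(\rr)$, and shrinking $\epsilon\to 0$ yields fixed-points $\rr^\star_k\to\rr$ with $\pi(\rr^\star_k)>\pi(\rr)$; since fixed-points are critical points of the quadratic $\pi$ restricted to their support face, the critical set on each face of $\Delta$ is a finite union of affine subspaces, which (for large $k$) forces the $\rr^\star_k$ onto the same component through $\rr$ and therefore $\pi(\rr^\star_k)=\pi(\rr)$, a contradiction. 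For \emph{asymptotic stability $\Rightarrow$ strict local max}, suppose it fails: pick $\pp_k\to\rr$, $\pp_k\ne\rr$, with $\pi(\pp_k)=\pi(\rr)$; Fisher's equality case forces each $\pp_k$ to be a fixed-point, whose stationary orbit cannot converge to $\rr$, contradicting asymptotic stability.

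\textbf{Main obstacle.} The delicate step is the non-strict direction of ``local max $\Rightarrow$ stable'': $V$ may vanish on a non-trivial set through $\rr$, so the clean Lyapunov gap $\inf_{S_\epsilon}V>0$ collapses and orbits can in principle drift along the plateau $\{\pi=\pi(\rr)\}$. I would handle this by a LaSalle-type invariance argument: the set $\{V=0\}\cap\overline{B(\rr,\epsilon_0)}$ consists entirely of fixed-points (Fisher's equality case), and combined with the fact that $f$ is a diffeomorphism near $\rr$ this is enough to trap any orbit that started close to $\rr$ in a slightly larger closed ball. The remaining steps are careful bookkeeping with Fisher monotonicity and Losert--Akin pointwise convergence.
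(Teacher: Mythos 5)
First, a point of reference: the paper does not prove this theorem at all --- it is imported verbatim from Lyubich \S 9.4.7 --- so there is no in-paper argument to compare yours against, and I am judging the proposal on its own merits. Your two ``necessity'' directions are essentially sound. The contradiction argument for stable $\Rightarrow$ local max works because the fixed-point set is locally a finite union of pieces cut out by the affine conditions $(A\xx)_i=(A\xx)_j$ over a common support, and $\pi$ is constant on each such piece (for fixed points $\xx,\yy$ in the same piece, $\pi(\xx)=\yy^TA\xx=\xx^TA\yy=\pi(\yy)$ by symmetry of $A$); you should state this constancy explicitly, but it is a one-line computation. The argument that a non-strict local max cannot be asymptotically stable (nearby plateau points are themselves fixed points, hence stationary orbits not converging to $\rr$) is also fine.

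The genuine gap is exactly in the direction you flag as the main obstacle, and the LaSalle gesture does not close it: the hypotheses you propose to rely on --- $\pi$ nondecreasing with equality precisely at fixed points, the plateau $\{\pi=\pi(\rr)\}$ near $\rr$ consisting entirely of fixed points, $f$ a local diffeomorphism, and pointwise convergence of every orbit --- are jointly \emph{consistent with instability}. Take $f(x,y)=(x+(1-x)y^{2},\,y-y^{3})$ on $[0,1]\times[-\tfrac{1}{2},\tfrac{1}{2}]$ with potential $-y^{2}$: the fixed-point set is $\{y=0\}$, the potential strictly increases off it, $(0,0)$ is a (non-strict) local maximum, the plateau equals the fixed-point set, $Df(0,0)=I$, and every orbit converges pointwise; yet since $1-x_{t+1}=(1-x_t)(1-y_t^{2})$ and $\sum_t y_t^{2}=\infty$ (because $y_t\sim t^{-1/2}$), every orbit with $y_0\neq 0$ converges to $(1,0)$, so $(0,0)$ is unstable. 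Hence ``local max $\Rightarrow$ stable'' cannot follow from your stated ingredients alone; it needs the finer structure of the quadratic potential and the replicator map, which is precisely where Lyubich does the work. Two secondary, fixable issues: (i) in ``strict local max $\Rightarrow$ asymptotically stable,'' the inference ``the limit has $\pi\ge\pi(\pp)$ and strictness forces it to be $\rr$'' is a non sequitur, since there could a priori be fixed points $\qq\neq\rr$ in the ball with $\pi(\pp)\le\pi(\qq)<\pi(\rr)$; you first need that a strict local max is isolated in the fixed-point set, which follows from the same $\pi$-constancy-on-components fact you use in the other direction. (ii) In discrete time an orbit can jump over the sphere $S_\epsilon$ without landing on it, so the trapping argument should use an intermediate radius $\epsilon_1$ with $f(B_{\epsilon_1}\cap\Delta)\subseteq B_\epsilon$.
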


As the domain of $\pi$ is closed and bounded, there exists a global maximum of $\pi$ in $\Delta_n$, which by
Theorem \ref{maximum} is a stable fixed-point, and therefore its existence follows. 
However, existence of asymptotically stable fixed-point is not guaranteed, for example if $A=[1]_{m\times n}$ then no
$\xx\in \Delta_n$ is attracting under $f$.

\subsubsection{Stability and Eigenvalues}
To analyze limiting points of $f$ with respect to the notion of stability we need to 
use the eigenvalues of the Jacobian of $f$ at fixed-points. Let $J^\rr$ denote the Jacobian at $\rr \in \Delta_n$. 
Following is a well-known theorem in dynamics/control theory that relates (asymptotically)
stable fixed-points with eigenvalue of its Jacobian. 

\begin{theorem}\label{thm.sfpJ} \cite{perko} At fixed-point $\xx$ if $J^{\xx}$ has at least one eigenvalue with absolute value $>1$,
then $\xx$ is unstable. If all the eigenvalues have absolute value $<1$  then it is asymptoticaly stable.
\end{theorem}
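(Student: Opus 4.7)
The plan is to linearize $f$ near the fixed-point $\xx$ via Taylor expansion and then transfer the spectral control of $J^\xx$ to control of the full nonlinear dynamics in a small neighborhood. Concretely, write
\[
f(\pp) = \xx + J^\xx (\pp - \xx) + R(\pp - \xx), \qquad \|R(\uu)\| = O(\|\uu\|^2),
\]
and restrict attention to the tangent space $T := \{\uu \in \Real^n : \ones^T \uu = 0\}$ to $\Delta_n$ at $\xx$. Because $f$ maps $\Delta_n$ into itself, $\ones^T(f(\pp)-\xx) = 0$ for $\pp$ near $\xx$, so $T$ is $J^\xx$-invariant and the relevant eigenvalues are those of $J^\xx|_T$. (If $\xx$ lies on the boundary of $\Delta_n$ one further restricts to the tangent cone of the relevant face; the argument is analogous.)

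For the asymptotically stable direction, assume all eigenvalues of $J^\xx|_T$ have absolute value strictly less than $1$, with spectral radius $\rho < 1$. A standard linear-algebra fact gives, for every $\epsilon > 0$, an adapted norm $\|\cdot\|_*$ on $T$ with $\|J^\xx|_T\|_* \le \rho + \epsilon$. Choose $\epsilon$ so that $c := \rho + \epsilon < 1$. Then for $\pp$ sufficiently close to $\xx$, the quadratic remainder is dominated by the linear contraction:
\[
\|f(\pp) - \xx\|_* \le c\|\pp - \xx\|_* + K\|\pp - \xx\|_*^2 \le c'\|\pp - \xx\|_*
\]
for some $c' \in (c,1)$. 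Iterating gives $\|f^n(\pp) - \xx\|_* \to 0$ geometrically, and equivalence of norms on $T$ yields both stability and convergence, hence asymptotic stability.

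For the unstable direction, assume $J^\xx|_T$ has an eigenvalue $\lambda$ with $|\lambda| > 1$. Split $T = E^u \oplus E^{cs}$ according to eigenvalues of absolute value $>1$ and $\le 1$ respectively, and pick an adapted norm in which $J^\xx$ expands every nonzero vector of $E^u$ by a factor at least $\mu > 1$ while bounding the growth on $E^{cs}$ by some $\nu < \mu$. The classical cone argument then shows that for any arbitrarily small $\delta > 0$ one can find $\pp$ with $\|\pp-\xx\|_* < \delta$ whose orbit eventually leaves a fixed-size neighborhood of $\xx$: take $\pp - \xx$ along $E^u$, and show that an invariant cone $C_\alpha := \{\uu \in T : \|\pi_{cs}\uu\|_* \le \alpha \|\pi_u \uu\|_*\}$ is preserved by $f$ on a small ball around $\xx$ once $\alpha$ is chosen compatibly with the $\mu$-vs-$\nu$ gap. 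Inside this cone the $E^u$-component grows by a factor $\ge \mu' > 1$ per iteration until the orbit escapes, contradicting the stability definition.

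The main obstacle is the instability half, where one must ensure that the nonlinear remainder $R$ does not wipe out the expansion along $E^u$. This is exactly the technical core of Hartman--Grobman-style reasoning: one shows that because $R(\uu) = O(\|\uu\|^2)$, its Lipschitz constant on a ball of radius $r$ is $O(r)$, so for $r$ small enough the perturbed map $J^\xx + DR$ still preserves the cone $C_\alpha$ and still expands $E^u$-components by a factor exceeding $1$. Once the cone invariance and expansion are in place the escape argument is routine; pulling together this instability conclusion with the previous paragraph gives the full statement.
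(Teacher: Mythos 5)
The paper offers no proof of this statement: it is imported wholesale from the dynamical-systems literature (the citation to Perko), so there is nothing internal to compare against. Your argument is the standard linearization proof of exactly this textbook fact --- adapted norm plus contraction for the case of spectral radius less than one, and a spectral splitting with an invariant-cone/expansion argument for the unstable case --- and it is correct in outline; the only point deserving more care than your parenthetical remark is that for a fixed point on the boundary of $\Delta_n$ the escaping perturbations must be chosen inside the simplex, which works here because the expanding off-support eigendirections of the replicator map correspond to increasing a zero coordinate, i.e.\ to admissible perturbations into $\Delta_n$.
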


\begin{definition}\label{def.ls} A fixed-point $\textbf{r}$ is called { linearly stable}, if the eigenvalues $J^\rr$ are at most $1$
in absolute value. Otherwise, it is called {linearly unstable}.
\end{definition}


Theorem \ref{thm.sfpJ} implies that eigenvalues of the Jacobian at a stable fixed-point have absolute value at most $1$,
however the converse may not hold. Using properties of $J^{\xx}$ and \cite{akin}, we prove next:
(see appendix \ref{asec.char} for Jacobian equations). 

\begin{theorem}\label{generic_zero} The set of initial conditions in $\Delta_n$ so that the dynamics \ref{eq.f} converge to linearly
unstable fixed-points has measure zero.  
\end{theorem}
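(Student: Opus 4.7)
The plan is to combine the classical center-stable manifold theorem with the Losert--Akin fact (already invoked above) that $f$ is a $C^{1}$ diffeomorphism of $\Delta_n$. The picture is standard: around every linearly unstable fixed point the set of initial conditions whose forward orbit stays trapped nearby is pinched into a lower-dimensional invariant submanifold, and a second-countability argument reduces the potentially uncountable family of linearly unstable fixed points to a countable cover.

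First I fix a linearly unstable fixed point $\rr$; by definition the Jacobian $J^{\rr}$ has an eigenvalue of modulus $>1$. Because $f$ is a local $C^{1}$ diffeomorphism at $\rr$, the center-stable manifold theorem produces an open neighborhood $U_{\rr} \subseteq \Delta_n$ and a $C^{1}$ embedded submanifold $W^{cs}(\rr) \subseteq U_{\rr}$ of dimension at most $(n-1)-u(\rr)$, where $u(\rr)\ge 1$ counts the eigenvalues of $J^{\rr}$ of modulus strictly greater than $1$. Its defining property is that any $\qq \in U_{\rr}$ whose entire forward orbit remains in $U_{\rr}$ must lie in $W^{cs}(\rr)$. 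Having codimension at least one inside $\Delta_n$, the submanifold $W^{cs}(\rr)$ has $(n-1)$-dimensional Lebesgue measure zero.

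Next, since $\Delta_n$ is second countable, the open cover $\{U_{\rr}\}$ of the set of linearly unstable fixed points admits a countable subcover $\{U_{\rr_k}\}_{k\ge 1}$. If $\pp \in \Delta_n$ satisfies $f^{n}(\pp)\to \rr$ for some linearly unstable fixed point $\rr$, then $\rr \in U_{\rr_k}$ for some $k$, and convergence forces $f^{n}(\pp)\in U_{\rr_k}$ for all $n$ past some threshold $N$. The local trapping property then gives $f^{N}(\pp)\in W^{cs}(\rr_k)$, i.e.\ $\pp \in f^{-N}\bigl(W^{cs}(\rr_k)\bigr)$. Thus the set of initial conditions converging to linearly unstable fixed points is contained in $\bigcup_{k\ge 1}\bigcup_{N\ge 0} f^{-N}\bigl(W^{cs}(\rr_k)\bigr)$. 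Each $W^{cs}(\rr_k)$ is null, each $f^{-N}$ is a $C^{1}$ diffeomorphism and therefore preserves null sets, and a countable union of null sets is null, which is exactly the claim.

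The main obstacle is that for diploid replicator dynamics the fixed-point set is frequently a continuum, so no finite union over linearly unstable fixed points is available; the Lindel\"of/second-countability reduction to a countable subcover is therefore essential rather than cosmetic. A secondary subtlety is that $\Delta_n$ is a manifold with boundary whose faces $\{\xx : x_i = 0\}$ are invariant under $f$, so a linearly unstable fixed point may lie on a face of lower dimension; this is handled by running the same argument face-by-face (the restriction of $f$ to each face is again a diffeomorphism by Losert--Akin), using the additional observation that faces of codimension at least one already contribute zero to the $(n-1)$-dimensional Lebesgue measure on $\Delta_n$.
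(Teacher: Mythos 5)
Your proposal is correct and follows essentially the same route as the paper's proof: the center-stable manifold theorem at each linearly unstable fixed point, a Lindel\"of/second-countability argument to extract a countable subcover, null-set preservation under the $C^1$ (locally Lipschitz) inverse iterates supplied by the Losert--Akin diffeomorphism property, and a countable union of null sets. The only cosmetic difference is that the paper first projects $\Delta_n$ into $\mathbb{R}^{n-1}$ via coordinate-removal charts before invoking the manifold theorem, which plays the role of your face-by-face remark about the boundary.
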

In Theorem \ref{generic_zero} we manage to discard only those fixed-points whose Jacobian has eigenvalue with absolute value $>1$, while characterizing limiting points of $f$; the latter is finally used to argue about the survival of diversity. Thus it is true that $\CL \subseteq\textrm{ linearly stable}$.

\section{Convergence, Stability, and Characterization}\label{sec.char}

Our primary goal is to capture limiting points of dynamics $f$ (\ref{eq.f}), where positive measure of starting points converge. We
denote this set by $\CL$. In this section we try to characterize $\CL$ using various notions of stability, which have game theoretic
and combinatorial interpretation.  These notions sandwich setwise the classic notions of stability given in the preliminaries,  give us
partial characterization of $\CL$, and are crucial for our hardness results and our results on the survival
of diversity. 

\subsection{Stability and Nash equilibria}

Given a symmetric matrix $A$, a two-player game $(A,A)$ forms a symmetric coordination game. 
In this section we identify special symmetric NE of this game to characterize stable fixed-points of $f$.
Given a profile $\xx \in \Delta_n$, define a transformed matrix $T(A,\xx)$ of dimension $(k-1)\times(k-1)$, where
$k=|SP(\xx)|$, as follows. 

{\small
\begin{equation*}\label{eq.t}
\mbox{Let $SP(\xx)=\{i_1,\dots,i_k\}$, $B=T(A,\xx)$.}\ \forall a,b < k,\ B_{ab}=A_{i_ai_b} + A_{i_ki_k} - A_{i_ai_k}-A_{i_ki_b}
\end{equation*}
}
\noindent Since $A$ is symmetric it is easy to check that $B$ is also symmetric, and therefore has real eigenvalues.
Recall the Definition \ref{def.ssne} of strict symmetric NE.

\begin{definition}\label{def.ns}
A strategy $\xx$ is called (strict) Nash stable if it is a (strict) symmetric NE of
the game $(A,A)$, and $T(A,\xx)$ is negative (definite) semi-definite.
\end{definition}

\begin{lemma}\label{lem.t}
For any given $\xx \in \Delta_n$, $T(A,\xx)$ is negative (definite) semi-definite iff ($\yy^TA\yy < 0$) $\yy^TA\yy \leq 0$,
$\forall y \in \Real^n$ such that $\sum_i y_i=0$ and $x_i=0\Rightarrow y_i=0$.
\end{lemma}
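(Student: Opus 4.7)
The plan is to establish the lemma by exhibiting an explicit linear bijection between the two domains involved and showing that the two quadratic forms agree under this bijection. Once that is done, the equivalence of (semi-)definiteness drops out essentially for free.

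Let me denote $SP(\xx)=\{i_1,\dots,i_k\}$ and $B=T(A,\xx)$, which is a $(k-1)\times(k-1)$ symmetric matrix. The constraint set on $\yy$, namely $V \defeq \{\yy \in \Real^n : \sum_i y_i = 0,\ x_i=0 \Rightarrow y_i=0\}$, is the $(k-1)$-dimensional subspace of vectors supported on $SP(\xx)$ with coordinates summing to $0$. First I would parameterize $V$ by $\zz \in \Real^{k-1}$ via $y_{i_a}=z_a$ for $a<k$, $y_{i_k}=-\sum_{a<k} z_a$, and $y_i=0$ for $i \notin SP(\xx)$. This is clearly a linear isomorphism $\Real^{k-1}\to V$ sending $0$ to $0$.

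Next I would expand $\yy^T A \yy$ directly. Since $\yy$ is supported on $SP(\xx)$,
\[
\yy^TA\yy = \sum_{a,b=1}^{k-1} z_a z_b A_{i_ai_b} \;+\; 2 y_{i_k}\sum_{a=1}^{k-1} z_a A_{i_ai_k} \;+\; y_{i_k}^{\,2}\,A_{i_ki_k}.
\]
Substituting $y_{i_k}=-\sum_c z_c$ and rewriting $2\sum_{a,c}z_az_cA_{i_ai_k}$ symmetrically as $\sum_{a,b}z_az_b(A_{i_ai_k}+A_{i_bi_k})$ (using the index-swap symmetry of the double sum) converts everything into a single quadratic form in $\zz$:
\[
\yy^TA\yy \;=\; \sum_{a,b=1}^{k-1} z_a z_b\bigl(A_{i_ai_b} + A_{i_ki_k} - A_{i_ai_k} - A_{i_ki_b}\bigr) \;=\; \zz^T B \zz,
\]
which is exactly the definition of $B=T(A,\xx)$. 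This identity is the crux of the lemma.

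Finally I would invoke the bijection: $B$ is negative semi-definite iff $\zz^T B\zz \le 0$ for every $\zz \in \Real^{k-1}$, and by the identity above this holds iff $\yy^T A\yy \le 0$ for every $\yy \in V$. Since the parameterization sends $\zz=0$ to $\yy=0$ and is a bijection, the nonzero version gives the negative-definite case: $\zz^T B\zz<0$ for all $\zz\ne 0$ iff $\yy^TA\yy<0$ for all $\yy\in V\setminus\{0\}$. There is no real obstacle here beyond the indexing bookkeeping in the quadratic expansion; the only thing one has to be careful about is using symmetry of $A$ to collapse the cross terms into the form of $B_{ab}$, and checking that the parameterization correctly identifies the "nonzero" regimes so that definite and semi-definite versions both transfer.
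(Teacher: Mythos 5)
Your proof is correct, and the key identity $\yy^T A\yy=\zz^T T(A,\xx)\zz$ under the parameterization $y_{i_a}=z_a$ ($a<k$), $y_{i_k}=-\sum_a z_a$ is exactly the right way to see the lemma; the algebra symmetrizing the cross terms checks out, and you correctly note that the map sends $0$ to $0$ so that both the semi-definite and definite versions transfer. It is worth comparing this to the paper's own proof, which also argues by a linear change of variables but in the opposite direction: it starts from $\zz\in\Real^n$ with $\sum_i z_i=0$, sets $\ww=(z_1-z_n,\dots,z_{n-1}-z_n)$, and asserts the \emph{pointwise} equivalence ``$\zz^TA\zz\le 0$ iff $\ww^TT(A,\xx)\ww\le 0$.'' That pointwise claim is actually false for the map as written: with $n=3$, $\xx$ fully mixed, $A$ the symmetric matrix with $A_{12}=A_{21}=1$ and all other entries $0$ (so $T(A,\xx)=\left(\begin{smallmatrix}0&1\\1&0\end{smallmatrix}\right)$), the vector $\zz=(1,-\tfrac14,-\tfrac34)$ gives $\zz^TA\zz=-\tfrac12<0$ while $\ww=(\tfrac74,\tfrac12)$ gives $\ww^TT(A,\xx)\ww=\tfrac74>0$. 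The correct correspondence is the one you use, namely $\ww=(z_1,\dots,z_{n-1})$ (drop the last coordinate), under which the two quadratic forms are literally equal; your version therefore repairs the paper's argument rather than merely reproducing it, and it also handles general (not fully mixed) supports explicitly, which the paper dismisses with ``it suffices to assume $\xx$ is fully mixed.'' No gaps.
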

Using the fact that stable fixed-point are local optima, we map them to
Nash stable strategies.

\begin{lemma}\label{lem.sfp2ns} Every stable fixed-point $\rr$ of $f$ is a Nash stable of game $(A,A)$. 
\end{lemma}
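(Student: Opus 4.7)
The plan is to use Theorem \ref{maximum}, which tells us that a stable fixed-point $\rr$ is a local maximum of $\pi(\xx)=\xx^TA\xx$ over $\Delta_n$, and then extract the two defining conditions of Nash stable (symmetric NE plus negative semi-definiteness of $T(A,\rr)$) from first- and second-order optimality, respectively.

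\textbf{Step 1: Symmetric NE property.} By Fact \ref{lem.fpchar}, any fixed-point $\rr$ satisfies $(A\rr)_i=\rr^TA\rr$ for $i\in SP(\rr)$. I would show that no $k\notin SP(\rr)$ can have $(A\rr)_k>\rr^TA\rr$. Suppose otherwise and pick any $j\in SP(\rr)$. Consider the perturbation $\rr'=\rr+\epsilon(e_k-e_j)$, which lies in $\Delta_n$ for $\epsilon\in(0,r_j]$. Expanding,
\[
\pi(\rr')=\pi(\rr)+2\epsilon\bigl((A\rr)_k-(A\rr)_j\bigr)+\epsilon^2(e_k-e_j)^TA(e_k-e_j).
\]
The linear term is strictly positive by assumption, so $\pi(\rr')>\pi(\rr)$ for sufficiently small $\epsilon$, contradicting local maximality. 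Hence $(A\rr)_k\le\rr^TA\rr=\max_i(A\rr)_i$ for all $k$, so by (\ref{eq.sne}) $\rr$ is a symmetric NE of $(A,A)$.

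\textbf{Step 2: Negative semi-definiteness of $T(A,\rr)$.} By Lemma \ref{lem.t}, it suffices to show $\yy^TA\yy\le 0$ for every $\yy\in\Real^n$ with $\sum_i y_i=0$ and $y_i=0$ whenever $r_i=0$. For such a $\yy$, consider the perturbation $\rr'=\rr+\epsilon\yy$. The support condition guarantees that coordinates outside $SP(\rr)$ stay at $0$, while for $i\in SP(\rr)$ we have $r_i>0$, so $\rr'\in\Delta_n$ for all sufficiently small $|\epsilon|$; the condition $\sum_i y_i=0$ preserves the simplex constraint. Expanding,
\[
\pi(\rr')=\pi(\rr)+2\epsilon\,\yy^TA\rr+\epsilon^2\,\yy^TA\yy.
\]
The cross term vanishes because $(A\rr)_i=\rr^TA\rr$ for $i\in SP(\rr)$ (Fact \ref{lem.fpchar}) and $y_i=0$ off the support, giving $\yy^TA\rr=(\rr^TA\rr)\sum_i y_i=0$. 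Thus $\pi(\rr')=\pi(\rr)+\epsilon^2\,\yy^TA\yy$. Since $\rr$ is a local maximum, $\pi(\rr')\le\pi(\rr)$ for small $\epsilon$, which forces $\yy^TA\yy\le 0$. Applying Lemma \ref{lem.t} yields that $T(A,\rr)$ is negative semi-definite, so $\rr$ satisfies both defining conditions of Nash stability.

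The only real subtlety is bookkeeping with the simplex: one must pick the perturbations so that $\rr'\in\Delta_n$ on both sides ($\pm\epsilon$) for the second-order argument, and on one side ($\epsilon>0$) for the first-order argument, which is exactly why the two parts use different perturbations ($e_k-e_j$ for Step 1, arbitrary sign-free $\yy$ supported on $SP(\rr)$ for Step 2). Everything else follows directly from Theorem \ref{maximum}, Fact \ref{lem.fpchar}, and Lemma \ref{lem.t}.
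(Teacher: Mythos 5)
Your proof is correct. Step 2 (negative semi-definiteness of $T(A,\rr)$ via a perturbation $\yy$ supported on $SP(\rr)$ with $\sum_i y_i=0$, the cross term vanishing by Fact \ref{lem.fpchar}, then invoking Lemma \ref{lem.t}) is essentially identical to the paper's argument. Where you diverge is Step 1: the paper establishes the symmetric-NE property through the spectral route — if $(\rr,\rr)$ is not a Nash equilibrium then some $j\notin SP(\rr)$ has $(A\rr)_j>\rr^TA\rr$, and $\frac{(A\rr)_j}{\rr^TA\rr}>1$ appears as an eigenvalue of $J^{\rr}$, contradicting stability via Theorem \ref{thm.sfpJ}. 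You instead stay entirely within the variational picture of Theorem \ref{maximum}, exhibiting the feasible direction $e_k-e_j$ along which $\pi$ strictly increases to first order, contradicting local maximality. Both are valid; your version has the virtue of using a single tool (stable $=$ local maximum of $\pi$) for both halves and avoids the Jacobian formulas altogether, while the paper's version makes the link to linear stability explicit, which it reuses later (e.g., in Lemma \ref{lem.nseqls}). One minor remark: in your Step 2 the two-sided choice of $\epsilon$ is unnecessary, since the first-order term vanishes identically and the $\epsilon^2$ coefficient already forces $\yy^TA\yy\le 0$ from one side alone.
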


Since stable fixed-points always exist, so do Nash stable strategies (Lemma \ref{lem.sfp2ns}). 
Next we map strict Nash stable strategies map to asymptotically stable fixed-points, as the negative definiteness and
strict symmetric Nash of the former implies strict local optima, and the next lemma follows. 

\begin{lemma}\label{lem.sns2afp} (\cite{lyubich} $\S$ 9.2.5) Every strict Nash stable is asymptotically stable.  \end{lemma}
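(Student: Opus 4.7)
By Theorem~\ref{maximum}, it suffices to show that a strict Nash stable profile $\rr$ is a \emph{strict} local maximum of the potential $\pi(\xx)=\xx^{T}A\xx$ on $\Delta_{n}$. Let $S=SP(\rr)$, set $v^{\ast}=\pi(\rr)$, and pick any $\zz\in\Delta_{n}$ close to $\rr$ with $\zz\neq\rr$. Writing $\delta=\zz-\rr$ we have $\sum_{i}\delta_{i}=0$ and $\delta_{k}\geq 0$ for every $k\notin S$ (since $r_{k}=0$). Expand
\[
\pi(\zz)-\pi(\rr)\;=\;2\,\delta^{T}A\rr\;+\;\delta^{T}A\delta.
\]
First I would analyze the linear part. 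By the Nash property $(A\rr)_{i}=v^{\ast}$ for $i\in S$, and by the \emph{strict} Nash condition there exists $\gamma>0$ such that $(A\rr)_{k}\leq v^{\ast}-\gamma$ for every $k\notin S$. Using $\sum_{i\in S}\delta_{i}=-a$ with $a:=\sum_{k\notin S}\delta_{k}\geq 0$, one obtains
\[
\delta^{T}A\rr\;=\;\sum_{k\notin S}\delta_{k}\bigl((A\rr)_{k}-v^{\ast}\bigr)\;\leq\;-\gamma\,a.
\]

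Next I would handle the quadratic part. Split $\delta=\eta+\mu$ with $\eta$ supported on $S$ and $\mu$ on $S^{c}$, and re-center the in-support piece by setting $\tilde\eta=\eta+(a/|S|)\mathbf{1}_{S}$, so $\sum_{i\in S}\tilde\eta_{i}=0$. Since $T(A,\rr)$ is negative definite, Lemma~\ref{lem.t} combined with compactness of the unit sphere in the zero-sum subspace supported on $S$ gives a constant $\lambda>0$ with $\tilde\eta^{T}A\tilde\eta\leq -\lambda\|\tilde\eta\|^{2}$. The remaining terms in $\delta^{T}A\delta$ involve the ``boundary'' vector $\mu-(a/|S|)\mathbf{1}_{S}$, whose Euclidean norm is $O(a)$; bounding via $\|A\|$ yields cross-term $\leq 2C'\|\tilde\eta\|\,a$ and residual $\leq C''a^{2}$ for constants $C',C''$ depending only on $A$ and $|S|$.

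Combining the two expansions, applying AM--GM to absorb $2C'\|\tilde\eta\|\,a$ into $\tfrac{\lambda}{2}\|\tilde\eta\|^{2}+\tfrac{2(C')^{2}}{\lambda}a^{2}$, gives
\[
\pi(\zz)-\pi(\rr)\;\leq\;-\,a\Bigl(2\gamma-\bigl(C''+\tfrac{2(C')^{2}}{\lambda}\bigr)a\Bigr)\;-\;\tfrac{\lambda}{2}\|\tilde\eta\|^{2}.
\]
For $\|\delta\|$ small enough the parenthesis is $\geq\gamma$, so the right-hand side is $\leq 0$, with equality forcing $a=0$ and $\tilde\eta=0$; but then $\mu=0$ and $\eta=\tilde\eta=0$, i.e.\ $\delta=0$, contradicting $\zz\neq\rr$. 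Hence $\rr$ is a strict local maximum of $\pi$, and Theorem~\ref{maximum} delivers asymptotic stability.

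The delicate point is precisely this balancing: the first-order gain from pushing mass outside $S$ is only \emph{linear} in $a$, while the quadratic term contains a mixed contribution $2C'\|\tilde\eta\|a$ that could a priori swamp it. The AM--GM trick, which trades half of the negative-definite quadratic slack $\lambda\|\tilde\eta\|^{2}$ against the cross term, is what makes both error terms $O(a^{2})$ and lets the linear $-\gamma a$ win for small perturbations. Everything else is a routine expansion relying on the equivalences already established in Lemma~\ref{lem.t} and Theorem~\ref{maximum}.
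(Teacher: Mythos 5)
Your proof is correct and follows exactly the route the paper's surrounding prose indicates: reduce asymptotic stability to $\rr$ being a strict local maximum of $\pi(\xx)=\xx^{T}A\xx$ via Theorem~\ref{maximum}, using the strict-Nash gap $\gamma$ to control the off-support (linear) directions and the negative definiteness of $T(A,\rr)$ through Lemma~\ref{lem.t} to control the in-support (quadratic) directions. The paper's own ``proof'' is a one-line deferral to \cite{lyubich}~\S 9.2.5, so the only difference is that you actually supply the balancing argument---the AM--GM absorption of the cross term so that the linear gain $-\gamma a$ dominates the $O(a^{2})$ residuals---and that argument checks out, including the equality case forcing $\delta=0$.
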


The above two lemmas show: strict Nash stable $\subseteq$ asymptotically stable (by definition) $\subseteq$ stable
 (by definition) $\subseteq$ Nash stable. Further, by Theorem \ref{thm.sfpJ} and the definition of {\em
linearly stable} fixed-points we know that stable $\subseteq$ linearly-stable. What remains is the relation between 
Nash stable and linearly stable. The next lemma answers this.

\begin{lemma}\label{lem.nseqls} Strategy $\rr$ is Nash stable iff it is a linearly stable fixed-point.
\end{lemma}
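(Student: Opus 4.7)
The plan is to compute the Jacobian $J^{\rr}$ of $f$ at the fixed-point $\rr$ explicitly, exploit the block structure imposed by the support/non-support partition, and match the resulting eigenvalue constraints to the two conditions defining Nash stability: the off-support symmetric-NE inequalities and negative semi-definiteness of $T(A,\rr)$. Writing $\partial f_i/\partial x_j|_{\rr}$ directly from~(\ref{eq.f}), the leading factor $x_i$ in $f_i$ ensures that whenever $r_i=0$ and $j\ne i$ the derivative vanishes, so, ordering indices with $S := SP(\rr)$ first, $J^{\rr}$ is block upper triangular with an $|S|\times|S|$ block $J_{SS}$, a possibly-nonzero upper-right block, a zero lower-left block, and a diagonal block $J_{NN}$ whose entries are $(A\rr)_i/c$ with $c := \rr^T A\rr$. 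Block triangularity gives $\mathrm{spec}(J^{\rr}) = \mathrm{spec}(J_{SS}) \cup \{(A\rr)_i/c : i\notin S\}$.

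Next I analyze $J_{SS}$. A direct computation yields $J_{SS} = I + D_S A_S/c - 2\, r_S \mathbf{1}_S^T$ with $D_S = \mathrm{diag}(r_i)_{i\in S}$; using $A_S \rr_S = c\mathbf{1}_S$ (fixed-point condition on the support), one finds $\mathbf{1}_S^T J_{SS} = 0$. Thus the image of $J_{SS}$ lies in $\mathbf{1}_S^\perp$, on which the rank-one term $-2 r_S\mathbf{1}_S^T$ vanishes and $J_{SS}$ acts as $I + D_S A_S/c$. Conjugating by $D_S^{1/2}$ produces the symmetric operator $I + M/c$ with $M := D_S^{1/2} A_S D_S^{1/2}$; the change of variables $\zz = D_S^{-1/2}\yy$ sends $\mathbf{1}_S^\perp$ to $(\rr_S^{1/2})^\perp$, and the identity $\zz^T M \zz = \yy^T A_S \yy$ exhibits the Rayleigh quotient of $M$ on $(\rr_S^{1/2})^\perp$ as exactly the $T(A,\rr)$ quadratic form from Lemma~\ref{lem.t}. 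Hence the nonzero eigenvalues of $J_{SS}$ are $1 + \mu/c$ as $\mu$ ranges over the spectrum of $M$ restricted to $(\rr_S^{1/2})^\perp$.

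Collecting the pieces, $|\lambda|\le 1$ for every eigenvalue of $J^\rr$ is equivalent to (i) $|(A\rr)_i/c|\le 1$ for every $i\notin S$, and (ii) $\mu \in [-2c,0]$ for every $\mu$ in the spectrum of $M$ on $(\rr_S^{1/2})^\perp$. The upper halves --- $(A\rr)_i\le c$ off-support, and $M\preceq 0$ on $(\rr_S^{1/2})^\perp$ --- are precisely the symmetric-NE condition and $T(A,\rr)\preceq 0$ by the Rayleigh correspondence, i.e., the two halves of Nash stability.

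The main technical obstacle is the corresponding \emph{lower} bounds, which have no counterpart in Nash stability and would break the equivalence if they ever bit. I resolve both by invoking positivity of $A$ (implicit for $f$ to preserve $\Delta_n$). Positivity yields $(A\rr)_i > 0 > -c$, handling (i). For (ii), $M$ is symmetric with positive entries, and one checks $M\rr_S^{1/2} = c\,\rr_S^{1/2}$ using $A_S\rr_S = c\mathbf{1}_S$; the Perron--Frobenius theorem then identifies $c$ as the dominant eigenvalue of $M$ with positive Perron eigenvector $\rr_S^{1/2}$ and forces every remaining eigenvalue to satisfy $|\mu| < c$, so the eigenvalues of $M$ on $(\rr_S^{1/2})^\perp$ lie in $(-c, c) \subset (-2c, \infty)$. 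With both lower bounds automatic, linear stability reduces exactly to Nash stability, establishing the iff.
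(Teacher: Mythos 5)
Your proof is correct, and it rests on the same two pillars as the paper's: the off-support rows of the Jacobian are diagonal with entries $(A\rr)_i/(\rr^TA\rr)$, matching the Nash-equilibrium inequalities, and the support block is controlled through the symmetrization $M_{ij}=\sqrt{r_ir_j}A_{ij}$ (the paper's $L'$, up to the factor $1/c$), whose Perron eigenvalue $c$ with eigenvector $\rr_S^{1/2}$ disposes of the lower eigenvalue bounds that Nash stability does not see. Where you diverge is in the bookkeeping connecting $M$ to the two sides of the equivalence. The paper works with the projected $(n-1)$-dimensional Jacobian, proves a determinant identity $\det(L-\lambda I_k)=(1-\lambda)\det(\mathbb{J}'^{\rr}-\lambda I_{k-1})$ by row/column operations, and then passes through Sylvester's law of inertia to relate the signs of the spectrum of $L'$ to negative semi-definiteness of $T(A,\rr)$. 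You instead keep the full $n\times n$ Jacobian, observe that $\ones_S^TJ_{SS}=0$ so the support block splits off an invariant hyperplane $\ones_S^{\perp}$ plus a harmless zero eigenvalue (which is exactly the eigenvalue the paper's projection removes, so the two notions of linear stability agree), conjugate by $D_S^{1/2}$, and read off $T(A,\rr)\preceq 0$ directly from the Rayleigh quotient $\zz^TM\zz=\yy^TA_S\yy$ on $(\rr_S^{1/2})^{\perp}$ via Lemma \ref{lem.t}. This buys you a cleaner argument: no determinant manipulations and no appeal to inertia, at the cost of having to argue explicitly (as you do, via positivity of $A$ and Perron--Frobenius) that the spurious lower bounds $(A\rr)_i\ge -c$ and $\mu\ge -2c$ never bind --- the same role that stochasticity of $L$ plays in the paper. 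The positivity assumption you invoke is the one the paper itself needs for the dynamics to be well defined, so it is not a gap.
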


Using Theorem \ref{maximum} and Lemmas \ref{lem.sfp2ns}, \ref{lem.sns2afp} and \ref{lem.nseqls} we get the following characterization among all the notions of stability that we have discussed so far. We also remind you that \\$\textrm{asymptotically stable }\subseteq \CL \subseteq \textrm{ linearly stable}$.  

\begin{theorem}\label{thm.char}
Given a symmetric matrix $A$, we have 

$$\begin{array}{l}
\textrm{strict Nash stable} \subseteq \textrm{asymptotically stable} (= \textrm{strict local optima})\\
\hspace{3cm}\subseteq \textrm{stable } (= \textrm{local optima}) \subseteq   \textrm{Nash
stable} = \textrm{linearly stable}\end{array}$$ 
\end{theorem}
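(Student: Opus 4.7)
The theorem is a consolidation of the four preceding results in this section, so the plan is to chain the established lemmas and verify no gap remains; the substantive work has already been done, and the proof should be a short assembly.

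First I would record the definitional inclusion asymptotically stable $\subseteq$ stable, which is immediate from Definitions \ref{def.sfp} and \ref{def.afp} (asymptotic stability is defined as stability together with local attraction). Pairing this with Lemma \ref{lem.sns2afp} gives strict Nash stable $\subseteq$ asymptotically stable $\subseteq$ stable. Then I would invoke Lemma \ref{lem.sfp2ns} to extend the chain on the right with stable $\subseteq$ Nash stable, closing all four of the stated set inclusions.

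Next I would establish the two parenthetical equalities by citing Theorem \ref{maximum} of Losert--Akin: a fixed-point of $f$ is stable iff it is a local maximum of the mean fitness $\pi(\xx)=\xx^TA\xx$, and asymptotically stable iff it is a strict local maximum. The Lyapunov property $\pi(f(\xx))\geq \pi(\xx)$, with equality only at fixed-points, is what makes both directions go through. Finally, the rightmost equality, Nash stable $=$ linearly stable, is precisely Lemma \ref{lem.nseqls}.

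The one thing I would double-check is that the notions of stability cited across these results are mutually compatible: in particular that the ``local maximum of $\pi$'' picture used to prove Lemma \ref{lem.sfp2ns} coincides with the metric definition in Definition \ref{def.sfp}, and that the Jacobian used for Definition \ref{def.ls} (linear stability) is the one whose eigenvalues Lemma \ref{lem.nseqls} characterizes via the transformed block $T(A,\rr)$. I do not anticipate any genuine obstacle here, since the preliminaries already set up $f$ as a diffeomorphism on $\Delta_n$, which is exactly the regularity needed to pass between the spectral and topological pictures. The real difficulty in the theorem is therefore not the assembly but rather the two non-trivial lemmas on which it rests: Lemma \ref{lem.sns2afp}, which turns negative definiteness of $T(A,\rr)$ together with the strict Nash condition into a strict local maximum of $\pi$ (and hence an asymptotically stable fixed-point via Theorem \ref{maximum}), and Lemma \ref{lem.nseqls}, whose proof hinges on a determinant identity relating the Jacobian on the full $n$-simplex to its reduction after eliminating one coordinate, tying the eigenvalues of the Jacobian to the spectrum of $T(A,\rr)$.
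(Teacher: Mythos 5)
Your proposal is correct and follows essentially the same route as the paper: the authors likewise obtain the chain by combining Lemma \ref{lem.sns2afp} (strict Nash stable $\subseteq$ asymptotically stable), the definitional inclusion into stable, Lemma \ref{lem.sfp2ns} (stable $\subseteq$ Nash stable), Theorem \ref{maximum} for the two parenthetical equalities with (strict) local optima of $\pi$, and Lemma \ref{lem.nseqls} for Nash stable $=$ linearly stable. The paper treats the theorem as exactly the kind of short assembly you describe, with the substantive content residing in those supporting lemmas.
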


As stated before, generically (random fitness matrix) we have hyperbolic fixed-points and all the previous notions coincide. Next we show invariance of (strict) Nash stable sets under affine transformation of $A$.

%

\begin{lemma}\label{lem.inv}
Let $A$ be a symmetric matrix, and $B=A+c$ for a $c\in \Real$, then the set of (strict) Nash stable strategies of $B$ are identical to that of $A$.
\end{lemma}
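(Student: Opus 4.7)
The plan is to exploit the simple observation that adding a constant $c$ to every entry of the payoff matrix shifts every expected payoff uniformly, so both the Nash structure and the relevant quadratic form on the tangent simplex are left invariant.

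First I would verify preservation of the (strict) symmetric Nash conditions. For any $\xx \in \Delta_n$, since $\sum_j x_j = 1$ we have $(B\xx)_i = \sum_j (A_{ij}+c)x_j = (A\xx)_i + c$ for every $i \in [n]$. Thus the ordering of the entries $\{(A\xx)_i\}_i$ is identical to that of $\{(B\xx)_i\}_i$, and in particular the defining condition $x_i > 0 \Rightarrow (A\xx)_i = \max_k (A\xx)_k$ holds for $A$ iff it holds for $B$; likewise strict inequality $k \notin SP(\xx) \Rightarrow (A\xx)_k < (A\xx)_i$ is preserved. Therefore $\xx$ is a (strict) symmetric NE of $(A,A)$ iff it is a (strict) symmetric NE of $(B,B)$.

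Next I would verify that the semi-definiteness condition is preserved. The cleanest route is via Lemma \ref{lem.t}: for any $\yy \in \Real^n$ with $\sum_i y_i = 0$ and support contained in $SP(\xx)$,
\begin{equation*}
\yy^T B \yy \;=\; \yy^T A \yy + c\Bigl(\sum_i y_i\Bigr)^2 \;=\; \yy^T A \yy,
\end{equation*}
so the two quadratic forms coincide on the relevant subspace. Hence $T(B,\xx)$ is negative (definite) semi-definite iff $T(A,\xx)$ is. As a sanity check one can also note this directly from the definition of $T$: the four added constants $c + c - c - c$ cancel, giving $T(B,\xx) = T(A,\xx)$ entrywise.

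Combining the two paragraphs, $\xx$ satisfies both clauses of Definition \ref{def.ns} for $A$ iff it does so for $B$, which establishes the lemma for both the Nash stable and the strict Nash stable variants. There is no real obstacle here; the only point requiring a tiny bit of care is that the shift argument for the NE condition uses $\sum_j x_j = 1$, while the shift argument for the quadratic form uses the complementary fact $\sum_i y_i = 0$, and these two roles should be kept straight.
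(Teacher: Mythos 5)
Your proof is correct and follows the same route as the paper's (very terse) argument: the paper simply asserts that the (strict) symmetric NE sets of $(A,A)$ and $(B,B)$ coincide and that $T(A,\xx)=T(B,\xx)$ for all $\xx\in\Delta_n$, which are exactly the two facts you verify in detail.
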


\section{NP-Hardness Results}\label{sec.hard}
Positive chance of survival of phenotypic (allele) diversity in the limit under the evolutionary pressure of selection (dynamics \ref{eq.f}),
implies existence of a mixed linearly stable fixed-point (Theorem \ref{generic_zero}). This notion encompasses all the
other notions of stability (Theorem \ref{thm.char}), and may contain points that are not attracting. Whereas, strict Nash stable and
asymptotically stable are attracting.

In this section we show that checking if there exists a mixed stable profile, for any of the five notions of stability (Definitions
\ref{def.sfp}, \ref{def.afp}, \ref{def.ls} and \ref{def.ns}), may not be easy.  In particular, we show that the problem of checking if
there exists a mixed profile that satisfy any of the stability condition is NP-hard. The reduction also gives NP-hardness for checking
if a given pure strategy is played with non-zero probability (subset) at these.  In other words, it is NP-hard to check if a particular
allele is going to survive in the limit under the evolution.  Finally we extend all the results to the typical class of matrices, where
exactly one diagonal entry is dominating (see Remark \ref{rem}). All the reductions are from {\bf k-Clique}, a well known NP-complete
problem \cite{CSLR}.

\begin{definition}
\textbf{(k-Clique)} Given an undirected graph $G=(V,E)$, with $V$ vertices and $E$ edges, and an integer $0< k < |V|-1=n-1$,
decide if $G$ has a clique of size $k$. 
\end{definition}	

\noindent{\bf Properties of $G$.} Given a simple graph $G=(V,E)$ if we create a new graph
$G'$ by adding a vertex $u$ and connecting it to all the vertices $v \in V$, then it is easy to see that graph $G$ has a
clique of size $k$ if and only if $G'$ has a clique of size $k+1$.  Therefore, wlog we can assume that there exists a vertex
in $G$ which is connected to all the other vertices. Further, if $n=|V|$, then such a vertex is the $n^{th}$ vertex. By
abuse of notation we will use $E$ an adjacency matrix of $G$ too, $E_{ij}=1$ if edge $(i,j)$ present in $G$ else it is zero. 

\subsection{Hardness for checking stability}\label{sec.SHard}
In this section we show NP-hardness (completeness for some) results for decision versions on (strict) Nash stable strategies and
(asymptotically) stable fixed-points. 
Given graph $G=(V,E)$ and integer $k<n$, we construct the following symmetric $2n \times 2n$ matrix $A$, where $E'$ is modification of $E$ where off-diagonal zeros are replaced with $-h$ where $h>2n^2+5$ .

{\small
\begin{equation}\label{eq.red1}
\forall i\le j,\  A_{ij}=A_{ji} = \left\{
  \begin{array}{ll}
    E'_{ij} & \mbox{ if } i,j \le n\\
k-1 & \mbox{ if } |i - j| = n \\ 
	h & \mbox{ if } i,j>n \mbox{ and } i=j, \mbox{ where $h > 2n^2+5$} \\
	-\epsilon & \mbox{otherwise, } \mbox{where $0<\epsilon\leq \frac{1}{10n^3}$} 
  \end{array} \right.
  \end{equation}
}

\noindent Clearly, $A$ is a symmetric but is not non-negative. Next lemma maps $k$-clique to mixed-strategy that is also strict Nash
stable fixed-point (FP). Note that such a FP satisfies all other stability notions as well, and hence implies existence of mixed limit
point in $\CL$.

\begin{lemma}\label{lem.cl2sns}
If there exists a clique of size at least $k$ in graph $G$, then the game $(A,A)$ has a mixed strategy $\pp$ that is strict Nash stable. 
\end{lemma}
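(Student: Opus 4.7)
The plan is to exhibit an explicit mixed profile $\pp$ supported on a maximal clique extending the given $k$-clique, then verify the two parts of Definition~\ref{def.ns}: (a) $\pp$ is a strict symmetric NE of $(A,A)$, and (b) $T(A,\pp)$ is negative definite. Concretely, I would greedily extend the given clique of size $\geq k$ to a maximal clique $C \subseteq [n]$ of some size $m \geq k \geq 2$, and set $p_i = 1/m$ for $i \in C$ and $p_i = 0$ elsewhere; mixedness follows since $m\ge 2$.

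For (a), I would compute $(A\pp)_i$ in four cases corresponding to the block structure of $A$ in~(\ref{eq.red1}). When $i \in C$, the diagonal entry $A_{ii}=0$ and $A_{ij}=1$ for the other $m-1$ clique members, so $(A\pp)_i = (m-1)/m$; call this target value $v$. When $i \in [n]\setminus C$, maximality of $C$ yields some $j \in C$ non-adjacent to $i$, so the sum contains a $-h/m$ term; since $h > 2n^2+5$ this drags $(A\pp)_i$ strictly below $v$. When $i>n$ with $i-n \in C$, exactly the $|i-j|=n$ slot contributes $(k-1)/m$ and the other $m-1$ entries contribute $-\epsilon/m$ each, giving $(A\pp)_i = (k-1-(m-1)\epsilon)/m < v$ since $m \geq k$ and $\epsilon>0$ (even in the boundary case $m=k$ the $-(m-1)\epsilon$ term supplies strict slack). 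When $i>n$ with $i-n \notin C$, all $m$ contributions equal $-\epsilon/m$, so $(A\pp)_i = -\epsilon < v$. Together these inequalities establish that $\pp$ is a \emph{strict} symmetric NE.

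For (b), I would write $T(A,\pp)$ explicitly using the clique restriction of $A$, which is $J_m - I_m$. Indexing $C=\{i_1,\dots,i_m\}$ with reference vertex $i_m$, one computes
\[
B_{aa} = A_{i_ai_a} + A_{i_mi_m} - A_{i_ai_m} - A_{i_mi_a} = 0+0-1-1 = -2,
\]
and $B_{ab} = 1+0-1-1 = -1$ for $a \neq b$, so $B = -(I_{m-1} + J_{m-1})$ where $J_{m-1}$ is the all-ones matrix. The spectrum of $I+J$ is $\{m,1,\ldots,1\}$, so $B$ has spectrum $\{-m,-1,\ldots,-1\}$, strictly negative, hence $B$ is negative definite. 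Combined with (a), this verifies strict Nash stability per Definition~\ref{def.ns}.

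The main obstacle is the bookkeeping of strict inequalities in (a): the parameters $h$ and $\epsilon$ are calibrated specifically so that every pure deviation, whether from the vertex block or the copy block, loses by a strict margin, and one must be a bit careful in the borderline subcase $m=k$ where the slack comes entirely from $-(m-1)\epsilon$. Step (b) is then clean because the clique restriction is $J_m - I_m$, which makes $T(A,\pp)$ a structured rank-one perturbation of $-I_{m-1}$ whose eigenvalues are immediate.
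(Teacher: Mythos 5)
Your proposal is correct and follows essentially the same route as the paper's proof: extend to a maximal clique of size $m\ge k$, put uniform weight $1/m$ on it, verify strictness of the symmetric NE by the same four-way case analysis on $(A\pp)_i$ (the paper merely merges your last two cases), and show $T(A,\pp)=-(I_{m-1}+J_{m-1})$ is negative definite (the paper exhibits the eigenvectors explicitly rather than quoting the spectrum of $J$, but the computation is identical). No gaps.
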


Next we want to show the converse for all notions of stability, namely if mixed-strategy exists for any notion of the five notions of
stability then there is a clique of size at least $k$ in the graph $G$. Since each of the five stability implies Nash stability, it
suffices to map mixed Nash stable strategy to clique of size $k$. 
For this, and reductions that follow, we will use the following property due to negative semi-definiteness of Nash stability extensively.

\begin{lemma} \label{lem.dom} 
Given a fixed-point $\xx$, if $T(A,\xx)$ is negative semi-definite, then $\forall i \in SP(\xx), A_{ii}\le 2A_{ij},\ \forall j\neq i
\in SP(\xx)$. Moreover if $\xx$ is a mixed Nash stable then it has in its support at most one strategy $t$ with $A_{tt}$ is dominating. 
\end{lemma}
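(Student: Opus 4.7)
My plan is to exploit Lemma \ref{lem.t}, which reformulates the negative semi-definiteness of $T(A,\xx)$ as the inequality $\yy^T A \yy \le 0$ for every vector $\yy \in \Real^n$ satisfying $\sum_\ell y_\ell = 0$ and $y_\ell = 0$ whenever $x_\ell = 0$. All the inequalities I need should come from picking appropriate feasible $\yy$ and expanding $\yy^T A \yy$ in coordinates.

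For the first claim, I would fix distinct indices $i, j \in SP(\xx)$ and take $\yy$ defined by $y_i = 1$, $y_j = -1$, and $y_\ell = 0$ otherwise. This $\yy$ lies in the constraint set (its coordinates sum to zero and its support is $\{i,j\} \subseteq SP(\xx)$), so applying Lemma \ref{lem.t} and using symmetry $A_{ij}=A_{ji}$ gives $A_{ii} + A_{jj} - 2 A_{ij} \le 0$, i.e.\ $A_{ii} + A_{jj} \le 2 A_{ij}$. This is exactly the pairwise bound the lemma asserts (read symmetrically in $i,j$, so that the same inequality controls both $A_{ii}$ and $A_{jj}$ against the off-diagonal $A_{ij}$).

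For the ``moreover'' clause, suppose for contradiction that $\xx$ is a mixed Nash stable strategy whose support contains two distinct indices $s, t$ whose diagonal entries are both dominating, i.e.\ $A_{ss} > A_{s\ell}$ for every $\ell \ne s$ and $A_{tt} > A_{t\ell}$ for every $\ell \ne t$. Specializing to $\ell = t$ and $\ell = s$ respectively gives $A_{ss} > A_{st}$ and $A_{tt} > A_{ts} = A_{st}$; summing these strict inequalities yields $A_{ss} + A_{tt} > 2 A_{st}$. Since $\xx$ being Nash stable entails $T(A,\xx)$ is negative semi-definite, this directly contradicts the bound established in the first part applied to $i=s,\ j=t$. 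Hence at most one coordinate of $SP(\xx)$ can carry a dominating diagonal.

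The only genuinely technical step is the first one: recognising that the right test vector in the characterization of Lemma \ref{lem.t} is the simple difference of two standard basis vectors, which collapses the quadratic form into a two-entry inequality. I do not expect any obstacle here: once that vector is in place, both the first claim and the contradiction driving the ``moreover'' part follow as one-line consequences of the symmetry of $A$, and no use of the fixed-point equations $(A\xx)_i = \xx^T A \xx$ is actually needed.
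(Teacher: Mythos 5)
Your proof is correct and is essentially the paper's own argument in different clothing: plugging $\yy = e_i - e_j$ into the quadratic form of Lemma \ref{lem.t} is precisely reading off the diagonal entry $A_{ii}+A_{jj}-2A_{ij}$ of $T(A,\xx)$ (with $i$ as the removed strategy), which the paper observes must be non-positive, and your "moreover" argument sums the same two dominance inequalities to reach the same contradiction. The one shared elision is that $A_{ii}+A_{jj}\le 2A_{ij}$ gives the stated $A_{ii}\le 2A_{ij}$ only because $A_{jj}\ge 0$ (true for fitness matrices and for the diagonals of the reduction matrices), so your parenthetical claim that the derived inequality "is exactly the pairwise bound the lemma asserts" quietly drops the $A_{jj}$ term --- exactly as the paper's step $A_{ii}\le A_{ii}+A_{jj}\le 2A_{ij}$ does.
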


Nash stable also implies symmetric Nash equilibrium. Next lemma maps (special) symmetric NE to $k$-clique. 

\begin{lemma}\label{lem.sne2cl}
Let $\pp$ be a symmetric NE of game $(A,A)$. If $SP(\pp) \subset [n]$ and $|SP(\pp)|>1$, then there exists a
clique of size $k$ in graph $G$.
\end{lemma}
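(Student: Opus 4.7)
The plan is to use the symmetric NE condition against the specific deviation to strategy $n+i$ (for each $i\in S\defeq SP(\pp)\subseteq[n]$), exploiting two features of $A$: the entries $A_{n+i,i}=k-1$ and $A_{n+i,j}=-\eps$ for $j\in[n]\setminus\{i\}$, and the $-h$ off-diagonal entries in the $[n]\times[n]$ block whenever $(i,j)\notin E$. Specifically, I will expand $(A\pp)_i = 1-p_i-(1+h)\bar d_i$ where $\bar d_i\defeq\sum_{j\in S\setminus\{i\},\,(i,j)\notin E}p_j$ (using $A_{ii}=0$ and $A_{ij}\in\{1,-h\}$ off-diagonal on the $[n]\times[n]$ block), and $(A\pp)_{n+i}=(k-1)p_i-\eps(1-p_i)$. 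Since $n+i\notin S$, the symmetric NE inequality yields $(A\pp)_{n+i}\le(A\pp)_i$ for every $i\in S$.

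The first step will show $|S|\ge k$: dropping the non-positive term $-(1+h)\bar d_i$ in the NE inequality gives $(k-1)p_i-\eps(1-p_i)\le 1-p_i$, hence $p_i\le(1+\eps)/k$ for all $i\in S$. Summing over $S$ and using that $\eps\le 1/(10n^3)$ makes $k/(1+\eps)$ strictly greater than $k-1$, integrality forces $|S|\ge k$.

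The second step identifies the ``bad'' set $T\defeq\{l\in S:\exists\,m\in S\setminus\{l\},\,(l,m)\notin E\}$ and shows every element of $T$ carries vanishing weight. For $l\in T$ with a non-neighbor witness $m\in S$, I have $\bar d_m\ge p_l$, which sharpens the NE inequality for $i=m$ (now retaining the $-(1+h)\bar d_m$ term) to $kp_m+(1+h)p_l\le 1+\eps$; hence $p_l\le(1+\eps)/(1+h)<1/n^2$, using $h>2n^2+5$. Summing over $T$ (of size at most $n$) gives $\sum_{l\in T}p_l\le 1/n$, so $\sum_{l\in S\setminus T}p_l\ge 1-1/n$.

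To finish, the per-coordinate bound $p_l\le(1+\eps)/k$ from Step~1 together with $\sum_{l\in S\setminus T}p_l\ge 1-1/n$ yields $|S\setminus T|\ge(n-1)k/(n(1+\eps))$, and a direct check (using $\eps\le 1/(10n^3)$ and $k<n$) shows this exceeds $k-1$, so $|S\setminus T|\ge k$ by integrality. By the very definition of $T$, every vertex of $S\setminus T$ is adjacent in $G$ to all other vertices of $S$, in particular to all other vertices of $S\setminus T$; therefore $S\setminus T$ is a clique of size at least $k$ in $G$. The only delicate point is the calibration of the three constants $k-1$, $h>2n^2+5$, $\eps\le 1/(10n^3)$ so that both $|S|\ge k$ and $|S\setminus T|\ge k$ survive the final integrality rounding; everything else is routine bookkeeping.
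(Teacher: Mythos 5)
Your proof is correct; I checked the arithmetic (the expansion $(A\pp)_i = 1-p_i-(1+h)\bar d_i$, the bound $p_i\le(1+\eps)/k$, the bound $p_l\le(1+\eps)/(1+h)<1/n^2$ for $l$ with a non-neighbor in the support, and the two integrality roundings) and it all goes through with the paper's parameters $h>2n^2+5$, $\eps\le 1/(10n^3)$, $k<n-1$. The overall strategy is the same as the paper's -- both arguments hinge on the deviation to strategy $n+i$ to cap the probabilities at roughly $1/k$, and on the $-h$ penalty to eliminate non-adjacent pairs -- but the decomposition is genuinely different. The paper first fixes the threshold set $\ssp(\pp)=\{i: p_i>1/n^2\}$, shows $|\ssp(\pp)|\ge k$ by contradiction (if the heavy set were small, some $p_r$ would be so large that $n+r$ becomes a profitable deviation), and then shows $\ssp(\pp)$ is itself a clique by comparing the payoff of a strategy $j'$ with a non-neighbor $i'$ in $\ssp(\pp)$ (which drops below $-1$) against the payoff floor $-\eps$ of the strategies outside $[n]$. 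You instead bound every $p_i$ uniformly, conclude the \emph{entire} support has size $\ge k$, and then excise the ``bad'' set $T$ of vertices with a non-neighbor in the support by extracting $p_l\le(1+\eps)/(1+h)$ from the same $n+m$ deviation (retaining the $-(1+h)\bar d_m$ term) rather than from the $-\eps$ floor. Your route buys a slightly cleaner quantitative picture (explicit per-coordinate bounds and an explicit clique $S\setminus T$), at the cost of one more rounding step at the end; the paper's route needs only one rounding but its clique set $\ssp(\pp)$ is defined by an ad hoc threshold. Both are valid and of comparable length.
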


We obtain the next lemma essentially using Lemmas \ref{lem.dom} and \ref{lem.sne2cl}.

\begin{lemma}\label{lem.ns2cl}
If game $(A,A)$ has a mixed Nash stable strategy, then graph $G$ has a clique of size $k$.
\end{lemma}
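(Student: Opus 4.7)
The plan is to combine Lemma \ref{lem.dom} with the block structure of $A$ in equation (\ref{eq.red1}) so as to force the support of any mixed Nash stable $\pp$ to lie inside the first $n$ coordinates, at which point Lemma \ref{lem.sne2cl} finishes the argument. Since Nash stability gives negative semi-definiteness of $T(A,\pp)$, Lemma \ref{lem.dom} immediately yields
\[
A_{ii}\le 2A_{ij}, \qquad \forall i\in SP(\pp),\ \forall j\in SP(\pp)\setminus\{i\}.
\]

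The core step is to rule out indices $i\in\{n+1,\dots,2n\}$ from $SP(\pp)$. Suppose for contradiction that some such $i$ lies in the support. By construction $A_{ii}=h$, so the inequality above forces $A_{ij}\ge h/2$ for every other $j\in SP(\pp)$. Reading the four blocks of $A$ row by row, the entries in row $i$ consist only of $h$ (on the diagonal), $k-1$ (at the single column $j=i-n$), and $-\eps$ everywhere else. Since $h>2n^2+5$ while $k<n$, we have $h/2>k-1>-\eps$, so no admissible $j\ne i$ can meet the required lower bound. Hence $SP(\pp)\setminus\{i\}=\emptyset$, contradicting the assumption that $\pp$ is mixed ($|SP(\pp)|\ge 2$).

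Therefore $SP(\pp)\subseteq [n]$ with $|SP(\pp)|\ge 2$. Because every Nash stable strategy is in particular a symmetric NE of $(A,A)$ by Definition \ref{def.ns}, I can directly invoke Lemma \ref{lem.sne2cl} on $\pp$ to extract a clique of size $k$ in $G$, completing the proof.

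The main obstacle is simply verifying that the numerical margin between the dominating diagonal value $h$ and the largest possible off-diagonal entry in rows $\{n+1,\dots,2n\}$ (which is $k-1$) is wide enough; the choice $h>2n^2+5$ comfortably ensures $h/2$ strictly exceeds every off-diagonal entry of $A$, so the incompatibility argument goes through cleanly without any case analysis on the precise composition of $SP(\pp)$ beyond separating bottom-block from top-block indices.
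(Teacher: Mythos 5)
Your proposal is correct and follows essentially the same route as the paper: use the negative semi-definiteness condition via Lemma \ref{lem.dom} to show that no index in $\{n+1,\dots,2n\}$ can coexist with another support element (since $A_{ii}=h$ would force an off-diagonal entry of at least $h/2$, but the largest available is $k-1$), hence $SP(\pp)\subseteq[n]$, and then invoke Lemma \ref{lem.sne2cl}. The only difference is cosmetic: the paper phrases the same inequality as $h>2k>2A_{ij}$ rather than $A_{ij}\ge h/2$.
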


The next theorem follows using Theorem \ref{thm.char}, Lemmas \ref{lem.cl2sns} and \ref{lem.ns2cl}, and the
property observed in Lemma \ref{lem.inv}. Since there is no polynomial-time checkable condition for (asymptotically) stable
fixed-points\footnote{These are same as (strict) local optima of function $\pi(\xx)=\xx^TA\xx$, and checking if a given 
$\pp$ is a local optima can be inconclusive if hessian at $\pp$ is (negative) semi-definite.}
its containment in NP is not clear, while for (strict) Nash stable strategies containment in NP follows from the Definition 
\ref{def.ns}. 

\begin{theorem}\label{thm.mainHard}
Given a symmetric matrix $A$, 
checking if (i) game $(A,A)$ has a mixed (strict) Nash stable (or linearly stable) strategy is NP-complete. (ii) dynamics $f$
(\ref{eq.f}) has a mixed (asymptotically) stable fixed-point is NP-hard. Even if $A$ is a positive matrix.
\end{theorem}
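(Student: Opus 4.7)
The plan is to use a single reduction from \textbf{k-Clique} — the $2n \times 2n$ symmetric matrix $A$ of equation (\ref{eq.red1}) — and let the containment structure of Theorem \ref{thm.char} push the hardness simultaneously through all five stability notions (strict Nash stable, Nash stable, linearly stable, asymptotically stable, and stable).

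For the forward direction I would simply invoke Lemma \ref{lem.cl2sns}: a clique of size at least $k$ in $G$ produces a mixed strict Nash stable strategy $\pp$. By Theorem \ref{thm.char},
\[
\mbox{strict Nash stable} \subseteq \mbox{asymptotically stable} \subseteq \mbox{stable} \subseteq \mbox{Nash stable} = \mbox{linearly stable},
\]
so the same $\pp$ is a mixed witness under every one of the five notions. For the reverse direction, observe that a mixed profile which is stable in any one of these senses is, by the same chain, in particular a mixed Nash stable strategy; Lemma \ref{lem.ns2cl} then hands back a $k$-clique in $G$. Thus one reduction combined with Theorem \ref{thm.char} dispatches all five decision problems at once.

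For NP-membership of the (strict) Nash stable and linearly stable problems I would use the natural certificate: guess the support $S \subseteq [2n]$, solve in polynomial time the linear system $\{(A\xx)_i = v\}_{i \in S}$ together with $\sum_{i \in S} x_i = 1$ and $x_j = 0$ for $j \notin S$ (which yields a rational candidate of polynomial bit-size), then verify both the symmetric Nash equilibrium inequalities and the (definite) negative semi-definiteness of $T(A,\pp)$ via standard linear algebra. For (asymptotically) stable fixed points no comparable NP certificate is evident, since the local-maximum test of Theorem \ref{maximum} is inconclusive precisely when the Hessian of $\pi$ is merely semi-definite; this is exactly why the statement claims only NP-hardness in that case.

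Finally, to strengthen the result to \emph{positive} matrices I would apply Lemma \ref{lem.inv} and add a sufficiently large constant $c$ to every entry of $A$: the set of (strict) Nash stable strategies is invariant under this shift, and because Theorem \ref{thm.char} characterizes every other notion via (strict) local optima of $\pi$ — which are also preserved since $\pi$ changes only by the additive constant $c$ on $\Delta_n$ — all five notions survive the shift simultaneously. The main conceptual obstacle, and the reason a single reduction is able to succeed, is exactly Theorem \ref{thm.char}: without the inclusion \mbox{stable $\subseteq$ Nash stable} one would need a bespoke reverse reduction per notion, whereas with it Lemma \ref{lem.ns2cl} handles them uniformly.
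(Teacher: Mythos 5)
Your proposal matches the paper's proof essentially verbatim: the same single reduction from $k$-Clique via the matrix of equation (\ref{eq.red1}), with Lemma \ref{lem.cl2sns} supplying the forward direction, Lemma \ref{lem.ns2cl} the reverse, Theorem \ref{thm.char} collapsing all five stability notions onto the strict-Nash-stable/Nash-stable sandwich, and Lemma \ref{lem.inv} handling positivity. Your NP-membership certificate and your explanation of why the (asymptotically) stable cases are claimed only NP-hard likewise coincide with the paper's reasoning.
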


Note that since adding a constant to $A$ does not change its strict Nash stable and Nash stable
strategies \ref{lem.inv}, and since these two sandwiches all other stability notions, the second part of the above theorem follows. 

As noted in Remark \ref{rem}, matrix with iid entries from any continuous distribution has in expectation exactly one row with
dominating diagonal. One could ask does the problem become easier for this typical case. We answer negatively by extending all the
NP-hardness results to this case as well, where matrix $A$ has exactly one row whose diagonal entry dominates all other entries of the
row. See Appendix \ref{asec.ddhard} for details, and thus the next theorem follows.

\begin{theorem}\label{thm.1ns}
Given a symmetric matrix $A$, checking if
(i) game $(A,A)$ has a mixed (strict) Nash stable (or linearly stable) strategy is NP-complete.
(ii) dynamics (\ref{eq.f}) applied on $A$ has a mixed (asymptotically) stable fixed-point is NP-hard.
Even if $A$ is strictly positive, or has exactly one row with dominating diagonal.
\end{theorem}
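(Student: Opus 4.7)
The plan is to upgrade the reduction behind Theorem~\ref{thm.mainHard} so that the constructed fitness matrix has exactly one row whose diagonal entry strictly dominates the row, while still encoding $k$-clique. Starting from the $2n\times 2n$ matrix $A$ of equation~(\ref{eq.red1}), I append one auxiliary strategy $2n+1$ to form a $(2n+1)\times(2n+1)$ symmetric matrix $A'$ by setting $A'_{2n+1,2n+1}=H$ for a polynomially large $H>2h+2$, $A'_{2n+1,j}=A'_{j,2n+1}=h+1$ for $j\in\{n+1,\ldots,2n\}$, and $A'_{2n+1,j}=A'_{j,2n+1}=-\epsilon$ for $j\le n$. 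Placing the value $h+1$ above the original lower-block diagonal entries is precisely what kills their dominance: each row $i\in\{n+1,\ldots,2n\}$ now has $A'_{i,2n+1}=h+1>h=A'_{ii}$, so its diagonal is no longer a strict row maximum. Rows $i\le n$ still have diagonal $0$ while carrying $A'_{i,i+n}=k-1>0$, so their diagonals are also non-dominating. Only row $2n+1$ has $H$ as its unique row maximum.

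For the forward direction, given a $k$-clique in $G$ I take the strict Nash stable strategy $\pp$ supplied by Lemma~\ref{lem.cl2sns} (supported uniformly on a maximal clique inside $[n]$) and extend it by $p'_{2n+1}=0$. All previously checked Nash inequalities are unchanged, and the single new deviation satisfies $(A'\pp')_{2n+1}=-\epsilon<(k-1)/k=\pp^{\top}A\pp=(A'\pp')_i$ for $i\in SP(\pp)$, so strict Nash is preserved. Because $SP(\pp')\subseteq[n]$ the transformed matrix $T(A',\pp')$ coincides with $T(A,\pp)$ and is still negative definite, making $\pp'$ strict Nash stable of $A'$; Theorem~\ref{thm.char} then transfers this to asymptotic, linear, and ordinary stability.

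For the converse, a mixed Nash stable $\pp'$ of $A'$ cannot place positive probability on strategy $2n+1$: by Lemma~\ref{lem.dom} this would force $H=A'_{2n+1,2n+1}\le 2A'_{2n+1,j}\le 2(h+1)$ for every other $j\in SP(\pp')$, contradicting $H>2h+2$. Hence $SP(\pp')\subseteq[2n]$, the restriction of $\pp'$ to the first $2n$ coordinates is a mixed Nash stable of $A$, and Lemma~\ref{lem.ns2cl} produces a $k$-clique. By the chain of inclusions in Theorem~\ref{thm.char} this converse covers all five notions of stability simultaneously, so NP-hardness (and, for the Nash-stable variants, NP-completeness via polynomial-time checkability of the Nash condition together with negative (definite) semi-definiteness of $T(A',\pp')$) carries over to matrices with exactly one dominating-diagonal row. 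The strictly positive clause is obtained separately by adding a large constant to $A'$ and invoking Lemma~\ref{lem.inv} to preserve (strict) Nash stability along with the unique dominating-diagonal relation.

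The delicate point, and the main obstacle, is verifying that introducing a single huge diagonal entry $H$ does not destabilize the forward construction. Along the perturbation $\eta=\epsilon(e_{2n+1}-e_i)$ the Taylor expansion of $\pi$ produces a first-order term $2\epsilon((A'\pp')_{2n+1}-(A'\pp')_i)=-\Theta(\epsilon)$ and a second-order term $\eta^{\top}A'\eta\approx H\epsilon^2$. The negative first-order contribution wins whenever $\epsilon=O(1/H)$, so $\pp'$ remains a strict local maximum of $\pi$ in a neighborhood of radius $O(1/H)$, which is all that asymptotic stability requires via Theorem~\ref{maximum}. Once this is checked the rest is a direct combinatorial lift of the proof of Theorem~\ref{thm.mainHard}.
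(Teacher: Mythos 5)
Your construction is essentially the paper's own: the paper appends a single strategy $2n+1$ with diagonal $3h$, entries $h+\epsilon$ against strategies $n+1,\dots,2n$ (killing their dominance) and $0$ against $[n]$, then argues exactly as you do that Lemma~\ref{lem.dom} excludes $2n+1$ from any mixed Nash stable support, reducing back to Lemmas~\ref{lem.cl2sns} and~\ref{lem.ns2cl}, with positivity via Lemma~\ref{lem.inv}. Your choice of constants ($H>2h+2$, $h+1$, $-\epsilon$) is immaterial, and your final Taylor-expansion check is already subsumed by the definition of strict Nash stability together with Lemma~\ref{lem.sns2afp}, so the proposal is correct and matches the paper's route.
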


\noindent{\bf Hardness for Subset.} Another natural question is whether a particular allele is going to survive with positive
probability in the limit for a given fitness matrix. In Appendix \ref{asec.subset} we show that this may not be easy either,
by proving hardness for checking if there exists a stable strategy $\pp$ such that $i\in SP(\pp)$ for a given $i$. In
general, given a subset $S$ of pure strategies it is hard to check if $\exists$ a stable profile $\pp$ with $S\subseteq
SP(\pp)$.

\medskip

\noindent{\bf Survival of Diversity and Hardness.} As discussed in Section \ref{sec.char} 
checking if diversity survives in the limiting population of single locus diploid organism reduces to checking
``if $f$ converges to a {\em mixed} fixed-point with positive probability''.  In absence of clear characterization of the {\em mixed}
limit points of $f$ in terms of any of the stability definition, the hardness does not follow directly from the above result. Appendix
\ref{asec.divNhard} explains how above results can be combined to obtain the following theorem. Also see Remark \ref{rem.cg} on
complexity of decision problem for general Nash equilibrium in coordination games.

\begin{theorem}\label{thm.main}
Given a fitness matrix $A$ for a diploid organism with single locus, it is NP-hard to decide if, under evolution, diversity will
survive (by converging to a specific mixed equilibrium with positive probability) when starting allele frequencies are picked iid from uniform distribution. Also, deciding if a given
allele will survive is NP-hard.  
\end{theorem}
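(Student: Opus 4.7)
The plan is to derive Theorem~\ref{thm.main} as a corollary of the hardness reductions already established in Theorems~\ref{thm.mainHard} and~\ref{thm.1ns}, by bridging ``mixed stable fixed-point exists'' with ``diversity survives with positive probability under iid uniform initial conditions''. The central observation is that the set $\CL$ of fixed-points with positive-measure basin of attraction is sandwiched by Theorem~\ref{thm.char} and Theorem~\ref{generic_zero} as
\[
\text{asymptotically stable} \ \subseteq \ \CL \ \subseteq \ \text{linearly stable} \ = \ \text{Nash stable}.
\]
So to reduce survival-of-diversity to the stability decision problems, I only need both inclusions to be tight enough to preserve the ``mixed'' property across the reduction.

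First I would reuse the exact matrix $A$ from equation~\eqref{eq.red1} built from an instance $(G,k)$ of $k$-clique, and argue the two directions. If $G$ has a $k$-clique, Lemma~\ref{lem.cl2sns} produces a mixed strict Nash stable profile $\pp$, which by Lemma~\ref{lem.sns2afp} is asymptotically stable. By Definition~\ref{def.afp} the basin of attraction of $\pp$ contains an open neighborhood of $\pp$ in $\Delta_n$ and therefore has positive Lebesgue measure in $\Delta_n$; hence with strictly positive probability under an iid uniform initial distribution, the dynamics~\eqref{eq.f} converge to the mixed profile $\pp$, and diversity survives. Conversely, suppose diversity survives with positive probability, i.e., a positive-measure set of initial conditions converges to some mixed fixed-point. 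Theorem~\ref{generic_zero} discards all convergence to linearly unstable fixed-points as measure-zero, so there must exist at least one \emph{mixed} fixed-point in $\CL$, which in particular is linearly stable, equivalently Nash stable by Lemma~\ref{lem.nseqls}. Applying Lemma~\ref{lem.ns2cl} to this mixed Nash stable profile yields a $k$-clique in $G$. The two directions together give the first claim.

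For the second statement — NP-hardness of deciding whether a specified allele $i$ survives with positive probability — I would run the same argument but with the subset-hardness variant referenced in Appendix~\ref{asec.subset}. That reduction embeds the clique instance so that one designated pure strategy $i^*$ lies in the support of some mixed (strict) Nash stable profile iff the clique exists. The forward direction again uses that a strict Nash stable is asymptotically stable, so a positive-measure basin converges to a profile in whose support $i^*$ appears, meaning the frequency of allele $i^*$ remains bounded away from zero in the limit with positive probability. The converse uses Theorem~\ref{generic_zero} followed by the subset version of Lemma~\ref{lem.ns2cl} to extract the clique from any Nash stable profile whose support contains $i^*$.

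The main technical obstacle is the middle inclusion $\CL \subseteq$ linearly stable: one must be careful that a mixed limit point reached on a positive-measure set of initial conditions is genuinely captured by the Nash stable characterization and not merely a non-hyperbolic, ambiguous case. This is precisely what Theorem~\ref{generic_zero} buys us, and together with Lemma~\ref{lem.nseqls} it closes the logical loop. A secondary subtlety is that $\CL$ need not coincide with either asymptotically stable or Nash stable on the nose (as the toy examples $g,h,d$ in Section~2 show); the reduction handles this by producing a profile that is simultaneously \emph{strict} Nash stable in the YES case (forcing membership in $\CL$) and by ruling out \emph{all} mixed Nash stable profiles in the NO case (forcing $\CL$ to contain no mixed point). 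Once these two extremal properties of the constructed $A$ are in place, the translation from the algebraic decision problem to the probabilistic survival statement is immediate.
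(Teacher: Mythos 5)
Your proposal is correct and follows essentially the same route as the paper: it uses the matrix of (\ref{eq.red1}), gets the YES direction from Lemma~\ref{lem.cl2sns} via strict Nash stable $\Rightarrow$ asymptotically stable $\Rightarrow$ positive-measure basin, and gets the NO direction from Theorem~\ref{generic_zero} plus Lemma~\ref{lem.nseqls} and Lemma~\ref{lem.ns2cl}, with the subset construction of Appendix~\ref{asec.subset} handling allele survival. The key observation you highlight --- that the single reduction is simultaneously tight at both ends of the sandwich (strict Nash stable in the YES case, no mixed Nash stable in the NO case), which is what lets one bridge the gap between $\CL$ and either stability notion --- is exactly the argument of Appendix~\ref{asec.divNhard}.
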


Finally we show that even though checking is hard, on an average things are not that bad. Specifically we show that if every entry of a
fitness matrix is i.i.d. from a continuous distribution, then the probability that diversity will survive surely is pretty high. 
We refer the readers to Appendix \ref{asec.survive} for details.

\begin{theorem}\label{ithm3}
If the fitness matrix is picked at uniformly at random then with significantly high
probability, at least $\frac{1}{3}$, diversity will surely survive. 
\end{theorem}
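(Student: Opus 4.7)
The plan is to reduce sure survival to a purely combinatorial question about the random matrix and then bound it via a carefully truncated Bonferroni inequality. As noted just before the theorem, sure survival is implied by the absence of a pure symmetric NE of $(A,A)$, i.e.\ by the condition that no diagonal entry $A_{ii}$ dominates its own row. Let
\[
B_i \defeq \{A_{ii} > A_{ij} \text{ for all } j\neq i\}
\]
be the event that row $i$'s diagonal does dominate (strict inequality holds almost surely since the entries are continuous iid). Sure survival is implied by $\bigcap_i B_i^c$, so it suffices to show $\Pr\!\left[\bigcup_{i=1}^n B_i\right]\leq 2/3$.

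The core computation is of $q_k \defeq \Pr\!\left[\bigcap_{i \in S} B_i\right]$ for any $|S|=k$; by exchangeability this depends only on $k$. WLOG assume uniform entries on $[0,1]$. Conditioning on the diagonal values $(a_i)_{i\in S}$, each off-diagonal $A_{ij}$ with $\{i,j\}\subseteq S$ contributes a factor $\min(a_i,a_j)$ (it must lie below both diagonals it sits between), while each $A_{ij}$ with $i\in S$, $j\notin S$ contributes just $a_i$, so
\[
q_k = \int_{[0,1]^k} \prod_{\{i,j\}\subseteq S} \min(a_i,a_j) \cdot \prod_{i\in S} a_i^{n-k}\, d\mathbf{a}.
\]
Restricting to the ordered region $a_1>\cdots>a_k$ (and multiplying by $k!$ for symmetry) turns the integrand into $\prod_{j=1}^k a_j^{n-k+j-1}$, and an iterated one-dimensional integration yields the closed form
\[
q_k = \frac{2^k\,(2n-k)!}{(2n)!}.
\]

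The final step is Bonferroni's inequality. Truncating the inclusion--exclusion expansion at $r=3$ (an odd cutoff) gives an upper bound valid for every $n$:
\[
\Pr\!\left[\bigcup_i B_i\right] \leq S_1 - S_2 + S_3, \qquad S_j = \binom{n}{j} q_j.
\]
Plugging in the closed form yields $S_1 = 1$, $S_2 = (n-1)/(2n-1)$, and $S_3 = (n-2)/[3(2n-1)]$; a one-line cancellation gives
\[
S_1 - S_2 + S_3 = \frac{3(2n-1)-3(n-1)+(n-2)}{3(2n-1)} = \frac{2}{3},
\]
\emph{independently of $n$}, so $\Pr\!\left[\bigcap_i B_i^c\right]\geq 1/3$, proving the theorem. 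The conceptual obstacle is choosing the truncation level and establishing the closed form for $q_k$ in the presence of the symmetric correlations: the trivial union bound ($r=1$) gives only $S_1 = 1$, truncation at $r=2$ would give a lower rather than upper bound, and the $n$-independent collapse at exactly $r=3$ is the crux that makes the same constant $1/3$ work uniformly for every matrix size.
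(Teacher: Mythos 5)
Your proof is correct, and at the top level it follows the same skeleton as the paper's own argument (Lemma \ref{lem.pp}): reduce ``diversity surely survives'' to the event that every diagonal entry is dominated (this reduction rests on Theorem \ref{generic_zero} and Lemma \ref{only_mixed}, which you correctly import from the discussion preceding the theorem rather than reprove), and then bound the probability of the complementary union by the third-order Bonferroni truncation $S_1-S_2+S_3$. Where you genuinely depart from the paper is in how those terms are evaluated. The paper only estimates them: it lower-bounds $\Pr[E_i\cap E_j]$ by a permutation-counting argument yielding $\frac{2}{n(n-1)}\left(\frac12-o(1)\right)$ and upper-bounds $\Pr[E_i\cap E_j\cap E_k]$ by $\frac{1}{(n-3)^3}$ via the independent sub-events $D_i$, so it concludes only $\frac13-o(1)$. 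You instead condition on the diagonal values and integrate, obtaining the exact closed form $q_k=2^k\,(2n-k)!/(2n)!$ --- I checked the ordered-region computation and the telescoping denominators $mn-\binom{m}{2}=\frac{m(2n-m+1)}{2}$, and the formula is right --- which makes $S_1-S_2+S_3$ collapse to exactly $\frac23$ for every $n$. This is a cleaner and strictly stronger conclusion: it gives the constant $\frac13$ uniformly in $n$ (matching the theorem's stated bound, which the paper's $\frac13-o(1)$ technically only achieves asymptotically), and it identifies the exact joint distribution structure that the paper's counting argument only approximates.
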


\section{Acknowledgments}

\noindent
We would like to thank Prasad Tetali for helpful discussions and suggestions.

\bibliographystyle{plain}
\bibliography{sigproc5}

\newpage
\noindent \Large\textbf{Appendix}
\normalsize
\appendix

\section{Missing Proofs of Section \ref{sec.char}}\label{asec.char}
\subsection{Equations of Jacobian}
Since $f$ is defined on $n$ variables while its domain is $\Delta_n$ which is of $n-1$ dimension, we consider a projected Jacobian by replacing strategy $t$ with $x_t>0$ by $1-\sum_{i\neq t}x_i$ in $f$.

\begin{align*}
J_{ii}^{\xx} &= \frac{(A\xx)_{i}}{\xx^TA\xx}+x_{i}\frac{(A_{ii}-A_{it})(\xx^TA\xx) - 2(A\xx)_{i}((A\xx)_{i}-(A\xx)_{t})  }{(\xx^TA\xx)^2}\\
J_{ij}^{\xx} &= x_{i}\frac{(A_{ij}-A_{it})(\xx^TA\xx) - 2(A\xx)_{i}((A\xx)_{j}-(A\xx)_{t})  }{(\xx^TA\xx)^2}
\end{align*}

\noindent
If $\xx$ is a fixed-point of $f$, then using Remark \ref{lem.fpchar} the above simplifies to,

\begin{align*}
J_{ii}^{\xx} &= 1+x_{i}\frac{(A_{ii}-A_{it})}{\xx^TA\xx} \textrm{ if }x_{i}>0\ \  \textrm{ and }\ \  J_{ii}^{\xx} =
\frac{(A\xx)_{i}}{\xx^TA\xx}\textrm{ if }x_{i}=0\\
J_{ij}^{\xx} &= x_{i}\frac{A_{ij}-A_{it}   }{\xx^TA\xx} \textrm{ if
}x_{i},x_{j}>0\ \  \textrm{ and }\ \  J^{\xx}_{ij} = 0\textrm{ if }x_{i}=0
\end{align*}
\subsection{Proof of Theorem \ref{generic_zero}}\label{app.conv}
The proof of the following theorem follows the proofs that can be found in \cite{Soda15a}
To prove Theorem \ref{generic_zero}, we will make use of the following important theorem in dynamical systems.

\begin{theorem}\label{athm.manifold}(Center and Stable Manifolds, p. 65 of \cite{shub})
Let $\textbf{p}$ be a fixed point for the $C^r$ local diffeomorphism $h: U \to \mathbb{R}^n$ where $U \subset \mathbb{R}^n$ is an open
(full-dimensional) neighborhood of $\pp$ in $\mathbb{R}^n$ and $r \geq 1$. Let $E^s \oplus E^c \oplus E^u$ be the invariant splitting
of $\mathbb{R}^n$ into generalized eigenspaces of $Dh(\textbf{p})$\footnote{Jacobian of $h$ evaluated at $\pp$} corresponding to
eigenvalues of absolute value less than one, equal to one, and greater than one. To the $Dh(\textbf{p})$ invariant subspace $E^s\oplus
E^c$ there is an associated local $h$ invariant $C^r$ embedded disc $W^{sc}_{loc}$ of dimension $dim(E^s \oplus E^c)$, and 
ball $B$ around $\pp$ such that:
\begin{equation} h(W^{sc}_{loc}) \cap B \subset W^{sc}_{loc}.\textrm{  If } h^n(\textbf{x}) \in B \textrm{ for all }n \geq 0,
\textrm{ then }\textbf{x} \in W^{sc}_{loc}
\end{equation}
\end{theorem}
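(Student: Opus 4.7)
The plan is to prove the Center and Stable Manifolds theorem by the classical graph transform method of Hadamard--Perron. First I would translate coordinates so that $\pp = 0$, choose a basis adapted to the splitting $E^{sc} \oplus E^u$ with $E^{sc} \defeq E^s \oplus E^c$, and write
\begin{equation*}
h(y, z) = \bigl(Ay + f(y,z),\; Bz + g(y,z)\bigr),
\end{equation*}
where $A = Dh(0)|_{E^{sc}}$ has spectrum in the closed unit disc, $B = Dh(0)|_{E^u}$ is expanding (so invertible with $\|B^{-1}\| < 1$ in an adapted norm), and $f, g$ vanish to first order at the origin. Multiplying $f, g$ by a $C^\infty$ cut-off equal to $1$ on $B_{\rho/2}$ and supported in $B_\rho$, I obtain a globally defined modification $\tilde h$ that coincides with $h$ on $B_{\rho/2}$ and whose nonlinear parts $\tilde f, \tilde g$ have $C^1$ norm bounded by any prescribed $\eta > 0$ once $\rho$ is small enough.

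Next I would seek $W^{sc}_{loc}$ as the graph of a Lipschitz function $\phi: E^{sc} \to E^u$ with $\phi(0) = 0$. Invariance of $\mathrm{graph}(\phi)$ under $\tilde h$ is equivalent, after using invertibility of $B$, to the fixed-point equation
\begin{equation*}
(T\phi)(y) \defeq B^{-1}\bigl[\phi\bigl(Ay + \tilde f(y, \phi(y))\bigr) - \tilde g(y, \phi(y))\bigr] = \phi(y).
\end{equation*}
On the Banach space $\mathcal{X}$ of bounded Lipschitz maps $\phi: E^{sc} \to E^u$ with Lipschitz constant $\le 1$ and $\phi(0) = 0$, the leading factor $B^{-1}$ supplies contraction while the smallness of $\tilde f, \tilde g$ ensures that $T$ maps $\mathcal{X}$ into itself and is a strict contraction in the sup norm once $\eta$ is chosen small enough. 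The Banach fixed-point theorem then yields a unique $\phi^*$, whose graph intersected with a suitable ball $B$ provides the disc $W^{sc}_{loc}$; the tangency $D\phi^*(0) = 0$ comes from differentiating the fixed-point equation at $0$ and observing that the resulting Sylvester relation $BX = XA$ has only the zero solution since $\sigma(A)$ and $\sigma(B)$ are disjoint.

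To promote the solution from Lipschitz to $C^r$ I would run a fiber-contraction argument inductively in $k = 1, \ldots, r$. At each stage one sets up a formal derivative operator on the bundle of candidate $k$-jets over $\phi^*$; the spectral gap between $A$ and $B^{-1}$ makes this operator a contraction on the fibers in a suitable norm, and its unique fixed point is then identified with the genuine $k$-th derivative of $\phi^*$. Local invariance $h(W^{sc}_{loc}) \cap B \subset W^{sc}_{loc}$ is immediate from $T\phi^* = \phi^*$ and $h \equiv \tilde h$ on $B$. For the trapping clause, given $x = (y_0, z_0)$ whose forward orbit stays in $B$, I would compare $x_n = h^n(x)$ with the on-manifold orbit $x_n^\star \defeq h^n(y_0, \phi^*(y_0))$ and show that the deviation $w_n \defeq z_n - \phi^*(y_n)$ satisfies a recursion $w_{n+1} = B w_n + O(\eta)\,\|x_n - x_n^\star\|$; because $B$ expands and the nonlinear correction is small, $\|w_n\|$ grows geometrically unless $w_0 = 0$, contradicting boundedness of the orbit in $B$.

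The main obstacle is the $C^r$ regularity step: the contraction-mapping argument only delivers a Lipschitz graph, and upgrading to $C^r$ smoothness requires the fiber contraction theorem of Hirsch--Pugh--Shub together with careful book-keeping of how the chain rule and the Fa\`a di Bruno formula interact with $T$. A secondary subtlety is that, unlike the purely stable manifold (which is intrinsic and unique), the center-stable manifold depends on the choice of cut-off $\chi$; different cut-offs yield different $W^{sc}_{loc}$. The uniqueness assertion of the theorem should therefore be read as the \emph{trapping characterization}: any such local manifold captures \emph{all} orbits that remain in $B$ for all forward time, which is precisely the content of the last clause.
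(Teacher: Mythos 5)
The paper does not prove this statement: Theorem \ref{athm.manifold} is imported verbatim from p.~65 of Shub \cite{shub} and used as a black box in the proof of Theorem \ref{generic_zero}, so there is no internal proof to compare against. Your sketch is the standard Hadamard--Perron graph-transform argument --- globalize the nonlinearity with a cut-off, solve the invariance equation for $\phi$ by contraction driven by the expansion of $B = Dh(\pp)|_{E^u}$, upgrade from Lipschitz to $C^r$ by the Hirsch--Pugh--Shub fiber contraction (the gap condition holds for every $r$ here since the center-stable spectrum has modulus at most $1$), and derive the trapping clause from an expansive recursion --- which is essentially the proof in the cited source, so the approach is correct. One small repair: in the trapping step it is cleaner to track the vertical deviation $w_n = z_n - \phi^*(y_n)$ of the actual orbit from the graph, for which the invariance equation gives $\norm{w_{n+1}} \geq (\mu - 2\eta)\norm{w_n}$ with $\mu = \min|\sigma(B)| > 1$, rather than the distance to a reference on-manifold orbit whose horizontal component drifts; and your closing observation that $W^{sc}_{loc}$ is not unique but that the trapping characterization is the real content of the theorem is exactly right --- it is precisely the property the paper exploits to bound the measure of initial conditions converging to linearly unstable fixed points.
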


To use the theorem above we need to project the map of the dynamics \ref{eq.f} to a lower dimensional space.  We consider the (diffeomorphism) function
$g$ that is a projection of the points $\xx \in \mathbb{R}^{n}$ to $\mathbb{R}^{n-1}$ by excluding a specific
(the "first") variable. We denote this projection of $\Delta_{n}$ by $g(\Delta_{n})$, i.e., $\xx \to_ g (\xx')$ where $\xx'=(x_2,\dots,x_n)$. Further, we define the fixed-point dependent projection function $z_{\rr}$ where we remove one variable $x_{t}$ so that $\rr_{t}>0$ (like function $g$ but the removed strategy must be chosen with positive probability at $\rr$).
\\\\Let $f$ be the map of dynamical system (\ref{eq.f}). For a linearly unstable fixed point $\textbf{r}$ we consider the function
$\psi_{\textbf{r}}(\vv) = z_{\textbf{r}}\circ f \circ z_{\textbf{r}}^{-1}(\vv)$ which is $C^1$ local diffeomorphism (due to the
point-wise convergence of $f$ \cite{akin} we know that the rule of the dynamical system is a diffeomorphism),
with $\vv \in R^{n-1}$.
Let $B_{\rr}$ be the (open) ball that is derived from Theorem \ref{athm.manifold} and we consider the union of these balls
(transformed in $\mathbb{R}^{n-1}$) $$A = \cup _{\textbf{r}}A_{\textbf{r}}$$ where $A_{\textbf{r}} =
g(z_{\textbf{r}}^{-1}(B_{\textbf{r}}))$ ($z^{-1}_{\textbf{r}}$ "returns" the set $B_{\textbf{r}}$ back to
$\mathbb{R}^{n}$). Set $A_{\textbf{r}}$ is an open subset of $\mathbb{R}^{n-1}$ (by continuity of $z_{\textbf{r}}$).  Taking
advantage of separability of $\mathbb{R}^{n-1}$ we have the following theorem.

\begin{theorem} (Lindel\H{o}f's lemma) For every open cover there is a countable subcover.
\end{theorem}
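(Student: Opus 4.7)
The plan is to prove Lindel\"of's lemma in exactly the form it is invoked above: every open cover of a subset of $\mathbb{R}^{n-1}$ admits a countable subcover. The essential input is that $\mathbb{R}^{n-1}$ is \emph{second-countable}---the collection $\mathcal{B}$ of open balls with rational centers and positive rational radii is a countable base for its usual topology. I would record this as a preliminary observation (an immediate consequence of the density of $\mathbb{Q}$ in $\mathbb{R}$) and then build the main argument on top of it. This is where the separability of $\mathbb{R}^{n-1}$ mentioned in the paragraph preceding the theorem actually enters.

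Concretely, let $\{U_\alpha\}_{\alpha\in I}$ be an open cover of a set $X\subseteq\mathbb{R}^{n-1}$ (in the application above, $X$ is the union $\bigcup_{\rr} A_{\rr}$ over linearly unstable fixed points). First, for each $x\in X$ fix some $\alpha(x)\in I$ with $x\in U_{\alpha(x)}$. Since $U_{\alpha(x)}$ is open and $\mathcal{B}$ is a base for the topology, there exists $B_x\in\mathcal{B}$ with $x\in B_x\subseteq U_{\alpha(x)}$. Collect these into $\mathcal{B}'=\{B_x : x\in X\}\subseteq\mathcal{B}$; being a subfamily of the countable set $\mathcal{B}$, the family $\mathcal{B}'$ is itself countable.

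Next, for each $B\in\mathcal{B}'$ pick (by countable choice) one index $\beta(B)\in I$ with $B\subseteq U_{\beta(B)}$; such a $\beta(B)$ exists because, by construction, $B=B_x$ for some $x\in X$ and then $B\subseteq U_{\alpha(x)}$, so at least $\alpha(x)$ is an eligible choice. The resulting family $\{U_{\beta(B)}\}_{B\in\mathcal{B}'}$ is indexed by the countable set $\mathcal{B}'$ and it still covers $X$: any $x\in X$ lies in $B_x\in\mathcal{B}'$ and $B_x\subseteq U_{\beta(B_x)}$, so $x\in U_{\beta(B_x)}$. This produces the required countable subcover.

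I do not anticipate a real obstacle here; the whole argument hinges on one honest idea, namely routing the uncountable cover through a countable base. The only subtlety worth flagging explicitly in the write-up is that the $B_x$ must be drawn from $\mathcal{B}$ rather than from the cover $\{U_\alpha\}$ itself---that is what ensures cardinality is controlled at the level of the countable base rather than at the level of $X$. Everything else, including the invocation of countable choice and the final covering check, is bookkeeping.
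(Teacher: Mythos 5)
Your proof is correct: it is the standard derivation of Lindel\"of's lemma from the second countability of $\mathbb{R}^{n-1}$ (countable base of rational balls, route the cover through the base, then choose one cover element per basic set). The paper states this as a classical fact without proof, appealing only to separability of $\mathbb{R}^{n-1}$, so your argument simply supplies the textbook justification the paper implicitly relies on; there is no discrepancy to report.
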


Therefore due to the above theorem, we can find a countable subcover for $A$, i.e.,
there exists fixed-points $\rr_1,\rr_2,\dots$ such that $A = \cup _{m=1}^{\infty}A_{\textbf{r}_{m}}$.
\\\\ For a $t \in \mathbb{N}$ let $\psi_{t,\textbf{r}}(\vv)$ the point after $t$ iteration of dynamics (\ref{eq.f}), starting
with $\vv$, under projection $z_{\rr}$, i.e., $\psi_{t,\textbf{r}}(\vv) = z_{\textbf{r}}\circ f^{t} \circ
z_{\textbf{r}}^{-1}(\vv)$.
If point $\vv \in int \; g(\Delta_{n})$ (which corresponds to $g^{-1}(\vv)$ in our original $\Delta_{n}$) has a linearly unstable fixed point
as a limit, there must exist a $t_{0}$ and $m$ so that $\psi_{t,\textbf{r}_{m}} \circ z_{\textbf{r}_{m}} \circ
g^{-1}(\vv) \in B_{\textbf{r}_{m}}$ for all $t \geq t_{0}$ (we have point-wise convergence from \cite{akin}) and therefore
again from Theorem \ref{athm.manifold} we get that
$\psi_{t_{0},\textbf{r}_{m}} \circ z_{\textbf{r}_{m}} \circ g^{-1}(\vv) \in W_{loc}^{sc}(\rr_m)$,
hence $\vv \in g\circ z^{-1}_{\textbf{r}_{m}} \circ
\psi^{-1}_{t_{0},\textbf{r}_{m}}(W_{loc}^{sc}(\rr_m))$.\\\\ Hence the set of points in $int \;
g(\Delta_{n})$ whose $\omega$-limit has a linearly unstable equilibrium is a subset of
\begin{equation}
C=  \cup_{m=1}^{\infty} \cup_{t=1}^{\infty} g\circ z^{-1}_{\textbf{r}_{m}} \circ
\psi^{-1}_{t,\textbf{r}_{m}}(W_{loc}^{sc}(\rr_m))
\end{equation}

Since $\rr_m$ is linearly unstable, it has $dim(E^u)\ge 1$, and therefore dimension of $W_{loc}^{sc}(\rr_m)$ is at most $n-2$. Thus,
the manifold $W_{loc}^{sc}(\rr_m)$ has Lebesgue measure zero in $\mathbb{R}^{n-1}$. Finally since $g\circ z^{-1}_{\rr_{m}} \circ
\psi^{-1}_{t,\rr_{m}} : \mathbb{R}^{n-1} \to \mathbb{R}^{n-1}$ is continuously differentiable, $\psi_{t, \textbf{r}_{m}}$ is $C^1$ and
locally Lipschitz (see \cite{perko} p.71). Therefore using Lemma \ref{lips} below it preserves the null-sets, and thereby we get that
$C$ is a countable union of measure zero sets, i.e., is measure zero as well, and Theorem \ref{generic_zero} follows.

\begin{lemma}\label{lips} Appendix of \cite{Soda15a}. Let $g: \mathbb{R}^m \to \mathbb{R}^m$ be a locally  Lipschitz   function, then
$g$ is null-set preserving, i.e., for $E \subset \mathbb{R}^m$ if $E$ has measure zero then $g(E)$ has also measure zero.  \end{lemma}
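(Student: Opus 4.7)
The plan is to reduce the locally Lipschitz case to the globally Lipschitz case via a countable cover, and then use the elementary fact that a Lipschitz map with constant $L$ blows up Lebesgue outer measure by at most a factor of $L^m$. The conclusion then follows because a countable union of null sets is null.

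First I would exploit the local hypothesis: by definition, for every $\xx \in \mathbb{R}^m$ there is an open ball $U_\xx$ around $\xx$ on which $g$ is Lipschitz with some constant $L_\xx < \infty$. Since $\mathbb{R}^m$ is Lindel\"of (indeed second countable), the cover $\{U_\xx\}_{\xx \in \mathbb{R}^m}$ admits a countable subcover $\{U_k\}_{k \in \mathbb{N}}$ with associated Lipschitz constants $L_k$. Writing $E_k = E \cap U_k$, we have $E = \bigcup_k E_k$ and hence $g(E) = \bigcup_k g(E_k)$. Because a countable union of null sets is null, it suffices to prove $g(E_k)$ has measure zero for each $k$, i.e., we have reduced to the case where $g$ is Lipschitz on a bounded open set with a single constant $L$.

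Next, fix such a $k$ and write $E' = E_k$, $L = L_k$. Given $\epsilon > 0$, using that $E'$ has Lebesgue measure zero, cover $E'$ by countably many open cubes (or balls) $\{Q_j\}_{j \in \mathbb{N}}$ contained in $U_k$, of diameters $d_j$, such that $\sum_j d_j^m < \epsilon$. The Lipschitz bound gives
\[
\operatorname{diam}(g(Q_j)) \;\le\; L \cdot \operatorname{diam}(Q_j) \;=\; L\, d_j,
\]
so $g(Q_j)$ is contained in a ball (hence in a cube) of diameter $L d_j$, which has Lebesgue measure at most $C_m L^m d_j^m$ for a dimensional constant $C_m$. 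Summing,
\[
\lambda^*\!\big(g(E')\big) \;\le\; \sum_j \lambda\!\big(g(Q_j)\big) \;\le\; C_m L^m \sum_j d_j^m \;<\; C_m L^m \epsilon,
\]
where $\lambda^*$ denotes Lebesgue outer measure. Since $\epsilon$ was arbitrary, $\lambda^*(g(E')) = 0$, so $g(E')$ is Lebesgue measurable and has measure zero. Combining over $k$ via countable additivity finishes the proof.

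The argument is essentially routine; the only mildly subtle point is the passage from the local to the global Lipschitz regime, which is why the Lindel\"of step is separated out explicitly. Everything else is bookkeeping with covers and the elementary scaling inequality $\operatorname{diam}(g(Q)) \le L \operatorname{diam}(Q)$, which is the only place the hypothesis on $g$ is used.
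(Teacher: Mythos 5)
Your proof is correct and is the standard argument (Lindel\"of reduction to the globally Lipschitz case, then the diameter-scaling estimate over a cover of a null set by small cubes); the paper itself gives no proof of this lemma, deferring entirely to the appendix of the cited reference, which uses the same routine argument. The only cosmetic point is that you need not insist the cubes $Q_j$ lie inside $U_k$: the Lipschitz estimate applied to pairs of points of $Q_j \cap U_k$ already bounds $\operatorname{diam}\bigl(g(Q_j \cap U_k)\bigr)$ by $L_k d_j$, which suffices because $E_k \subset U_k$.
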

\subsection{Proof of Lemma \ref{lem.t}}
\begin{proof} It suffices to assume that $\xx$ is fully mixed. Let $\zz$ be any vector with $\sum_i z_i=0$ and define the vector $\ww = (z_1-z_n,...,z_{n-1}-z_n)$ with support $n-1$. It is clear that $\zz^TA\zz \leq 0$ iff $\ww^TT(A,\xx)\ww \leq 0$. So if $\zz^TA\zz \leq 0$ for all $\zz$ with $\sum_i z_i=0$ then $T(A,\xx)$ is negative semidefinite. If there exists a $\zz$ with with $\sum_i z_i=0$ s.t $\zz^TA\zz > 0$  then $\ww^TT(A,\xx)\ww >0$, so $T(A,\xx)$ is not negative semidefinite.    
\end{proof}
\subsection{Proof of Lemma \ref{lem.sfp2ns}}
\begin{proof} There is a similar proof in \cite{lyubich} for a modified claim. Here we connect the two. First of
all, observe that if $(\rr,\rr)$ is not Nash equilibrium for the $(A,A)$ game then there exists a $j$ such that $r_{j}=0$
and $(A\rr)_{j} > \rr^TA\rr$. But $\frac{(A\rr)_{j} }{\rr^TA\rr}>1$ is an eigenvalue of $J^{\rr}$.

Additionally, since $\rr$ is stable, using Theorem \ref{maximum} we have that $\rr$ is a local maximum of $\pi(\xx)=\xx^T
A\xx$, say in a neighborhood $\|\xx-\rr\|< \delta$. Let $\yy$ be a vector with support subset of the support of $\rr$ such
that $\sum_{i}y_{i}=0$.  Firstly we rescale it w.l.o.g so that $\|\yy'\|< \delta$ and by setting $\zz = \yy'+\rr$ we have
that $$\zz^TA\zz = \yy'^TA\yy' + \rr^TA\rr + 2\yy'^TA\rr$$ But $\yy'^TA\rr=0$ since $(A\rr)_{i}=(A\rr)_{j}$ for all $i,j$
s.t $r_{i},r_{j}>0$, $\sum_{i}y'_{i}=0$ , $\yy'$ has support subset of the support of $\rr$. Therefore
$\yy'^TA\yy'+\rr^TA\rr = \zz^TA\zz \leq \rr^TA\rr$, thus $\yy'^TA\yy' \leq 0$.
Hence proved using Lemma \ref{lem.t}.
\end{proof}
\subsection{Proof of Lemma \ref{lem.sns2afp}}
\begin{proof} The proof can be found in in \cite{lyubich}. The sufficient conditions for a fixed point to be asymptotically stable in the proof are exactly the assumptions for a fixed-point to be strict Nash stable. 
\end{proof}
\subsection{Proof of Lemma \ref{lem.nseqls}} 
\begin{proof} Let $t$ be the removed strategy (variable $x_{t}$) to create $J^{\rr}$ (with $r_{t}>0$). For every $i$ such that $r_{i}=0$ we have that $J^{\rr}_{ii} = \frac{(A\rr)_{i}}{\rr^TA\rr }$ and $ J^{\rr}_{ij} =0$ for all $j \neq i$. Hence the corresponding eigenvalues of $J^{\rr}$ of the rows $i$ that do not belong in the support of $\rr$ ($i$-th row has all zeros except the diagonal entry $J^{\rr}_{ii} = \frac{(A\rr)_{i}}{\rr^TA\rr }$) are $\frac{(A\rr)_{i}}{\rr^TA\rr }>0$ which are less than or equal to 1 iff $(\rr,\rr)$ is a NE of the game $(A,A)$.  

Let $\mathbb{J}^{\rr}$ be the submatrix of $J^{\rr}$ by removing all columns/rows $j \notin SP(\rr)$. Let $A'$ be the submatrix of $A$ by removing all columns/rows $j \notin SP(\rr)$. It suffices to prove that $T(A,\rr)$ is negative semi-definite iff $\mathbb{J}^\rr$ has eigenvalues with absolute value at most 1. 

Let $k = |SP(\rr)|$ and the $k \times k$ matrix $L$ with $L_{ij}= r_{i} \frac{A_{ij}}{\rr^TA\rr}$ and $i,j \in SP(\rr)$. Observe that $L$ is stochastic and also symmetrizes to $L'_{ij}= \sqrt{r_{i}r_{j}} \frac{A_{ij}}{\rr^TA\rr}$, i.e $L,L'$ have the same eigenvalues. Therefore $L$ has an eigenvalue 1 and the rest eigenvalues are real between $(-1,1)$ and also $A'$ has eigenvalues with the same signs as $L'$. 

Finally, we show that $det(L - \lambda I_{k}) = (1-\lambda) \times det(\mathbb{J}'^{\rr}  - \lambda I_{k-1})$ with $\mathbb{J}'^{\rr} = \mathbb{J}^{\rr} - I_{k-1}$, namely $L$ has the same eigenvalues as $\mathbb{J}'^{\rr}$ plus eigenvalue $1$. It is true for a square matrix that by adding a multiple of a row/column to another row/column, the determinant stays invariant. We consider $ L - \lambda I_{k}$ and we do the following: We subtract the $t$-th column  from every other column and on the resulting
matrix, we add every row to the $t$-th row. The resulting matrix $R$  has the property that $det(R) =
(1-\lambda)\times det(\mathbb{J}'^{\rr} - \lambda I_{k-1})$ and also $det(R) = det(L - \lambda I_{k})$. 

From above we get that if $\mathbb{J}^{\rr}$ has eigenvalues with absolute value at most 1, then $\mathbb{J}^{\rr} - I_{k-1}$ has eigenvalues in $[-2,0]$ (we know that are real from the fact that $L$ is symmetrizes to $L'$), hence $L'$ has eigenvalue 1 and the rest eigenvalues are in $[-2,0]$ (since $L$ is stochastic, the rest eigenvalues lie in $(-1,0]$ ). Therefore $A'$ has positive inertia 1 (see Sylvester's law of inertia) and the one direction follows. For the converse, $T(A,\rr)$ being negative semi-definite implies $A'$ has positive inertia 1, thus $L'$ and so $L$ have one eigenvalue positive (which is 1) and the rest non-positive (lie in $(-1,0]$ since $L$ is stochastic). Thus $\mathbb{J}^{\rr}- I_{k-1}$ has eigenvalues in $(-1,0]$ and therefore $\mathbb{J}^{\rr}$ has eigenvalues in $(0,1]$ (i.e with absolute value at most 1).
\end{proof}

\subsection{Proof of Lemma \ref{lem.inv}}
\begin{proof}
For equivalence of (strict) Nash stable points, the set of (strict) symmetric NE are same for games $(A,A)$ and $(B,B)$, and
matrix $T(A,\xx)=T(B,\xx), \forall \xx \in \Delta_n$.
\end{proof}

\section{Missing Proofs of Section \ref{sec.hard} (Hardness)}\label{asec.hard}
\subsection{Proof of Lemma \ref{lem.cl2sns}}
\begin{proof}
Let vertex set $C\subset V$ forms a clique of size $k$ in graph $G$. Construct a maximal clique containing $C$, by adding
vertices that are connected to all the vertices in the current clique. Let the corresponding vertex set be $S\subset V$
($C\subset S$), and let $m = |S| \ge k$. Wlog assume that $S=\{v_1,\dots,v_{m}\}$. Now we construct a strategy profile $\pp\in
\Delta_{2n}$ and show that it is a strict Nash stable of game $(A,A)$. 

\[ p_i = \left\{
  \begin{array}{l l }
	\frac{1}{m} & \quad \text{$1\le i \le m$}\\
	0 & \quad \text{$m+1\le i \le 2n$}
  \end{array} \right.\]

\begin{claim}
$\pp$ is a strict SNE of game $(A,A)$.
\end{claim}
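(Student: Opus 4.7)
The plan is to verify directly from Definition \ref{def.ssne} that $\pp$ is a strict symmetric Nash equilibrium of $(A,A)$: the expected payoff $(A\pp)_i$ must take a common value on $SP(\pp)=\{1,\dots,m\}$ and be strictly smaller on every pure strategy outside the support. Since $\pp$ is supported on $\{1,\dots,m\}\subseteq[n]$, only the first $m$ columns of $A$ enter $(A\pp)_i$, and I will compute this quantity by splitting $\{1,\dots,2n\}$ into the four index ranges dictated by the block structure of (\ref{eq.red1}).

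For $i\in\{1,\dots,m\}$: since $S=\{v_1,\dots,v_m\}$ is a clique in $G$, $A_{ij}=1$ for every $j\in\{1,\dots,m\}\setminus\{i\}$ and $A_{ii}=E'_{ii}=0$, so $(A\pp)_i=\tfrac{m-1}{m}$, a common value on $SP(\pp)$. For $i\in\{m+1,\dots,n\}$, maximality of the clique $S$ supplies some $l\le m$ with $v_l$ non-adjacent to $v_i$ in $G$, so $A_{il}=-h$; the remaining terms contribute at most $1$ each, hence $(A\pp)_i\le \tfrac{(m-1)-h}{m}<\tfrac{m-1}{m}$ because $h>0$. For $i=n+i'$ with $i'\in\{1,\dots,m\}$, the only non-$(-\epsilon)$ entry of row $i$ among the first $m$ columns is $A_{i,i'}=k-1$ (all other $j\le m$ satisfy $|i-j|\ne n$), giving $(A\pp)_i=\tfrac{(k-1)-(m-1)\epsilon}{m}$, which is strictly less than $\tfrac{m-1}{m}$ because $k\le m$ and $\epsilon>0$. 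Finally, for $i=n+i'$ with $i'>m$, one checks $|i-j|\ne n$ for every $j\le m$, so every entry $A_{ij}$ with $j\le m$ equals $-\epsilon$, and $(A\pp)_i=-\epsilon<\tfrac{m-1}{m}$.

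Combining the four cases yields $(A\pp)_i=\tfrac{m-1}{m}$ precisely when $i\in SP(\pp)$ and $(A\pp)_i<\tfrac{m-1}{m}$ otherwise, which is exactly the strict SNE condition. The only step requiring real care is the third case (top-right block of (\ref{eq.red1})), where the payoff mixes the cross-diagonal entry $k-1$ with the $-\epsilon$ terms; here the inequalities $k\le m$ (which follows from $S\supseteq C$ and $|C|=k$) together with $\epsilon>0$ are precisely what force the strict gap. All other gaps are controlled directly by the magnitudes $h>2n^2+5$ and $0<\epsilon\le 1/(10n^3)$ prescribed in (\ref{eq.red1}), so no fine tuning beyond what is already built into the construction is required.
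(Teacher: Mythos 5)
Your proof is correct and follows essentially the same route as the paper's: a direct case-by-case computation of $(A\pp)_i$ over the block structure of $A$, showing the common value $\tfrac{m-1}{m}$ on the support and a strict deficit elsewhere (via maximality of $S$ for $m<i\le n$ and via $k\le m$ plus the $-\epsilon$ entries for $i>n$). You are in fact slightly more careful than the paper in splitting $i>n$ into the sub-cases $i-n\le m$ and $i-n>m$, which the paper's displayed formula glosses over.
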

\begin{proof}
To prove the claim we need to show that $(A\pp)_i > (A\pp)_j,\ \forall i \in [m], \forall j \notin [m]$, and
$(A\pp)_i=(A\pp)_j,\ \forall i,j \in [m]$. Since $S$ forms a clique in graph $G$, and by construction of $A$,  
the payoff from $i^{th}$ pure strategy against $\pp$ is
\[
(A\pp)_i = \sum_{r\le m} A_{ir} p_r = 
\left\{\begin{array}{ll}
\sum_{r\le m} \frac{1}{m} = \frac{m-1}{m},  & \forall i \in [m] \\ 
\sum_{r\le m, E_{ir}=1} \frac{1}{m}\text{-}\sum_{r\le m, E_{ir}=0} h < \frac{m-1}{m}  & \forall m < i \le n \ \ (\because\ \exists r \le m,
E_{ir}=0) \\ 
\frac{k-1}{m}-\epsilon(1-\frac{1}{m}) < \frac{k-1}{m} \le \frac{m-1}{m} & \forall n < i \le 2n \ \ (\because m \ge k \mbox{ and } k < n-1)
\end{array} \right. 
\]
Thus the claim follows.
\end{proof}

Next consider the corresponding transformed matrix $B=T(A,[m])$ as defined in (\ref{eq.t}). Since $A_{ij}=1 \forall i,j \in
[m], i\neq j$ and $A_{ii}=0,\ \forall i \in [m]$, we have 
\[
\begin{array}{ll}
\forall i,j<m,\ B_{ij} = A_{ij}+A_{mm}-A_{im}-A_{mj} & = -1 \mbox{ if } i \neq j \\
 = -2 \mbox{ if } i=j
\end{array}
\]
\begin{claim}
$B$ is negative definite.
\end{claim}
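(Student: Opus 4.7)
The plan is to recognize $B$ as a small rank-one perturbation of a negative multiple of the identity and then verify negative definiteness by a direct quadratic-form computation. Writing $B$ explicitly as the $(m-1)\times(m-1)$ symmetric matrix with $-2$ on every diagonal entry and $-1$ on every off-diagonal entry, I would immediately observe that $B = -I_{m-1} - \mathbf{1}\mathbf{1}^T$, where $I_{m-1}$ is the identity and $\mathbf{1} \in \mathbb{R}^{m-1}$ is the all-ones vector. This decomposition is the only ``clever'' step, and it follows by inspection since $-2 = -1 - 1$ on the diagonal and $-1 = 0 - 1$ off the diagonal.

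Given this decomposition, the main step is to take an arbitrary nonzero $y \in \mathbb{R}^{m-1}$ and compute $y^T B y = -\|y\|_2^2 - (\mathbf{1}^T y)^2$. The first term is strictly negative because $y \neq 0$, and the second is non-positive, so the sum is strictly negative. This yields $y^T B y < 0$ for every nonzero $y$, i.e., $B$ is negative definite.

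As a sanity check I would also write down the spectral picture: $\mathbf{1}\mathbf{1}^T$ has eigenvalue $m-1$ on the eigenvector $\mathbf{1}$ and eigenvalue $0$ with multiplicity $m-2$, so $I_{m-1} + \mathbf{1}\mathbf{1}^T$ has spectrum $\{m, 1, 1, \ldots, 1\}$, and $B$ therefore has eigenvalues $\{-m, -1, -1, \ldots, -1\}$, all strictly negative. There is no real obstacle here --- both routes amount to a couple of lines of linear algebra. The substantive work of the reduction lies elsewhere (ruling out Nash-stable supports that stray outside the clique block, handling the $k\text{-}1$ entries, etc.); this particular claim merely certifies that $T(A,\pp)$ is strictly negative definite, which combined with the preceding claim that $\pp$ is a strict symmetric Nash equilibrium establishes the strict Nash stability of $\pp$ and completes Lemma \ref{lem.cl2sns}.
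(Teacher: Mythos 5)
Your proof is correct. The only difference from the paper is mechanical: you write $B=-I_{m-1}-\ones_{m-1}\ones_{m-1}^T$ and conclude negative definiteness from the quadratic form $y^TBy=-\|y\|_2^2-(\ones^Ty)^2<0$, whereas the paper exhibits an explicit eigenbasis --- $\ones_{m-1}$ with eigenvalue $-m$ and the vectors $e_1-e_i$ with eigenvalue $-1$ --- and observes all eigenvalues are strictly negative. Your rank-one decomposition makes the strict negativity immediate without having to check that the exhibited eigenvectors span, which is a marginal economy; your ``sanity check'' spectral picture ($\{-m,-1,\dots,-1\}$) reproduces the paper's computation exactly. Both routes are two-line linear algebra and both correctly certify that $T(A,\pp)$ is negative definite, which together with the strict symmetric Nash equilibrium claim gives strict Nash stability of $\pp$.
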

\begin{proof}
It is easy to check that $B$ has all strictly negative eigenvalues. $\ww^1=\ones_{m-1}$ is an eigenvector with eigenvalue $-m$, and
$\forall 1<i<m$, vector $\ww^i$, where $w^i_1=1$ and $w^i_i=-1$, is an eigenvector with eigenvalue $-1$. Further,
$\ww^1,\dots,\ww^{m-1}$ are linearly independent.
\end{proof}

Thus by Definition \ref{def.ns} $\pp$ is a strict Nash stable for game $(A,A)$.
\end{proof}

\subsection{Proof of Lemma \ref{lem.dom}}
\begin{proof} A negative semi-definite matrix has the property that all the diagonal elements are non-positive. Observe that
from definition of $T(A,\rr)$, we can choose any strategy to be removed that is in $SP(\rr)$, hence we choose $i$ and we
look at entry $B_{jj} = A_{ii} + A_{jj} - 2 A_{ij}$ with $j \in SP(\rr), j\neq i$ which must be non-positive since
$T(A,\rr)$ is negative semi-definite. Hence $A_{ii} \leq A_{ii}+A_{jj} \leq 2 A_{ij}$. Finally, if $A_{ii},A_{jj}$ are both
dominating then $A_{ii}+A_{jj} > A_{ij}+ A_{ji}= 2A_{ij}$ which is contradiction since $A_{ii} + A_{jj} - 2 A_{ij} \leq 0$.
\end{proof}

\subsection{Proof of Lemma \ref{lem.sne2cl}}
\begin{proof}
Let's define $\ssp(\pp)=\{i \ |\  p_i>\frac{1}{n^2}\}$. We first show $|\ssp(\pp)|\geq k$.

\begin{claim}
$|\ssp(\pp)|\geq k$
\end{claim}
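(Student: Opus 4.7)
The plan is to prove $|\ssp(\pp)| \ge k$ by establishing two facts: the coordinate $p_n$ corresponding to the universal vertex dominates every other $p_i$ in $SP(\pp)$, and $p_n$ is itself at most roughly $1/k$. These two facts, combined with a counting argument for the ``small'' coordinates, immediately yield the desired bound. The main obstacle is proving the dominance $p_i \le p_n$: this is where the very negative value $-h$ placed on non-edges is crucial, since it forces non-neighbors of $i$ inside $SP(\pp)$ to carry only $O(1/h)$ mass and lets the NE equalities directly compare $p_n$ with $p_i$.

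The first step is to identify the common payoff value $v = \pp^T A\pp$. Because the auxiliary vertex $n$ is adjacent to every other vertex of $G$ and $A_{nn}=0$, we have $(A\pp)_n = 1 - p_n$. If $n \notin SP(\pp)$, the NE inequality $(A\pp)_n \le v$ forces $v \ge 1$; but $A_{ii}=0$ for $i\le n$ and $A_{ij}\le 1$ in the upper $n\times n$ block give $v \le \sum_{i,j\in SP(\pp),\, i\neq j} p_i p_j = 1 - \sum_i p_i^2 < 1$. This contradiction yields $n \in SP(\pp)$ and $v = 1-p_n$.

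For the dominance, fix $i \in SP(\pp) \subseteq [n]$ and partition $SP(\pp) \setminus \{i\}$ into its $G$-neighbors $S_i^+$ and non-neighbors $S_i^-$ of $i$. Expanding $(A\pp)_i = v$ using $A_{ii}=0$ together with $\sum_{j\in S_i^+} p_j + \sum_{j\in S_i^-} p_j = 1-p_i$ gives $(h+1)\sum_{j\in S_i^-} p_j = p_n - p_i$; non-negativity of the left side is exactly the inequality $p_i \le p_n$. To cap $p_n$, I plug the pure strategy $2n \notin SP(\pp)$ into the NE inequality: since $A_{2n,n}=k-1$ and $A_{2n,j}=-\epsilon$ for every other $j\in SP(\pp)$, the identity $(A\pp)_{2n} = (k-1)p_n - \epsilon(1-p_n) \le 1-p_n$ rearranges to $p_n \le (1+\epsilon)/(k+\epsilon)$.

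Finally, dominance plus $\sum_i p_i = 1$ yields $p_n \ge 1/|SP(\pp)| \ge 1/n > 1/n^2$, so $n \in \ssp(\pp)$. The at most $n$ strategies in $SP(\pp) \setminus \ssp(\pp)$ contribute at most $n\cdot n^{-2} = 1/n$ of the total mass, so $\sum_{i\in\ssp(\pp)} p_i \ge 1-1/n$. Combining with $p_i \le p_n \le (1+\epsilon)/(k+\epsilon)$ gives $|\ssp(\pp)| \ge (1-1/n)(k+\epsilon)/(1+\epsilon)$, and a short check using $k \le n-2$ and $\epsilon \le 1/(10n^3)$ shows that the right side strictly exceeds $k-1$, forcing $|\ssp(\pp)| \ge k$ by integrality.
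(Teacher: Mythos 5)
Your proof is correct, but it takes a genuinely different route from the paper's. The paper argues by contradiction in one shot: if $|\ssp(\pp)|<k$, pigeonhole gives some $r\in\ssp(\pp)$ with $p_r\ge\frac{1-1/n}{k-1}$, and then the mirror strategy $n+r$ (payoff $(k-1)p_r-\epsilon(1-p_r)$) strictly beats strategy $r$ (payoff at most $1-p_r$), contradicting the NE property. You instead extract structural information first: the universal vertex $n$ must lie in $SP(\pp)$ (else $(A\pp)_n=1$ exceeds $v=\pp^TA\pp<1$), the equalization $(A\pp)_i=v=1-p_n$ combined with the $-h$ entries forces $(h+1)\sum_{j\in S_i^-}p_j=p_n-p_i\ge 0$ and hence $p_i\le p_n$ for every $i$ in the support, and the unplayed strategy $2n$ caps $p_n\le(1+\epsilon)/(k+\epsilon)$; the count then follows since the $\ssp$ coordinates carry mass at least $1-1/n$ and each is at most roughly $1/k$. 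I checked your arithmetic at the end: with $k\le n-2$ and $\epsilon\le\frac{1}{10n^3}$ one indeed gets $(1-1/n)(k+\epsilon)>(k-1)(1+\epsilon)$, so integrality gives $|\ssp(\pp)|\ge k$. Your version is longer and leans on the universal vertex being in the support (which the paper's proof never needs), but it buys a stronger conclusion --- a uniform upper bound $p_i\le p_n\lesssim 1/k$ on \emph{all} support probabilities, not merely the existence of $k$ large ones --- and it makes transparent why the $-h$ penalties confine the support to a clique-like structure. Either argument suffices for the claim.
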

\begin{proof}
Note that $\sum_{i\in SP(\pp)\setminus \ssp(\pp)} p_i \leq n\frac{1}{n^2} \leq \frac{1}{n}$. Therefore, $\sum_{i\in \ssp(\pp)} p_i
\geq 1-\frac{1}{n}$. Suppose $|\ssp(\pp)|<k$ by contradiction. Then $\exists r\in \ssp(\pp)$ such that
$p_r\ge\frac{1-\frac{1}{n}}{k-1}=\frac{n-1}{n(k-1)}$. Now consider the payoff from strategy $n+r$, which is $$(A\pp)_{n+r} =
(k-1)p_r-\epsilon(1-p_r) \geq 1-\frac{1}{n} - \epsilon$$ On the other hand we have  
$$ (A\pp)_r \le 1-p_r \le
1-\frac{n-1}{n(k-1)}$$ Therefore, $$(A\pp)_{n+r}-(A\pp)_r \geq 1-\frac{1}{n}-\epsilon
-1+\frac{n-1}{n(k-1)}=\frac{n-k}{n(k-1)} - \epsilon\geq \frac{n-k}{n(k-1)} - \frac{1}{10n^3}>0 $$
A contradiction to $\pp$ being symmetric $NE$.
\end{proof}

Let's define $S=\{v_i\ | \ i \in \ssp(\pp)\}$. In order to prove the lemma, it suffices to show the vertex set $S$ forms a clique in the graph $G$ since $|S|=|\ssp(\pp)|\geq k$.

\begin{claim}
The vertex set $S$ forms a clique in the graph $G$.
\end{claim}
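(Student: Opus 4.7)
My plan is to prove this by contradiction. Suppose the vertex set $S$ is not a clique, so there exist distinct $i, j \in \ssp(\pp)$ with no edge between $v_i$ and $v_j$ in $G$. By the definition of the modified adjacency matrix $E'$ and the block structure of $A$ in equation~(\ref{eq.red1}), this forces $A_{ij} = -h$, where $h > 2n^2 + 5$. The strategy is to exploit this single huge negative entry to drive $(A\pp)_i$ far below the payoff of a specific deviation, contradicting the symmetric Nash equilibrium condition at the supported index $i$.

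First I will bound $(A\pp)_i$ from above. Since $SP(\pp) \subset [n]$, and for $r \in [n]$ with $r \neq i$ the entry $A_{ir}$ is at most $1$ (while $A_{ii} = 0$), I split off the $r = j$ term:
\[
(A\pp)_i \;=\; A_{ij}\,p_j + \sum_{r \in SP(\pp),\, r \neq j} A_{ir}\,p_r \;\le\; -h\,p_j + (1 - p_j).
\]
Because $j \in \ssp(\pp)$ gives $p_j > 1/n^2$, this yields $(A\pp)_i < 1 - h/n^2 < -1$ by the choice of $h$.

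Next I will evaluate the deviation payoff $(A\pp)_{n+i}$. Reading off (\ref{eq.red1}), we have $A_{n+i,\, i} = k - 1$ (since $|n+i - i| = n$) and $A_{n+i,\, r} = -\epsilon$ for every other $r \in [n]$. As $SP(\pp) \subset [n]$, this gives
\[
(A\pp)_{n+i} \;=\; (k-1)\,p_i - \epsilon\,(1 - p_i) \;\ge\; -\epsilon \;>\; -1,
\]
using $\epsilon \le 1/(10 n^3)$.

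Combining the two bounds yields $(A\pp)_{n+i} > -1 > (A\pp)_i$, which violates the symmetric Nash condition~(\ref{eq.sne}) at the supported strategy $i$, namely $(A\pp)_i \ge (A\pp)_{n+i}$. Hence no such non-adjacent pair $(i,j) \in \ssp(\pp)^2$ can exist, and $S$ is a clique. The only mildly subtle step is selecting the right comparison strategy: the paired index $n+i$ is the natural choice because the off-diagonal block entry $A_{n+i, i} = k-1 \ge 0$ keeps the deviation payoff bounded below by $-\epsilon$, comfortably above the ceiling $-1$ we derived for $(A\pp)_i$; any pure strategy $r \in [n]\setminus SP(\pp)$ could also have $A_{ir} = -h$ and would not give this quantitative separation.
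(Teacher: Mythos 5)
Your proof is correct and follows essentially the same route as the paper's: isolate the single $-h$ entry created by a non-edge between two indices of $\ssp(\pp)$ to force the payoff of a supported strategy below $-1$, then contradict the Nash condition by comparing against a strategy in $\{n+1,\dots,2n\}$ whose payoff is at least $-\epsilon$. The only cosmetic difference is that you use the specific partner strategy $n+i$ while the paper observes that any strategy outside $[n]$ works.
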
 
\begin{proof}
It suffices to show $\forall i,j \in \ssp(\pp)$ where $i\not = j$ we have $A_{ij}=1$. Suppose not then $\exists i',j'\in \ssp(\pp)$ s.t. $i'\not =j'$ and $A_{i'j'}\not =1$. We get $A_{i'j'}=-h$ by definition of $A$. Therefore, $(A\pp)_{j'}\leq -hp_{i'} +1\le -1$ because $p_{i'}\geq \frac{1}{n^2}$ and $h\geq 2n^2$. On the other hand, we have $\forall i\not \in [n]$, $(A\pp)_i \geq -\epsilon > -1$ by definition of $A$ so we get a contradiction to $\pp$ being symmetric NE.
\end{proof}

The proof is completed.
\end{proof}

\subsection{Proof of lemma \ref{lem.ns2cl}}
\begin{proof}
Let $\pp$ be a Nash stable strategy of game $(A,A)$, then by definition $\pp$ is a SNE and matrix $B=T(A,\pp)$ is negative
semi-definite. The latter implies $SP(\pp) \subset [n]$ using Lemma \ref{lem.dom}, since for $i \notin[n]$
$A_{ii}=h>2k>2A_{ij}, \forall j\neq i$. Applying Lemma \ref{lem.sne2cl} with this fact together with the $\pp$ being
an SNE and $|SP(\pp)| > 1$ implies $G$ has a clique of size $k$.
\end{proof}

\section{Missing Details of Section \ref{sec.hard} (Hardness)}

\subsection{Hardness when single dominating diagonal}\label{asec.ddhard}
A symmetric matrix, when picked uniformly at random, has in expectation exactly one row with dominating diagonal (see Remark
\ref{rem}). One could ask does the problem become easier for this typical case. We answer negatively by extending all the
NP-hardness results of Theorem \ref{thm.mainHard} to this case as well, where matrix $A$ has exactly one
row whose diagonal entry dominates all other entries of the row, {\em i.e.,} $\exists i:\ A_{ii}>A_{ij},\ \forall j \neq i$.

Consider the following modification of matrix $A$ from (\ref{eq.red1}), where we add an extra
row and column. Matrix $M$ is of dimension $(2n+1) \times (2n+1)$, described pictorially in Figure \ref{fig.m}.
Recall that $h>2n^2+5$ and $k$ is the given integer.

\begin{equation}\label{eq.m}
\begin{array}{ll}
M_{ij}= A_{ij} & \mbox{ if } i,j \le 2n \\
M_{(2n+1)i}=M_{i(2n+1)} = 0 & \mbox{ if } i\le n \\
M_{(2n+1)i}=M_{i(2n+1)} = h+\epsilon & \mbox{ if } n<i\le 2n, \mbox{ where $0<\epsilon<1$ } \\
M_{(2n+1)(2n+1)}=3h
\end{array}
\end{equation}

\begin{figure}[!htb]
\centering
\includegraphics[width=.5\linewidth]{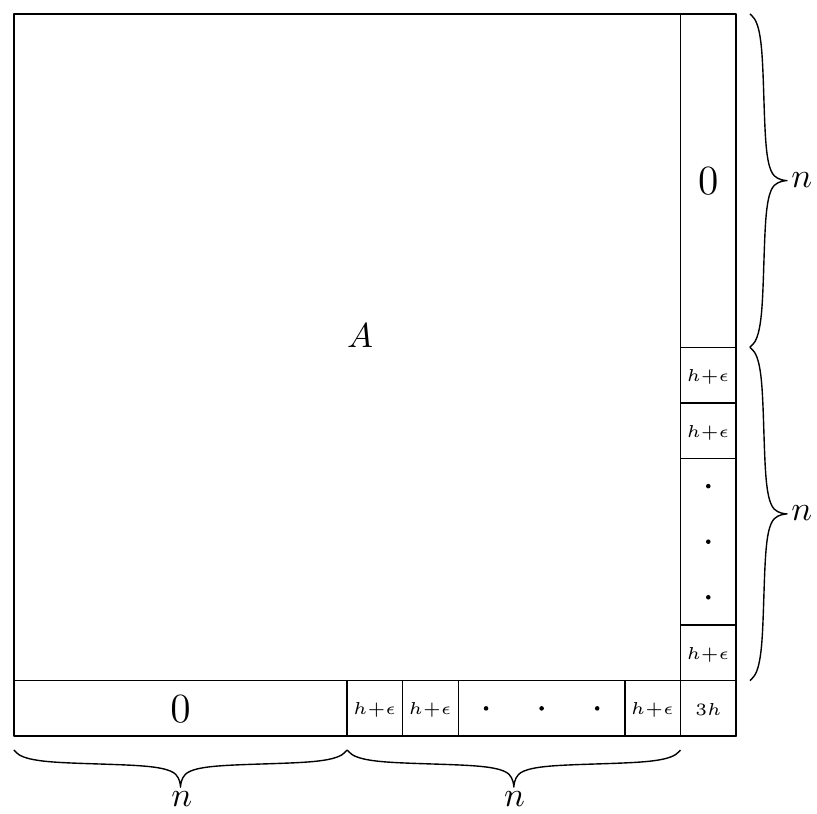}
\caption{Matrix $M$ as defined in (\ref{eq.m})} 
\label{fig.m}
\end{figure}

Clearly $M$ has exactly one row/column with dominating diagonal, namely $(2n+1)$. The strategy constructed in Lemma
\ref{lem.cl2sns} is still strict Nash stable in game $(M,M)$. This is because their support is a subset of $[n]$,
implying the extra strategy giving zero payoff which is strictly less than the expected payoff. Thus, we get the following lemma.

\begin{lemma}\label{lem.1cl2sns}
If graph $G$ has a clique of size $k$, then game $(M,M)$ has a mixed strategy that is strict Nash stable where $A$ is from
(\ref{eq.red1}). 
\end{lemma}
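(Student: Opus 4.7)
The plan is to recycle the strict Nash stable strategy $\pp$ built in Lemma \ref{lem.cl2sns}, extended by one extra coordinate set to zero. Concretely, let $S = \{v_1,\dots,v_m\}$ be a maximal clique of size $m \ge k$ containing the given $k$-clique, and define $\pp' \in \Delta_{2n+1}$ by $p'_i = 1/m$ for $i \le m$ and $p'_i = 0$ otherwise. I will then verify the two defining conditions of strict Nash stability (Definition \ref{def.ns}) for $\pp'$ in the game $(M,M)$.

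For the strict symmetric NE condition, I first observe that since $SP(\pp') \subseteq [n]$ and $M$ agrees with $A$ on the $[2n]\times[2n]$ block, the payoffs $(M\pp')_i$ for $i \le 2n$ coincide exactly with $(A\pp)_i$ computed in Lemma \ref{lem.cl2sns}. Therefore $(M\pp')_i = (m-1)/m$ for $i \in [m]$ and strictly less than $(m-1)/m$ for every other $i \le 2n$. The only new check is for the $(2n+1)$-th pure strategy: by construction $M_{(2n+1)i} = 0$ for $i \le n$, so $(M\pp')_{2n+1} = \sum_{i \le m} 0 \cdot (1/m) = 0 < (m-1)/m$. Hence $\pp'$ is a strict symmetric NE.

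For the negative-definiteness of $T(M,\pp')$, I will use that the transformed matrix depends only on entries $M_{ab}$ with $a,b \in SP(\pp')$. Since $SP(\pp') = [m] \subseteq [n]$ and $M$ restricted to $[n] \times [n]$ equals $A$ restricted to $[n] \times [n]$, we have $T(M,\pp') = T(A,\pp)$ as $(m-1) \times (m-1)$ matrices. In the proof of Lemma \ref{lem.cl2sns} this matrix was shown to be negative definite (all eigenvalues strictly negative, with eigenvectors $\ones_{m-1}$ and the $m-2$ independent vectors with a $1$ and a $-1$). Combining this with the strict-NE check and Definition \ref{def.ns} yields that $\pp'$ is strict Nash stable in $(M,M)$.

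I do not expect a real obstacle here; the lemma is essentially a "padding" argument. The only thing that needs care is confirming that the newly introduced large entries $M_{(2n+1)i} = h+\epsilon$ (for $n < i \le 2n$) and $M_{(2n+1)(2n+1)} = 3h$ cannot spoil either condition. They cannot affect $T(M,\pp')$ because none of these indices lies in $SP(\pp') \subseteq [n]$, and they cannot affect the strict-NE check because strategy $2n+1$ interacts with $\pp'$ only through entries $M_{(2n+1)i}$ with $i \le n$, all of which are zero.
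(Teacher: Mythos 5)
Your proposal is correct and matches the paper's own (much terser) argument: the paper likewise observes that the strategy from Lemma \ref{lem.cl2sns} has support inside $[n]$, so the new strategy $2n+1$ earns payoff $0 < \tfrac{m-1}{m}$ against it and the transformed matrix $T(M,\pp')$ is unchanged, hence strict Nash stability carries over. Your write-up simply spells out these checks explicitly.
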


Next we show the converse. Nash stable strategies are super set of other three notion of stability (Theorem \ref{thm.char}), so it
suffices to map Nash stable to a $k$-clique. Further, if $\pp$ is Nash stable then $T(M,\pp)$ is negative semi-definite (by
definition). Using this property together with the lemmas from previous two section, we show the next lemma. 

\begin{lemma}\label{lem.1ns2cl}
Graph $G$ has a clique of size $k$, if there is a mixed Nash stable strategy in $(M,M)$.
\end{lemma}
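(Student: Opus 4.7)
\medskip

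\noindent\textbf{Proof proposal for Lemma \ref{lem.1ns2cl}.} The plan is to reduce the situation over $M$ back to the original matrix $A$ and then invoke Lemma~\ref{lem.ns2cl}. Let $\pp \in \Delta_{2n+1}$ be a mixed Nash stable strategy of game $(M,M)$. The first step is to rule out the new coordinate: I will show $2n+1 \notin SP(\pp)$. Observe that $M_{(2n+1)(2n+1)}=3h$ strictly dominates $M_{(2n+1)j}$ for every $j\neq 2n+1$ (we have $M_{(2n+1)j}=0$ for $j\le n$ and $M_{(2n+1)j}=h+\epsilon$ for $n<j\le 2n$), and in fact $3h>2\,M_{(2n+1)j}$ for all $j\neq 2n+1$ because $h>2\epsilon$. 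If $2n+1$ were in $SP(\pp)$, mixedness would give some other $j\in SP(\pp)$; Lemma~\ref{lem.dom} applied to $\pp$ with $i=2n+1$ would then force $M_{(2n+1)(2n+1)}\le 2M_{(2n+1)j}$, contradicting the previous strict inequality. Hence $SP(\pp)\subseteq[2n]$.

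The second step is to remove indices $n+1,\dots,2n$. Suppose, towards a contradiction, that some $i$ with $n<i\le 2n$ lies in $SP(\pp)$. Since $\pp$ is mixed there exists $j\in SP(\pp)\setminus\{i\}$ with $j\le 2n$. For any such $j$ we have $M_{ij}=A_{ij}\in\{-\epsilon,\,k-1\}$, so in particular $M_{ij}\le k-1$. Lemma~\ref{lem.dom} then gives $h=M_{ii}\le 2M_{ij}\le 2(k-1)$, contradicting $h>2n^2+5$. Thus $SP(\pp)\subseteq[n]$.

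With $SP(\pp)\subseteq[n]$, the third step is to show that $\pp$, viewed as a distribution in $\Delta_{2n}$ by dropping its last (zero) coordinate, is itself Nash stable in the original game $(A,A)$. For every $j\le 2n$ we have $(M\pp)_j=\sum_{r\le n}M_{jr}p_r=\sum_{r\le n}A_{jr}p_r=(A\pp)_j$, so the symmetric-NE inequalities of $(M,M)$ restricted to indices $1,\dots,2n$ are exactly the symmetric-NE inequalities of $(A,A)$; consequently $\pp$ is a symmetric NE of $(A,A)$. Moreover, since $SP(\pp)\subset[2n]$ and $M$ agrees with $A$ on the $[2n]\times[2n]$ block, the transformed matrix $T(M,\pp)$ coincides with $T(A,\pp)$, so negative semi-definiteness is inherited. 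Therefore $\pp$ is a mixed Nash stable strategy of $(A,A)$, and Lemma~\ref{lem.ns2cl} yields a clique of size $k$ in $G$.

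The main obstacle is not technical but conceptual: one must be careful that the dominating-diagonal argument from Lemma~\ref{lem.dom} still eliminates strategies $n+1,\dots,2n$ in the presence of the new row $2n+1$, since $M_{ii}=h<2M_{i,2n+1}=2(h+\epsilon)$ for $n<i\le 2n$, so Lemma~\ref{lem.dom} alone cannot exclude those rows. The remedy is the two-step order above: first knock out $2n+1$ (whose very large diagonal $3h$ makes it self-excluding), and only then apply Lemma~\ref{lem.dom} to the rows with diagonal $h$, at which point their cohabitants in $SP(\pp)$ can only lie in $[2n]$ and have cross-entries bounded by $k-1$.
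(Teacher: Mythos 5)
Your proof is correct and follows essentially the same route as the paper's: use Lemma~\ref{lem.dom} to force $SP(\pp)\subseteq[n]$ (first excluding $2n+1$ via its diagonal $3h$, then excluding $n+1,\dots,2n$ via $h>2(k-1)$), observe that $T(M,\pp)=T(A,\pp)$ and the NE conditions restrict correctly, and invoke Lemma~\ref{lem.ns2cl}. Your explicit two-step ordering (eliminating $2n+1$ before the rows with diagonal $h$, since $M_{i,2n+1}=h+\epsilon$ would otherwise block the domination argument) is exactly the care the paper's terser wording implicitly relies on.
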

\begin{proof}
Let $\qq$ be the Nash stable strategy, then it is a symmetric NE of game $(M,M)$, $T(M,\qq)$ is
negative semi-definite, and $|SP(\qq)|>1$. Using Lemma \ref{lem.dom} we have $SP(\qq)\subseteq [n]$, as for $i=2n+1,\
M_{ii}=3h>2M_{ij},\ \forall j\neq i$ implying $2n+1\notin SP(\qq)$, and $\forall n<i\le 2n,\ M_{ii}=h>2k >2A_{ij},\ \forall j \in
SP(\qq)$. Thus for $2n$-dimensional vector $\pp$, where $p_i =q_i, i\le 2n$, we have $T(A,\pp)=T(M,\qq)$ and $\pp$ is a symmetric NE
of game $(A,A)$. Thus, stability property of $\qq$ on matrix $M$ carries forward to corresponding stability of $\pp$ on matrix $A$.
Thus the lemma follows using Lemma \ref{lem.ns2cl}. 
\end{proof}

The next theorem follows using Theorem \ref{thm.char}, Lemmas \ref{lem.inv}, \ref{lem.1cl2sns}, and
\ref{lem.1ns2cl}. Containment in NP follows using the Definition \ref{def.ns}.

\begin{theorem}\label{thm.1ns}
Given a symmetric matrix $M$ such that exactly one row/column in $M$ has a dominating diagonal, 
\begin{itemize}
\item it is NP-complete to check if game $(M,M)$ has a mixed Nash stable (or linearly stable) strategy.
\item it is NP-complete to check if game $(M,M)$ has a mixed strict Nash stable strategy.
\item it is NP-hard to check if dynamics (\ref{eq.f}) applied on $M$ has a mixed stable fixed-point.
\item it is NP-hard to check if dynamics (\ref{eq.f}) applied on $M$ has a mixed asymptotically stable fixed-point.
\end{itemize}
even if $M$ is a assumed to be positive.
\end{theorem}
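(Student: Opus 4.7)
My plan is to extend the reduction of Theorem \ref{thm.mainHard} by appending a single extra strategy that becomes the unique row with dominating diagonal, while leaving the clique-to-stability correspondence intact. Starting from the $2n \times 2n$ matrix $A$ of equation (\ref{eq.red1}), I would construct a $(2n+1) \times (2n+1)$ symmetric matrix $M$ by placing $A$ in the upper-left block, setting $M_{(2n+1),(2n+1)} = 3h$, and filling the new row/column with $M_{i,(2n+1)} = 0$ for $i \le n$ and $M_{i,(2n+1)} = h + \eps$ for $n < i \le 2n$, for a small $\eps \in (0,1)$. Row $2n+1$ then has $3h$ as its unique maximum, while the former dominating rows $n < i \le 2n$ lose this status because $M_{i,(2n+1)} = h + \eps$ now strictly exceeds $M_{ii} = h$; rows $i \le n$ had non-dominating diagonals already (their diagonals are zero), so $M$ has exactly one row with a dominating diagonal.

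For the forward direction, I take the strict Nash stable strategy $\pp$ produced by Lemma \ref{lem.cl2sns} from a given $k$-clique of $G$, whose support lies in $[n]$, and extend it by assigning coordinate $2n+1$ the value zero. Since $M$ agrees with $A$ on the first $2n$ coordinates and the new pure strategy gives payoff $(M\pp)_{2n+1} = 0$, which is strictly less than the common support value $(m-1)/m$ computed in Lemma \ref{lem.cl2sns}, the extended vector remains a strict symmetric NE of $(M,M)$. Moreover, because $2n+1 \notin SP(\pp)$, the transformed matrix satisfies $T(M,\pp) = T(A,\pp)$ and is still negative definite, so $\pp$ is strict Nash stable in $(M,M)$, and hence asymptotically stable and stable via Theorem \ref{thm.char}.

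For the backward direction, let $\qq$ be any mixed Nash stable strategy of $(M,M)$. Applying Lemma \ref{lem.dom} with $2n+1 \in SP(\qq)$ would require $3h = M_{(2n+1),(2n+1)} \le 2 M_{(2n+1),j}$ for every other $j \in SP(\qq)$, but the off-diagonal entries of row $2n+1$ are at most $h + \eps$, so this is impossible since $h > 2n^2+5$. The same lemma applied with $n < i \le 2n$ in the support would require $h \le 2 M_{ij}$ for all other $j \in SP(\qq)$, but after excluding $2n+1$ the relevant entries $M_{ij}$ with $j \le 2n$ are at most $k-1$, contradicting $h > 2(k-1)$. Hence $SP(\qq) \subseteq [n]$, the restriction of $\qq$ to the first $2n$ coordinates is a mixed Nash stable of $(A,A)$ (since $T(M,\qq) = T(A,\pp)$ and the NE conditions transfer verbatim), and Lemma \ref{lem.ns2cl} yields a $k$-clique in $G$. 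The strictly positive variant follows by adding a large constant to $M$ via Lemma \ref{lem.inv}, and NP-membership for the (strict) Nash stable decision problems follows from Definition \ref{def.ns}, since the symmetric NE condition and negative (definite) semi-definiteness of $T(M,\qq)$ are polynomial-time checkable given the support. The main obstacle will be the support-pruning step: the thresholds $h > 2n^2 + 5$ and the gap $h + \eps > h$ must simultaneously rule out the new strategy and the original dominating strategies in $\{n+1,\ldots,2n\}$ while not inadvertently excluding indices in $[n]$ that carry the clique encoding.
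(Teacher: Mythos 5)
Your proposal matches the paper's proof essentially verbatim: the same $(2n+1)\times(2n+1)$ augmentation of $A$ with $M_{(2n+1)(2n+1)}=3h$, zeros against $[n]$, and $h+\eps$ against $\{n+1,\dots,2n\}$ (equation (\ref{eq.m})), the same forward direction via Lemma \ref{lem.cl2sns} with the new strategy receiving payoff $0<\frac{m-1}{m}$, and the same support-pruning via Lemma \ref{lem.dom} followed by restriction to $(A,A)$ and Lemma \ref{lem.ns2cl}, with positivity and NP-membership handled exactly as in Lemmas \ref{lem.1cl2sns} and \ref{lem.1ns2cl}. The "obstacle" you flag at the end is already resolved by your own argument, so the proof is complete and correct.
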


Strict positivity of the matrix in the above theorem follows using the fact that Nash stable and strict Nash stable strategies do not
change when a constant is added to the matrix (Lemma \ref{lem.inv}).

\subsection{Hardness for subset}\label{asec.subset} 
Another natural question to ask is whether a particular allele is going to survive with positive
probability in the limit, for a given fitness matrix. We show that this may not be easy either, by proving hardness for
checking if there exists a stable strategy $\pp$ such that $i\in SP(\pp)$ for a given $i$. In general, given a subset $S$
of pure strategies it is hard to check if $\exists$ a stable profile $\pp$ such that $S$ is a subset of $SP(\pp)$.

\begin{theorem}\label{thm.ss}
Given a $d\times d$ symmetric matrix $M$ and a subset $S\subset [d]$,
\begin{itemize}
\item it is NP-complete to check if game $(M,M)$ has a Nash stable (or linearly stable) strategy $\pp$ s.t. $S\subset
SP(\pp)$.
\item it is NP-complete to check if game $(M,M)$ has a strict Nash stable strategy $\pp$ s.t. $S\subset SP(\pp)$.
\item it is NP-hard to check if dynamics (\ref{eq.f}) applied on $M$ has a stable fixed-point $\pp$ s.t. $S\subset
SP(\pp)$.
\item it is NP-hard to check if dynamics (\ref{eq.f}) applied on $M$ has a asymptotically stable fixed-point $\pp$ s.t.
$S\subset SP(\pp)$.
\end{itemize}
even if $|S|=1$, or if $M$ is a assumed to be positive or with exactly one row with dominating diagonal.
\end{theorem}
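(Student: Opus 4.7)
The plan is to reuse, essentially verbatim, the reduction from $k$-clique behind Theorems~\ref{thm.mainHard} and~\ref{thm.1ns}, and to engineer the distinguished subset $S$ so that its containment forces any stable strategy to be mixed. Given a $k$-clique instance on graph $G=(V,E)$ with $|V|=n$, I invoke the standing convention that vertex $n$ is adjacent to every other vertex. I output the reduction matrix $A$ of~(\ref{eq.red1}) in the generic case and the matrix $M$ of~(\ref{eq.m}) when the single-dominating-diagonal restriction is also imposed, together with the singleton $S=\{n\}$. The claim to be established is that $G$ has a $k$-clique if and only if the resulting game has a stable $\pp$ (under any of the five notions) with $n\in SP(\pp)$.

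\textbf{Forward direction.} Given a $k$-clique $C$ in $G$, I first pass to $C\cup\{n\}$, which remains a clique because $n$ is universal; extend it to a maximal clique $S'\ni n$ of size $m\ge k$. Applying verbatim the construction of Lemma~\ref{lem.cl2sns} (respectively Lemma~\ref{lem.1cl2sns}) to $S'$ produces the uniform profile on $S'$, which is strict Nash stable in $(A,A)$ (respectively $(M,M)$) and which, by Theorem~\ref{thm.char}, witnesses every one of the five stability notions; it visibly has $n\in SP(\pp)$.

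\textbf{Backward direction.} Suppose that some $\pp$, stable in any of the five senses, satisfies $n\in SP(\pp)$. The crux, and where I expect the main work to lie, is to rule out the pure profile supported on $\{n\}$: once this is excluded for every notion, $|SP(\pp)|\ge 2$, and Lemma~\ref{lem.ns2cl} (or Lemma~\ref{lem.1ns2cl} in the $M$-case) extracts a $k$-clique, closing the reduction. To exclude pure $\pp=\mathbf{e}_n$, observe that the expected payoff to pure strategy $n$ against itself equals $A_{nn}=0$, while against any neighbor $i<n$ of $n$ in $G$ the expected payoff is $A_{in}=1>0$; such a neighbor $i$ exists since the universal vertex has at least one edge. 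Hence $\mathbf{e}_n$ is not a symmetric Nash equilibrium, and therefore by Theorem~\ref{thm.char} neither Nash stable, linearly stable, nor strict Nash stable. For the two topological notions I would invoke Theorem~\ref{maximum}: $\pi(\mathbf{e}_n)=0$, yet the perturbed profile $(1-\delta)\mathbf{e}_n+\delta\mathbf{e}_i$ gives $\pi=2\delta(1-\delta)>0$ for every small $\delta>0$, so $\mathbf{e}_n$ is not even a local maximum of $\pi$, ruling out (asymptotically) stable. These computations are literally the same for $M$ because $M_{in}=A_{in}$ for all $i\le 2n$ and $M_{(2n+1)n}=0$.

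\textbf{Membership and variants.} NP-membership for the (strict) Nash and linearly stable items follows from Definition~\ref{def.ns}: a certificate is $SP(\pp)$ together with the rational equilibrium probabilities, with the symmetric NE condition and the (strict) negative semi-definiteness of $T(A,\pp)$ (via Sylvester's criterion) checkable in polynomial time, and the support constraint $n\in SP(\pp)$ verifiable by inspection. Strict positivity of the matrix is obtained by adding a large constant $c$: this shifts $\pi$ on the simplex by $c$, leaves $T(\cdot,\pp)$ and the set of symmetric NE invariant, and hence by Lemma~\ref{lem.inv} together with Theorem~\ref{maximum} preserves all five stability sets, the support condition $n\in SP(\pp)$, and the single-row-dominating-diagonal property of $M$. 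The dominating-diagonal variant uses $M$ directly with $S=\{n\}$: any stable profile supported on $\{2n+1\}$ is automatically excluded by $n\in SP(\pp)$, while the forward construction on $S'\subseteq[n]$ remains strict Nash stable in $(M,M)$ because the added row $2n+1$ receives expected payoff $0$ against the uniform distribution on $S'$, strictly below the common payoff $(m-1)/m$ of the clique strategies.
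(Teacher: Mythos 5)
Your proof is correct and takes essentially the same approach as the paper: the same reduction matrices $A$ and $M$ with target set $S=\{n\}$ exploiting the universal vertex, the forward direction via Lemmas~\ref{lem.cl2sns}/\ref{lem.1cl2sns}, and the backward direction by excluding the pure profile $\mathbf{e}_n$ because row $n$ has a dominated diagonal and then applying Lemmas~\ref{lem.ns2cl}/\ref{lem.1ns2cl}. The only cosmetic difference is that the paper witnesses the dominated diagonal with $A_{n,2n}=k-1>0=A_{nn}$ while you use a graph neighbor $A_{in}=1$; both are valid.
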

\begin{proof}
The reduction is again from $k$-clique. Our constructions of (\ref{eq.red1}) and (\ref{eq.m}) works as
is, and the target set is $S=\{n\}$. 

Recall that vertex $v_n \in V$ is connected to every other vertex in $G$, and therefore is part of every maximal
clique. Thus the construction of strategy $\pp$ in Lemmas \ref{lem.cl2sns} and \ref{lem.1cl2sns} will have
$p_n>0$, and therefore if $k$-clique exist then $S\subset SP(\pp)$. 

For the converse consider a Nash stable strategy $\pp$ with $\{n\}\subset SP(\pp)$. By definition it is a symmetric NE, and
therefore $SP(\pp)\neq\{n\}$ as in all cases row $n$ has dominated diagonal, i.e., $A_{nn}=0 < k-\delta=A_{n,2n}$. Thus,
$\pp$ is a mixed profile, and then by applying Lemmas \ref{lem.ns2cl} and
\ref{lem.1ns2cl}, for the respective cases we get that graph $G$ has a $k$-clique. Thus proof follows using
Theorem \ref{thm.char} and property of Lemma \ref{lem.inv}.
\end{proof}

\subsection{Diversity and hardness}\label{asec.divNhard}
Finally we state the hardness result in terms of survival of phenotypic diversity in the limiting population of diploid
organism with single locus. For this case, as we discussed before, the evolutionary process has been studied extensively
\cite{Nagylaki1,akin,lyubich}, and that it is governed by dynamics $f$ of (\ref{eq.f}) has been established.
Here $A$ is a symmetric fitness matrix; $A_{ij}$ is the fitness of an organism with alleles $i$ and $j$ in the locus of two
chromosomes. Thus, for a given $A$ the question of deciding ``If phenotypic diversity will survive with positive
probability?'' translates to ``If dynamics $f$ converges to a {\em mixed} fixed-point with positive probability?''. We wish
to show NP-hardness for this question. 

Theorem \ref{generic_zero} establishes that all, except for zero-measure, of starting distributions $f$ converges to
linearly stable fixed-points. From this we can conclude that ``Yes'' answer to the above question implies 
existence of a {\em mixed} linearly stable fixed-point. However the converse may not hold.  In other
words, ``No'' answer does not imply {\em non-existence} of {\em mixed} linearly stable fixed-point. 
Although, in that case we can conclude non-existence of {\em mixed} strict Nash stable strategy (Theorem \ref{thm.char}). Thus, none of the
above reductions seem to directly give NP-hardness for our question.

At this point, the fact that same reduction (of Section \ref{sec.SHard}) gives NP-hardness for all four notions of stablity, and in
particular for strict Nash stable as well as linearly stable (Nash stable) fixed-points come to our rescue. In particular, for the
matrix $A$ of (\ref{eq.red1}) non-existence of {\em mixed} limit point in $\CL$ (points where $f$ coverges with positive probability)
implies non-existence of strict Nash stable strategy, which in turn imply non-existence of {\em mixed} linearly stable
fixed-point (Theorem \ref{thm.char}). If not, then graph $G$ will have $k$-clique (Lemma \ref{lem.ns2cl} and Theorem \ref{thm.char}),
which in turn implies existence of a mixed strict Nash stable strategy (Lemma \ref{lem.cl2sns}). Therefore, we can conclude that {\em
mixed} linearly stable fixed-point exist if and only if $f$ converges to a {\em mixed} fixed-point with positive probability. and thus
the next theorem follows.

\begin{theorem}\label{athm.main}
Given a fitness matrix $A$ for a diploid organism with single locus, it is NP-hard to decide if, under evolution, diversity will
survive (by converging to a specific mixed equilibrium with positive probability) when starting allele frequencies are picked iid from uniform distribution. Also, deciding if a given
allele will survive is NP-hard.  
\end{theorem}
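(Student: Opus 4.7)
The plan is to combine the reduction from $k$-Clique used in Theorem \ref{thm.mainHard} with the measure-theoretic characterization of $\CL$ given in Theorem \ref{generic_zero}. The quantity "diversity survives under evolution with positive probability" translates to: there exists a mixed fixed-point $\pp$ of $f$ in (\ref{eq.f}) that attracts a positive-measure set of initial conditions, i.e., a mixed point in $\CL$. The obstacle is that our reductions characterize strict Nash stable and Nash stable (= linearly stable) sets cleanly, but $\CL$ only sits somewhere in between these sets; so a direct biconditional between "$k$-clique exists" and "mixed point in $\CL$" is not automatic from Theorem \ref{thm.mainHard} alone.

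The key idea is to exploit the fact that the \emph{same} matrix $A$ from (\ref{eq.red1}) already produces the two required implications simultaneously. First, I would note the sandwich $\textrm{strict Nash stable} \subseteq \textrm{asymptotically stable} \subseteq \CL \subseteq \textrm{linearly stable} = \textrm{Nash stable}$, which follows from Theorem \ref{thm.char} together with Theorem \ref{generic_zero}. Then for the reduction direction: if $G$ has a $k$-clique, Lemma \ref{lem.cl2sns} produces a mixed strict Nash stable strategy for $(A,A)$, which by the sandwich lies in $\CL$, hence diversity survives with positive starting probability under the uniform distribution. Conversely, if diversity survives then there is a mixed point in $\CL$, hence a mixed linearly stable (= Nash stable) FP, and Lemma \ref{lem.ns2cl} yields a $k$-clique. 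This gives the desired equivalence and hence NP-hardness for the first part.

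For the "given allele" statement, I would use the subset-hardness construction of Theorem \ref{thm.ss} with $S=\{n\}$, where vertex $v_n$ is the auxiliary vertex connected to every other vertex of $G$. The same sandwich argument applies: if $G$ has a $k$-clique, the mixed strict Nash stable strategy built in Lemma \ref{lem.cl2sns} has $p_n>0$ (since $v_n$ belongs to every maximal clique), so allele $n$ lies in the support of a point in $\CL$ and therefore survives with positive probability; conversely, any mixed FP in $\CL$ whose support contains $n$ is a mixed Nash stable strategy, and Lemma \ref{lem.ns2cl} again produces a $k$-clique.

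The main technical obstacle, as indicated above, is the fact that $\CL$ does not coincide with either stability notion for which we have a polynomial-time checkable certificate. This is resolved not by characterizing $\CL$ in closed form but by arranging the reduction so that the two-sided containment collapses: the "yes" side is witnessed by a strict Nash stable point (the strongest notion, contained in $\CL$), while the "no" side is ruled out via linear stability (the weakest notion, containing $\CL$). Once that is in place, the remainder of the argument is a direct assembly of Theorems \ref{generic_zero}, \ref{thm.char}, \ref{thm.mainHard}, and \ref{thm.ss}, so no additional calculation is needed beyond what has already been carried out in Section \ref{sec.hard}.
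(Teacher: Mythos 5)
Your proposal is correct and follows essentially the same route as the paper's own argument in Appendix \ref{asec.divNhard}: the paper likewise resolves the gap between $\CL$ and the checkable stability notions by observing that the single reduction of Section \ref{sec.SHard} makes ``$k$-clique exists'' equivalent both to the existence of a mixed strict Nash stable point (which lies in $\CL$) and to the existence of a mixed linearly stable point (which contains $\CL$), so the two containments collapse; the allele-survival claim is handled via the $S=\{n\}$ subset construction exactly as you describe.
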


\begin{remark}\label{rem.cg}
As noted in Section \ref{sec.prelgame}, coordination games are very special and they always have a pure Nash equilibrium which is easy
to find; NE computation in general game is PPAD-complete \cite{DGP}.  Thus, it is natural to wonder if decision versions on
coordination games are also easy to answer.

In the process of obtaining the above hardness results, we stumbled upon NP-hardness for checking if a symmetric coordination game has
a NE (not necessarily symmetric) where each player randomizes among at least $k$ strategies. Again the reduction is from $k$-clique.
Thus, it seems highly probable that other decision version on (symmetric) coordination games are also NP-complete.
\end{remark}

\section{Survival of Diversity}
\label{asec.survive}

Given a fitness (positive symmetric) matrix $A$. We analyze how diverse the population will be in the limit under
evolutionary dynamics (governed by $f$ (\ref{eq.f})). If the limit point $\xx$ is not pure, i.e., $|SP(\xx)|>1$ then at least
two alleles survive among the population, and we say the population is diverse and not monomorphic in the limit.

We characterize two extreme cases of fitness matrix for the survival of diversity, namely where diversity always
survives and where diversity disappears regardless of the starting population. Using this characterization we analyze the
chances of survival of diversity when fitness matrix and starting populations are picked uniformly at random. Since every limit with postive measure of region of attraction is Nash stable (Theorems \ref{generic_zero} and \ref{thm.char}), there has to be at least one mixed Nash stable strategy for the diversity to survive. We know that given a symmetric $A$ there always exists a Nash stable strategy, however it does not have to be mixed. 

\begin{definition} Diagonal entry $A_{ii}$ is called {\em dominating} if and only if $A_{ii} > A_{ij}$ for all $j \neq i$. And it
is called {\em dominated} if and only if $\exists j,\textrm{ such that } A_{ij}>A_{ii}$.
\end{definition}

Next lemma characterizes instances that lack mixed Nash stable. 

\begin{lemma}\label{only_pure} If all diagonal entries of $A$ are dominating then there are no {\em mixed} linearly stable fixed
points.  \end{lemma}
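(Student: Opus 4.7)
The plan is to combine Theorem~\ref{thm.char}, which tells us that linearly stable equals Nash stable, with the second (``moreover'') assertion of Lemma~\ref{lem.dom}, which bounds the number of dominating diagonals that can appear in the support of a mixed Nash stable strategy. Since the hypothesis says that \emph{every} diagonal entry is dominating, the support of any mixed linearly stable fixed point would be forced to have size at most one, contradicting mixedness.

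More concretely, I would argue by contradiction. Suppose $\xx$ is a mixed linearly stable fixed point, so $|SP(\xx)| \ge 2$. By Theorem~\ref{thm.char}, linearly stable and Nash stable coincide, so $\xx$ is a mixed Nash stable strategy. By Definition~\ref{def.ns}, the transformed matrix $T(A,\xx)$ is negative semi-definite. Pick any two distinct $i, j \in SP(\xx)$. Applying Lemma~\ref{lem.dom} (the ``moreover'' part), at most one index in $SP(\xx)$ can have a dominating diagonal entry. But the hypothesis of our lemma forces $A_{ii}$ to be dominating for every $i$, so in particular both $A_{ii}$ and $A_{jj}$ are dominating, a contradiction.

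If one wants a more self-contained derivation rather than appealing directly to the second part of Lemma~\ref{lem.dom}, the key inequality is the one used in its proof: for the transformed matrix $B = T(A,\xx)$ obtained by removing (say) index $i \in SP(\xx)$, negative semi-definiteness gives $B_{jj} = A_{ii} + A_{jj} - 2A_{ij} \le 0$ for every other $j \in SP(\xx)$. But $A_{ii}$ dominating implies $A_{ii} > A_{ij}$ and $A_{jj}$ dominating implies $A_{jj} > A_{ji} = A_{ij}$, whence $A_{ii} + A_{jj} > 2A_{ij}$, again a contradiction. Either way the conclusion follows immediately.

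There is essentially no obstacle: the lemma is a direct corollary of Theorem~\ref{thm.char} and Lemma~\ref{lem.dom}, and the proof is a two-line contradiction argument. The only thing to be mindful of is to phrase the reduction so that it is clear we are using \emph{linearly stable} (not just stable or asymptotically stable), since the ``no mixed linearly stable'' conclusion is strictly stronger; this is exactly where the equality of linearly stable and Nash stable from Theorem~\ref{thm.char} is indispensable.
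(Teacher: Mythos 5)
Your proof is correct, but it takes a different route from the paper's. The paper argues directly on the Jacobian: at a mixed fixed point $\rr$, writing the projected Jacobian $J^{\rr}$ obtained by eliminating a support strategy (say strategy $1$), the diagonal entries on the support are $1+r_i\frac{A_{ii}-A_{i1}}{\rr^T A\rr}$, so when every diagonal of $A$ is dominating the trace of the relevant block strictly exceeds its dimension; hence some eigenvalue has real part (and thus absolute value) greater than $1$, and $\rr$ is linearly unstable. You instead pass through the structural characterization: linearly stable $=$ Nash stable (Lemma~\ref{lem.nseqls}/Theorem~\ref{thm.char}), so $T(A,\xx)$ must be negative semi-definite, and then the diagonal entry $A_{ii}+A_{jj}-2A_{ij}\le 0$ clashes with both $A_{ii}$ and $A_{jj}$ being dominating — exactly the ``moreover'' clause of Lemma~\ref{lem.dom}. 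Both arguments are sound and both stay inside the paper's toolkit. The paper's trace computation is more elementary and self-contained (it needs only the Jacobian formulas, not the comparatively heavy inertia argument behind Lemma~\ref{lem.nseqls}), whereas yours is shorter given the machinery already established and makes the lemma a two-line corollary of results the paper needs anyway for the hardness reductions. Your closing remark is also well taken: since the conclusion concerns linear stability (the weakest notion, hence the strongest non-existence claim), the equality with Nash stability is exactly what licenses your reduction, and your fallback via the explicit inequality $A_{ii}+A_{jj}>2A_{ij}$ correctly avoids the mild sign issue in the first clause of Lemma~\ref{lem.dom} (which implicitly assumes $A_{jj}\ge 0$) by using only non-positivity of the diagonal of a negative semi-definite matrix.
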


The next theorem follows using Theorem \ref{generic_zero} and Lemma \ref{only_pure}. Informally, it states that if every diagonal
entry of $A$ is dominating then almost surely the dynamics converge to pure fixed points.

\begin{theorem}\label{thm.die} If every diagonal entry of $A$ is dominating then the set of initial conditions in $\Delta_{n}$
so that the dynamics \ref{eq.f} converges to mixed fixed points has measure zero, i.e diversity dies almost surely.
\end{theorem}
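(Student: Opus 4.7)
The plan is to deduce the theorem as a direct corollary of the two ingredients already established: Lemma \ref{only_pure} (no mixed linearly stable fixed points exist when every diagonal entry dominates its row) and Theorem \ref{generic_zero} (the basin of attraction of linearly unstable fixed points has Lebesgue measure zero in $\Delta_n$). Since limits of $f$ are necessarily fixed points, and every fixed point is either linearly stable or linearly unstable, any mixed fixed point must fall into the second category under our hypothesis.

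First, I would spell out the dichotomy. Let $\mathcal{F}$ denote the set of fixed points of $f$, which is compact, and split $\mathcal{F}$ into linearly stable and linearly unstable pieces via Definition \ref{def.ls}. By Lemma \ref{only_pure}, under the hypothesis that every $A_{ii}$ dominates its row, the mixed part of $\mathcal{F}$ lies entirely in the linearly unstable piece. Consequently, if $\xx_0 \in \Delta_n$ is a starting point whose orbit $\{f^t(\xx_0)\}_{t \ge 0}$ converges to a mixed fixed point $\rr$, then $\rr$ is linearly unstable.

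Next, I would invoke Theorem \ref{generic_zero}, which says exactly that the set
\[
U \defeq \{\xx_0 \in \Delta_n : f^t(\xx_0) \to \rr \text{ for some linearly unstable fixed point } \rr\}
\]
has Lebesgue measure zero. The set of initial conditions converging to a mixed fixed point is a subset of $U$ by the observation above, and hence also has measure zero. Since pointwise convergence of $f$ to some fixed point is guaranteed by the Losert–Akin result cited in Section \ref{sec.prelevol}, almost every trajectory converges, and what we have just shown is that almost none of these limits can be mixed.

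There is essentially no technical obstacle here beyond chaining the two prior results correctly; the whole content is the measure-zero transfer from linearly unstable fixed points to the mixed part of $\mathcal{F}$, which is immediate from Lemma \ref{only_pure}. The only subtle point worth a sentence in the write-up is to emphasize that Lemma \ref{only_pure} rules out mixed linearly stable fixed points entirely (not merely mixed asymptotically stable ones), so the inclusion ``mixed fixed points $\subseteq$ linearly unstable fixed points'' is tight and Theorem \ref{generic_zero} applies without further work.
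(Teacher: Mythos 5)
Your proposal is correct and is exactly the argument the paper intends: the theorem is stated as an immediate consequence of Lemma \ref{only_pure} (no mixed linearly stable fixed points under the dominating-diagonal hypothesis) combined with Theorem \ref{generic_zero} (measure-zero basin for linearly unstable fixed points), which is precisely the chain you spell out. Nothing further is needed.
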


Additionally, we show that a pure fixed point that chooses strategy $t$, where $A_{tt}$ is dominated, will be linearly
unstable.

\begin{lemma}\label{only_mixed} 
Let $\rr \in \Delta_n$ with $r_t=1$. 
If $A_{tt}$ is dominated, then $\rr$ is linearly unstable. 
\end{lemma}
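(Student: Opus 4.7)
The plan is to apply the Jacobian formulas for $f$ at fixed points (given in Appendix \ref{asec.char}) directly to the pure profile $\rr$ with $r_t = 1$, observe that the projected Jacobian is diagonal, and identify an eigenvalue strictly greater than $1$.

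First I would set up the projected Jacobian. Since the projection requires removing some coordinate $t'$ with $r_{t'} > 0$ and $\rr$ has $r_t = 1$ as the only positive coordinate, we must project out the variable $x_t$. Thus $J^{\rr}$ is an $(n-1)\times(n-1)$ matrix indexed by $i \in [n]\setminus\{t\}$, and for every such $i$ we have $r_i = 0$.

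Next I would plug $r_i = 0$ into the fixed-point Jacobian formulas. The off-diagonal entries satisfy $J^{\rr}_{ij} = 0$ for all $j \neq i$ (because $r_i = 0$), and the diagonal entries satisfy
\[
J^{\rr}_{ii} \;=\; \frac{(A\rr)_i}{\rr^T A \rr} \;=\; \frac{A_{it}}{A_{tt}}, \qquad i \neq t.
\]
Hence $J^{\rr}$ is diagonal, and its eigenvalues are exactly the ratios $A_{it}/A_{tt}$ for $i \neq t$.

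Finally, I would use the hypothesis that $A_{tt}$ is dominated: by definition there exists some $j \neq t$ with $A_{jt} > A_{tt}$. Since $A$ is a fitness matrix (positive), $A_{tt} > 0$, so $A_{jt}/A_{tt} > 1$. Therefore $J^{\rr}$ has an eigenvalue of absolute value strictly greater than $1$, and by Definition \ref{def.ls} the fixed point $\rr$ is linearly unstable. The only potentially subtle step is justifying that the correct variable to remove when forming the projected Jacobian at a pure fixed point is the one played with probability $1$; everything else is a direct substitution. Combined with Theorem \ref{generic_zero}, this pure fixed point then attracts only a measure-zero set of initial conditions, complementing Theorem \ref{thm.die}.
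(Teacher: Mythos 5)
Your proof is correct and is essentially identical to the paper's own argument: both project out the coordinate $x_t$ (the unique positively-played strategy), observe that the resulting Jacobian at $\rr$ is diagonal with entries $A_{it}/A_{tt}$, and conclude from the dominated-diagonal hypothesis that some eigenvalue exceeds $1$. No further comment is needed.
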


If there are no pure fixed points that are linearly stable, then all linearly stable fixed-points are mixed, and thus the
next theorem follows using Theorem \ref{generic_zero} and Lemma \ref{only_mixed}. 

\begin{theorem} If every diagonal of $A$ is dominated then the set of initial conditions in $\Delta_{n}$ so that the
dynamics \ref{eq.f} converge to pure fixed points has measure zero, i.e diversity survives almost surely.  \end{theorem}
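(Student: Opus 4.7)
The plan is to combine Theorem \ref{generic_zero} with Lemma \ref{only_mixed} directly, so the argument is essentially a bookkeeping step on top of work already done. First I would identify the pure fixed points of $f$: a pure strategy $\rr \in \Delta_n$ has $|SP(\rr)|=1$, so these are precisely the vertices $\mathbf{e}_1,\dots,\mathbf{e}_n$ of the simplex. Each $\mathbf{e}_t$ is a fixed point of $f$: indeed, for $\rr=\mathbf{e}_t$ one has $r_t>0$ and $(A\rr)_t = A_{tt} = \rr^T A \rr$, so Fact \ref{lem.fpchar} is satisfied (alternatively, replicator dynamics preserves every face of $\Delta_n$, including its vertices).

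Second, I would apply Lemma \ref{only_mixed} one vertex at a time. By hypothesis, every diagonal entry $A_{tt}$ is dominated, so there exists $t'\neq t$ with $A_{t't}>A_{tt}$. Lemma \ref{only_mixed} then gives that $\mathbf{e}_t$ is linearly unstable. Doing this for each $t \in [n]$ shows that the (finite) set of pure fixed points is contained in the set of linearly unstable fixed points of $f$.

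Third, I would pass from fixed-point-level information to basin-level information. The set of initial conditions in $\Delta_n$ whose trajectory under $f$ converges to a pure fixed point is therefore a subset of the set of initial conditions whose trajectory converges to some linearly unstable fixed point. By Theorem \ref{generic_zero} the latter set has Lebesgue measure zero in $\Delta_n$, hence so does the former. Combined with pointwise convergence of $f$ to fixed points (the Losert--Akin result recalled in Section \ref{sec.prelevol}), this yields that almost every initial condition converges to a mixed fixed point, i.e.\ diversity survives almost surely.

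There is no genuine obstacle here beyond checking that the pure fixed points really are only the vertices $\mathbf{e}_t$ and that Lemma \ref{only_mixed} can be invoked at each of them, both of which are immediate. The only mild care needed is to note that the argument uses only the ``every pure fixed point is linearly unstable'' half of the picture; we make no claim about \emph{which} mixed fixed point the dynamics converges to, only that the limit is mixed with probability one under any absolutely continuous distribution of initial conditions on $\Delta_n$.
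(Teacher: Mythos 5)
Your proof is correct and follows exactly the paper's intended argument: the paper also obtains this theorem by combining Lemma \ref{only_mixed} (every pure fixed point with dominated diagonal is linearly unstable) with Theorem \ref{generic_zero} (initial conditions converging to linearly unstable fixed points have measure zero). Your additional bookkeeping --- checking that the pure fixed points are exactly the vertices and invoking pointwise convergence --- just makes explicit what the paper leaves implicit.
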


The following lemma shows that when the fitness matrix is picked uniformly at random, there is a 
positive probability (bounded away from zero for all $n$) so that
every diagonal in $A$ is dominated. This essentially means that generically, diversity survives with positive
probability, bounded away from zero, for all $n$ where the randomness is taken with respect to both the payoff matrix and
initial conditions.

\begin{lemma}\label{lem.pp} Let each entry of $A$ be chosen iid from a continuous distribution. The probability that all 
diagonals of $A$ are dominated is bounded away from zero, i.e., at least $\frac{1}{3} - o(1)$.
\end{lemma}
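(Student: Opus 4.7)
The plan is to apply the third-order Bonferroni inequality to the ``bad'' events
\[
F_i \;=\; \{A_{ii} \text{ is the maximum of row } i\}, \qquad i=1,\dots,n,
\]
so that the target probability is $\Pr[\bigcap_i F_i^c] = 1 - \Pr[\bigcup_i F_i]$. Since only the relative ranks of the entries matter and the distribution is continuous, I will assume without loss of generality that the entries are iid uniform on $[0,1]$. Bonferroni at odd order three gives $\Pr[\bigcup_i F_i] \le S_1 - S_2 + S_3$ with $S_k = \sum_{|I|=k}\Pr[\bigcap_{i\in I} F_i]$, and I will show that this right-hand side equals exactly $\tfrac{2}{3}$ for every $n\ge 2$.

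First, $S_1$ is immediate from symmetry: each row has $n$ iid continuous entries, so $\Pr[F_i]=1/n$ and $S_1=1$. The heart of the proof, and the main obstacle, is estimating $S_2$ and $S_3$ despite dependency: because $A$ is symmetric, rows $i$ and $j$ share the off-diagonal entry $A_{ij}$, so the events $F_i,F_j$ are not independent. I would handle this by conditioning on the (few) shared off-diagonals and integrating them out. For a pair, writing $a = A_{ii}, b = A_{jj}, c = A_{ij}$, the $n-2$ remaining entries of row $i$ are independent of the $n-2$ remaining entries of row $j$, so the conditional probability factors as $a^{n-2}b^{n-2}\mathbf{1}[a,b>c]$, and integrating gives
\[
\Pr[F_i\cap F_j] \;=\; \int_0^1\!\!\int_0^1 a^{n-2} b^{n-2}\min(a,b)\,da\,db \;=\; \frac{2}{n(2n-1)}.
\]
For a triple $\{i,j,k\}$ the three shared off-diagonals are $A_{ij},A_{ik},A_{jk}$, and the analogous conditioning yields
\[
\Pr[F_i\cap F_j\cap F_k] \;=\; \int_{[0,1]^3} a^{n-3}b^{n-3}c^{n-3}\min(a,b)\min(a,c)\min(b,c)\,da\,db\,dc \;=\; \frac{2}{n(n-1)(2n-1)},
\]
where the closed form is obtained by reducing by symmetry to the ordered region $a\le b\le c$ (multiplicity $6$) and iterating.

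Summing, $S_2 = \binom{n}{2}\cdot\tfrac{2}{n(2n-1)} = \tfrac{n-1}{2n-1}$ and $S_3 = \binom{n}{3}\cdot\tfrac{2}{n(n-1)(2n-1)} = \tfrac{n-2}{3(2n-1)}$. The proof is then closed by the algebraic simplification
\[
S_1 - S_2 + S_3 \;=\; 1 - \frac{n-1}{2n-1} + \frac{n-2}{3(2n-1)} \;=\; \frac{3(2n-1)-3(n-1)+(n-2)}{3(2n-1)} \;=\; \frac{4n-2}{3(2n-1)} \;=\; \frac{2}{3},
\]
so $\Pr[\bigcup_i F_i]\le 2/3$ and hence $\Pr[\text{all diagonals dominated}]\ge 1/3$ for every $n\ge 2$ (this is actually stronger than the stated $1/3 - o(1)$; the slack term is never needed). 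The only nontrivial step is evaluating the pair and triple integrals in closed form in the presence of symmetry-induced coupling; once these are in hand, the Bonferroni estimate collapses to the clean identity above.
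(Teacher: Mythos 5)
Your proof is correct, and it follows the same skeleton as the paper's: both apply the third-order Bonferroni inequality to the events ``$A_{ii}$ dominates its row'' and both must cope with the coupling induced by the shared off-diagonal entries. The difference is in how the pair and triple terms are handled. The paper only \emph{bounds} them --- a lower bound on $\Pr[E_i\cap E_j]$ via a permutation-counting argument showing $S_2 \geq \tfrac12 - o(1)$, and an upper bound $\Pr[E_i\cap E_j\cap E_k]\leq \tfrac{1}{(n-3)^3}$ obtained by relaxing to independent events that ignore the shared entries --- which is why its conclusion carries the $-o(1)$ loss. You instead reduce to iid uniforms via the CDF transform, condition on the shared off-diagonals, and evaluate $\Pr[F_i\cap F_j]=\tfrac{2}{n(2n-1)}$ and $\Pr[F_i\cap F_j\cap F_k]=\tfrac{2}{n(n-1)(2n-1)}$ in closed form (both integrals check out), so that $S_1-S_2+S_3$ collapses to exactly $\tfrac23$ and the bound $\Pr[\text{all diagonals dominated}]\geq \tfrac13$ holds uniformly in $n$ with no error term. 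Your version is cleaner and strictly stronger; the only price is needing the exact triple integral rather than a crude independence relaxation.
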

\begin{proof}
Let $E_{i}$ be the event that $A_{ii}$ is \textit{dominating}. We get that:\\
\begin{equation}
\Pr[E_{i}] = \frac{1}{n}
\end{equation}
Also for $n \geq 6$ we have that 
\begin{equation}
\Pr[E_{i} \cap E_{j} \cap E_{k}] \leq \frac{1}{(n-3)^3}
\end{equation} for $i \neq j \neq k \neq i$. To prove this let $D_{i}$ correspond to the events $A_{ii} > A_{it}$ for all $t \neq i,j,k$ (in same way the definition of $D_{j},D_{k}$). Clearly $D_{i},D_{j},D_{k}$ are independent and thus $\Pr[D_{i} \cap D_{j} \cap D_{k}] = \frac{1}{(n-3)^3}$.  Since $E_{i} \cap E_{j} \cap E_{k} \subset D_{i} \cap D_{j} \cap D_{k}$ the inequality follows.\\\\
Finally by counting argument (count all the favor permutations) we get that $$\Pr[E_{i} \cap E_{j}] \geq \frac{2[\sum_{k=0}^{n-2}(2n-3-k)! \frac{(n-2)!}{(n-2-k)!}]}{(2n-1)!} =  \frac{2}{n(n-1)} \sum_{k=0}^{n-2} \prod_{i=0}^{k+1}\frac{n-i}{2n-i-1}$$ for $i \neq j$. For $l = o(n)$, for example $l = \log n$ and using the fact that $\frac{n-i}{2n-i-1}$ is decreasing with respect to $i$ we get that 
\begin{align*}
\sum_{k=0}^{n-2}\prod_{i=0}^{k+1}\frac{n-i}{2n-i-1} &\geq \sum_{k=0}^{l}\prod_{i=0}^{k+1}\frac{n-i}{2n-i-1} \\& 
\geq \sum_{k=0}^{l}\left (\frac{n - k-1}{2n - k -2}\right)^{k+2} 
\\&\geq \sum_{k=0}^{l}\left(\frac{n - l-1}{2n - l -2}\right)^{k+2} \\ &
= \left(\frac{n-l-1}{2n-l-2}\right)^2 \left(\frac{1 - (\frac{n-l-1}{2n-l-2})^{l+1}}{1 - \frac{n-l-1}{2n-l-2}}\right) = \frac{1}{2}-o(1)
\end{align*}
 Therefore (inclusion-exclusion) we have that 
$$\Pr[\cup E_{i}] \leq \sum_{i}\Pr[E_{i}] - \sum_{i<j} \Pr[E_{i} \cap E_{j}] + \sum_{i<j<k}\Pr[E_{i} \cap E_{j} \cap E_{k}]
$$ thus 
\begin{align*}
\Pr[\cap E_{i}^c] \geq \frac{1}{2} - o(1) -\frac{n(n-1)(n-2)}{6(n-3)^3} = \frac{1}{3} - o(1)
\end{align*} which is bounded away from zero.
\end{proof}

\begin{remark}\label{rem}
Observe that letting $X_{i}$ be the indicator random variable that $A_{ii}$ is dominating and $X = \sum_{i}X_{i}$ we get
that $E[X] = \sum_{i} E[X_{i}] = \sum_{i} \Pr[E_{i}] = n \times \frac{1}{n}=1$ so in expectation we will have one dominating
element. Also from the above proof of Lemma \ref{lem.pp} we get that $E[X^2] = \sum_{i}E[X_{i}] + 2\sum_{i <j}E[X_{i}X_{j}] = 1
+  n(n-1) \Pr[E_{i} \cap E_{j}] \approx 2 - o(1)$ (namely $Var[X] \approx 1 - o(1)$) so by Chebyshev's inequality  $\Pr[|X -
1|>k]$ is $O(\frac{1}{k^2})$.
\end{remark}

\section{Terms Used in Biology}\label{asec.bioTerms}
\subsection{Terms in Biology}
\label{termsinbio}
We provide brief non-technical definitions of a few biological terms that we use in this paper.

\noindent{\bf Gene.}
A unit that determines some characteristic of the organism, and passes traits to offsprings.
All organisms have genes corresponding to various biological traits, some of which are instantly visible, such as eye color or number
of limbs, and some of which are not, such as blood type.
\medskip

\noindent{\bf Allele.}
Allele is one of a number of alternative forms of the same gene, found at the same place on a chromosome,
Different alleles can result in different observable traits, such as different pigmentation.
\medskip

\vspace{-10pt}
\noindent{\bf Genotype.}
The genetic constitution of an individual organism.
\medskip

\noindent{\bf Phenotype.}
The set of observable characteristics of an individual resulting from the interaction of its genotype with the environment.
\medskip


\noindent{\bf Diploid.}
Diploid means having two copies of each chromosome. Almost all of the cells in the human body are diploid.
\medskip

\noindent{\bf Haploid.}
A cell or nucleus having a single set of unpaired chromosomes.
Our sex cells (sperm and eggs) are haploid cells that are produced by meiosis. When sex cells unite during fertilization, the haploid
cells become a diploid cell.

\subsection{Heterozygote Advantage (Overdominance)}\label{over} 
 Cases of heterozygote advantage have been demonstrated in several organisms. The first confirmation of heterozygote advantage was with a fruit fly, Drosophila melanogaster. Kalmus demonstrated  in a classic paper~\cite{Kalmus45} how polymorphism can persist in a population through heterozygote advantage.  In humans, sickle-cell anemia is a genetic disorder caused by the presence of two recessive alleles. Where malaria is common, carrying a single sickle-cell allele (trait) confers a selective advantage, i.e., being a heterozygote is advantageous. Specifically, humans with one of the two alleles of sickle-cell disease exhibit less severe symptoms when infected with malaria. Theorem \ref{thm.die} and lemma \ref{lem.pp} are related to that phenomenon.

\end{document}